\documentclass[acmsmall]{acmart}
\PassOptionsToPackage{dvipsnames}{xcolor}
\usepackage{boldline}
\usepackage{xcolor}
\usepackage{xargs}
\usepackage{framed} \usepackage[framemethod=default]{mdframed}
\usepackage{mathpartir}
\usepackage{mathtools}
\usepackage{soul}
\usepackage{cmll} \usepackage{amsthm}
\usepackage[ruled,vlined]{algorithm2e}
\usepackage{cleveref}
\usepackage{makecell}
\usepackage{fancyvrb}
\usepackage{thmtools, thm-restate}
\usepackage{listings}
\usepackage{todonotes}
\usepackage[justification=centering]{subfig}

\bibliographystyle{ACM-Reference-Format}
\citestyle{acmauthoryear}

\newcommand{\incode}[1]{\lstinline{#1}}

\definecolor{working}{rgb}{0.9,1,0.9}
\newmdenv[hidealllines=true,backgroundcolor=working,innerleftmargin=0pt,innerrightmargin=0pt,innertopmargin=-3pt,innerbottommargin=-3pt,skipabove=3pt,skipbelow=3pt]{working}
\newcommand{\workingcolorname}{light green}

\definecolor{notyet}{rgb}{1,1,0.85}
\newmdenv[hidealllines=true,backgroundcolor=notyet,innerleftmargin=0pt,innerrightmargin=0pt,innertopmargin=-3pt,innerbottommargin=-3pt,skipabove=3pt,skipbelow=3pt]{notyet}
\newcommand{\notyetcolorname}{light yellow}

\definecolor{noway}{rgb}{1,0.9,0.9}
\newmdenv[hidealllines=true,backgroundcolor=noway,innerleftmargin=0pt,innerrightmargin=0pt,innertopmargin=-3pt,innerbottommargin=-3pt,skipabove=3pt,skipbelow=3pt]{noway}
\newcommand{\nowaycolorname}{light red}

\definecolor{limitation}{rgb}{1.0, 0.875, 0.75}
\newmdenv[hidealllines=true,backgroundcolor=limitation,innerleftmargin=0pt,innerrightmargin=0pt,innertopmargin=-3pt,innerbottommargin=-3pt,skipabove=3pt,skipbelow=3pt]{limitation}
\newcommand{\limitationcolorname}{light orange}

\DefineVerbatimEnvironment{code}{Verbatim}{fontsize=\small}
\DefineVerbatimEnvironment{example}{Verbatim}{fontsize=\small}

\newcommand{\lolli}{\multimap}

\newcommand{\llet}[2]{\mathsf{let}~#1~\mathsf{in}~#2}
\newcommand{\lletrec}[2]{\mathsf{letrec}~#1~\mathsf{in}~#2}
\newcommand{\ccase}[2]{\mathsf{case}~#1~\mathsf{of}~#2}

\definecolor{darkblue}{rgb}{0,0,0.5}
\definecolor{darkgreen}{rgb}{0,0.3,0}
\definecolor{darkpink}{rgb}{0.4,0,0.3}
\definecolor{graygreen}{rgb}{0.3,0.5,0.3}
\definecolor{grayblue}{rgb}{0.2,0.2,0.6}
\definecolor{grayred}{rgb}{0.5,0.2,0.2}

\lstset{
  frame=none,
  xleftmargin=2pt,
  stepnumber=1,
numberstyle=\ttfamily\tiny\color[gray]{0.3},
  belowcaptionskip=\bigskipamount,
  captionpos=b,
  escapeinside={*'}{'*},
  language=haskell,
  tabsize=2,
  emphstyle={\bf},
  commentstyle=\it,
  stringstyle=\mdseries\rmfamily,
  showspaces=false,
  keywordstyle=\bfseries\rmfamily\color{darkblue}, 
  columns=flexible,
  basicstyle=\small\sffamily,
  showstringspaces=false,
  morecomment=[l]\%,
}

\newcommand{\ROUNDTWO}[1]{#1}

\newcommand{\G}{\Gamma}
\newcommand{\D}{\Delta}

\newcommand{\s}{\sigma}
\newcommand{\vp}{\varphi}

\newcommandx{\var}[3][1=x, 2=1, 3=\tau]{\ensuremath{#1{:}_{#2}#3}}
\newcommand{\x}{\var[x]}
\newcommand{\y}{\var[y]}
\newcommand{\z}{\var[z]}

\newcommand{\xl}{\x[1]}
\newcommand{\xo}{\x[\omega]}

\newcommand{\xD}{\x[\Delta]}

\newcommand{\yo}{\y[\omega][\sigma']}

\newcommand{\yD}{\y[\Delta][\sigma']}

\newcommand{\zD}{\z[\overline{\Delta}]}

\newcommand{\ov}[1]{\ensuremath{\overline{#1}}}

 \renewcommand{\G}{\Gamma}
\renewcommand{\D}{\Delta}

\newcommand{\irr}[1]{\left[#1\right]}

\renewcommand{\s}{\sigma}
\renewcommand{\vp}{\varphi}

\newcommandx{\judg}[8][1=\Gamma, 2=\Delta, 3=\delta, 4=\Delta', 7, 8]{#1;#2;#3;\irr{#4} \vdash_{#7} #5 :_{#8} #6}
\newcommand{\splitjudg}[2]{\judg[\Gamma][\Delta_1,\Delta_2][\delta,\delta'][\Delta_3,\Delta_4]{#1}{#2}}

\renewcommandx{\var}[3][1=x, 2=1, 3=\sigma]{\ensuremath{#1{:}_{#2}#3}}
\renewcommand{\x}{\var[x]}
\renewcommand{\y}{\var[y]}
\renewcommand{\z}{\var[z]}

\renewcommand{\xl}{\x[1]}
\renewcommand{\xo}{\x[\omega]}

\renewcommand{\xD}{\var[x][\Delta,\irr{\Delta'}][\sigma]}

\renewcommand{\yo}{\y[\omega][\sigma']}

\renewcommand{\yD}{\y[\Delta][\sigma']}

\renewcommand{\zD}{\z[\Delta,\irr{\Delta'}][\sigma]}

\renewcommand{\ov}[1]{\ensuremath{\overline{#1}}}

 \renewcommand{\G}{\Gamma}
\renewcommand{\D}{\Delta}

\renewcommand{\irr}[1]{\left[#1\right]}

\renewcommand{\s}{\sigma}
\renewcommand{\vp}{\varphi}

\renewcommandx{\judg}[7][1=\Gamma, 2=\Delta, 3=\delta, 6, 7]{#1;#2;#3 \vdash_{#6} #4 :_{#7} #5}
\renewcommand{\splitjudg}[2]{\judg[\Gamma][\Delta,\Delta'][\delta,\delta']{#1}{#2}}

\renewcommandx{\var}[3][1=x, 2=1, 3=\sigma]{\ensuremath{#1{:}_{#2}#3}}
\renewcommand{\x}{\var[x]}
\renewcommand{\y}{\var[y]}
\renewcommand{\z}{\var[z]}

\renewcommand{\xl}{\x[1]}
\renewcommand{\xo}{\x[\omega]}

\renewcommand{\xD}{\var[x][\Delta][\sigma]}
\newcommand{\xt}[1]{\x[1][\s]\##1}
\newcommand{\xr}[2]{\x[#1][#2]}

\renewcommand{\yo}{\y[\omega][\sigma']}

\renewcommand{\yD}{\y[\Delta][\sigma']}

\renewcommand{\zD}{\z[\Delta][\sigma]}

\newcommand{\zr}[2]{\z[#1][#2]}

\renewcommand{\ov}[1]{\ensuremath{\overline{#1}}}

\newcommand{\hasargs}[2]{#1~\textrm{has}~#2~\textrm{arguments}}
\newcommand{\lctag}[2]{#1\##2}
\newcommand{\subst}[3]{#1\left[#2/#3\right]}

\newcommand{\konstructor}{K~\ov{\xo},\ov{y_i{:}_1\s_i}^n}

 \renewcommand{\llet}[2]{\mathbf{let}~#1~\mathbf{in}~#2}
\renewcommand{\lletrec}[2]{\mathbf{let~rec}~#1~\mathbf{in}~#2}
\renewcommand{\ccase}[2]{\mathbf{case}~#1~\mathbf{of}~#2}
\newcommand{\judgment}[1]{
    \begin{tabular}{V{2.7}cV{2.7}}
    \hlineB{2.7}
    $#1$\\
    \hlineB{2.7}
    \end{tabular}
}
\newcommand{\datatype}[2]{
  \mathbf{data}~#1~\mathbf{where}~#2
}

\renewcommand{\G}{\Gamma}
\renewcommand{\D}{\Delta}

\renewcommand{\irr}[1]{\left[#1\right]}

\renewcommand{\s}{\sigma}
\renewcommand{\vp}{\varphi}

\renewcommandx{\judg}[7][1=\Gamma, 2=\Delta, 5, 6, 7]{#1;#2 \vdash_{#5} #3 :_{#6}^{#7} #4}
\renewcommand{\splitjudg}[2]{\judg[\Gamma][\Delta,\Delta']{#1}{#2}}

\renewcommandx{\var}[3][1=x, 2=1, 3=\tau]{\ensuremath{#1{:}_{#2}#3}}
\renewcommand{\x}{\var[x]}
\renewcommand{\y}{\var[y]}
\renewcommand{\z}{\var[z]}

\renewcommand{\xl}{\x[1]}
\renewcommand{\xo}{\x[\omega]}

\renewcommand{\xD}{\var[x][\Delta][\sigma]}
\renewcommand{\xt}[1]{\x[1][\tau]\##1}
\renewcommand{\xr}[2]{\x[#1][#2]}

\renewcommand{\yo}{\y[\omega][\sigma']}

\renewcommand{\yD}{\y[\Delta][\sigma']}

\renewcommand{\zD}{\z[\Delta][\sigma]}

\renewcommand{\zr}[2]{\z[#1][#2]}

\renewcommand{\ov}[1]{\ensuremath{\overline{#1}}}

\renewcommand{\hasargs}[2]{#1~\textrm{has}~#2~\textrm{linear arguments}}
\renewcommand{\lctag}[2]{#1\##2}
\renewcommand{\subst}[3]{#1\left[#2/#3\right]}

\renewcommand{\konstructor}{K~\ov{\xo},\ov{y_i{:}_1\s_i}^n}

 \newcommand{\SyntaxTypes}{
\ensuremath{
\begin{array}{lcll}
    \ROUNDTWO{\tau,\sigma}  & ::=  & T~\overline{p}         & \textrm{Datatype} \\
                    & \mid & \ROUNDTWO{\tau} \to_\pi \sigma & \textrm{Function with multiplicity}\\
                    & \mid & \forall p.~\ROUNDTWO{\tau}     & \textrm{Multiplicity universal scheme}\\
\end{array}
}
}

\newcommand{\SyntaxTerms}{
\ensuremath{
\begin{array}{lcll}                                             
    e                & ::=  & x,y,z \mid K                                                                 & \textrm{Variables and data constructors}\\
                     & \mid & \Lambda p.~e~\mid~e~\pi                                          & \textrm{Multiplicity abstraction/application}\\
                     & \mid & \lambda x{:}_\pi\ROUNDTWO{\tau}.~e~\mid~e~e'                           & \textrm{Term abstraction/application}\\
                     & \mid & \llet{x{:}_{\Delta}\sigma = e}{e'}
                                                                                                           &
                                                                                                             \textrm{Let
                                                                                                             \ROUNDTWO{with
                                                                                                             usage
                                                                                                             environment}}
  \\
                     & \mid & \lletrec{\overline{x{:}_{\Delta}\sigma =
                              e}}{e'}             & \textrm{Recursive
                                                    Let \ROUNDTWO{with
                                                                                                             usage
                                                                                                             environment}} \\
                     & \mid &
                     \ccase{e}{z{:}_{\Delta}\ROUNDTWO{\tau}~\{\overline{\rho\ROUNDTWO{\Rightarrow} e'}\}}   & \textrm{Case} \\
\rho             & ::=  & K~\overline{x{:}_\pi\ROUNDTWO{\tau}} \mid \_                              & \textrm{Pattern and wildcard} \\
\end{array}
}
}

\newcommand{\SyntaxEnvironments}{
\ensuremath{
\begin{array}{lcll}
\Gamma   & ::=  & \ROUNDTWO{\cdot \mid \Gamma,x{:}_\omega\tau \mid
                    \Gamma,K{:}\tau \mid \Gamma,p \mid
                    \Gamma,z{:}_{\Delta}\tau} & \ROUNDTWO{\textrm{Unrestricted
                                                typing environment}} \\
  \Delta   & ::=  & \cdot \mid \Delta,x{:}_\pi\tau \mid
                    \Delta,[x{:}_\pi\tau] & \ROUNDTWO{\textrm{Linear
                                            typing environment}} \\
\end{array}
}
}

\newcommand{\SyntaxFull}{
\begin{figure}[ht]
\[
{\small
  \begin{array}{l}
\textbf{Types} \\
\SyntaxTypes\\
\textbf{Terms}\\
\SyntaxTerms\\\\
\textbf{Environments}\\
\SyntaxEnvironments\\\\
\textbf{Multiplicities} \qquad\qquad \textbf{Declarations}\\
\begin{array}{lcl}
  \pi & ::= & 1 \mid \omega \mid p \qquad decl  ::=  \datatype{T~\overline{p}}{\overline{K:\overline{\ROUNDTWO{\tau} \to_\pi}~T~\overline{p}}} \end{array}
\end{array}}
\]
               
\caption{Linear Core Syntax}
\label{fig:full-linear-core-syntax}
\end{figure}
}

 \newcommand{\TypeVarOmega}{
    \infer*[right=(Var$_\omega$)]
    {\,}
    {\judg[\G,\xo][\cdot]{x}{\tau}}
}

\newcommand{\TypeLinearVar}{
    \infer*[right=(Var$_1$)]
    {\,}
    {\judg[\G][\xl]{x}{\tau}}
}

\newcommand{\TypeVarDelta}{
    \infer*[right=(Var$_{\D}$)]
    { \ROUNDTWO{\Delta = \Delta'} }
    {\judg[\G,\xD][\ROUNDTWO{\D'}]{x}{\tau}}
  }

  \newcommand{\TypeVarP}
  {
     \ROUNDTWO{\infer*[right=(Var$_{p}$)]
    { \rho \in \G }
    {\judg[\G][x:_\rho \tau ]{x}{\tau}}}
  }

\newcommand{\TypeVarSplit}{
  \infer[(Split)]
    {\judg[\G][\D,x:_1\tau]{e}{\tau} \and \hasargs{K}{n}}
    {\judg[\G][\D,\ov{\xt{K_i}}^n]{e}{\s}}
}

\newcommand{\TypeMultLamIntro}{
    \infer*[right=($\Lambda I$)]
    {\judg[\G,p]{e}{\tau} \and p\notin\Gamma}
    {\judg{\Lambda p.~e}{\forall p.~\tau}}
}

\newcommand{\TypeMultLamElim}{
    \infer*[right=($\Lambda E$)]
    {\judg{e}{\forall p.~\tau} \and \Gamma\vdash_{mult}\pi}
    {\judg{e~\pi}{\tau[\pi/p]}}
}

\newcommand{\TypeLamIntroL}{
    \infer*[right=($\lambda I_1$)]
    {\judg[\G][\D,\xl]{e}{\s} \and x\notin\Delta}
    {\judg{\lambda \xl.~e}{\tau\to_{1}\s}}
}

\newcommand{\TypeLamIntroW}{
    \infer*[right=($\lambda I_\omega$)]
    {\judg[\G,\xo]{e}{\vp} \and x\notin\Gamma}
    {\judg{\lambda \xo.~e}{\s\to_{\omega}\vp}}
}

\newcommand{\TypeLamElimL}{
    \infer*[right=($\lambda E_1$)]
    {\judg[\G][\D]{e}{\tau\to_{1}\s} \and \judg[\G][\D']{e'}{\tau}}
    {\splitjudg{e~e'}{\s}}
}

\newcommand{\TypeLamElimW}{
\infer*[right=($\lambda E_\omega$)]
    {\judg{e}{\tau\to_{\omega}\s} \and \judg[\G][\cdot]{e'}{\tau}}
    {\judg{e~e'}{\s}}
}

\newcommand{\TypeLet}{
     \infer*[right=(Let)]
{\judg[\G][\D]{e}{\tau} \\ \judg[\G,\xr{\D}{\tau}][\D,\D']{e'}{\s}}
    {\judg[\G][\D,\D']{\llet{\x[\D] = e}{e'}}{\s}}
}

\newcommand{\TypeLetRec}{
     \infer*[right=(LetRec)]
{\ov{\judg[\G,\ov{x_i{:}_{\D}\tau_i}][\D]{e_i}{\s}} \\ \judg[\G,\ov{x_i{:}_{\D}\tau_i}][\D,\D']{e'}{\s}}
    {\judg[\G][\D,\D']{\lletrec{\ov{\var[x_i][\D][\tau_i] = e_i}}{e'}}{\s}}
  }

  \newcommand{\TypeConstr}{
    \infer*[right=(Constr)]
    {\ov{\G;\cdot\vdash e_\omega : \tau_i} \and \ov{\G ; \D_j \vdash
          e_j : \tau_j} \and \ov{\D_j} = \D \and K : \ov{ \tau_i
          \rightarrow_\omega \tau_j \lolli} \s }
    {\judg[\G][\D]{K~\ov{e_\omega}\ov{e_i} }{\s}}
    }

\newcommand{\TypeCaseWHNFIntermediate}{
    \mprset{flushleft}
    \infer[(Case$_\textrm{WHNF}$)]
    {\textrm{e is in \emph{WHNF}} \\ \G;\D \Vdash e:\tau \gtrdot \ov{\D_i}
    \\ {\ov{\G,\var[z][\ov{\D_i}];\ov{\D_i},\D' \vdash \rho\,\,\ROUNDTWO{\Rightarrow}\,\, e' :^z_{\ov{\D_i}} \s}}}
    {\judg[\G][\D,\D']{\ccase{e}{\var[z][\ov{\D_i}]~\{\ov{\rho \,\,\ROUNDTWO{\Rightarrow}\,\, e'}\}}}{\s}}
}

\newcommand{\TypeCaseWHNF}{
    \infer[(Case$_\textrm{WHNF}$)]
    {\textrm{e is in \emph{WHNF}} \and e~\textsf{matches}~\rho_j
    \and \ROUNDTWO{\ov{\judg[\G,z{:}_{\irr{\D}}\tau][\irr{\D},\D']{\rho \Rightarrow_\mathsf{NWHNF}  e'}{
        \s}[][\irr{\D}][z]}}\\
    \G;\D \Vdash e:\tau \gtrdot \ov{\D_i} \and
    \ROUNDTWO{\G,\var[z][\ov{\D_i}];\ov{\D_i},\D' \vdash \rho_j \Rightarrow e'
    :^z_{\ov{\D_i}} \s}
    }
    {{\judg[\G][\D,\D']{\ccase{e}{\var[z][\ov{\D_i}]~\{\ov{\rho \,\ROUNDTWO{\Rightarrow}\, e'}\}}}{\s}}}
}

\newcommand{\TypeCaseNotWHNF}{
    \mprset{flushleft}
    \infer[(Case$_\textrm{Not WHNF}$)]
    {\textrm{e is not in \emph{WHNF}} \\
    \judg{e}{\tau}
    \\ \ov{\judg[\G,z{:}_{\irr{\D}}\tau][\irr{\D},\D']{\ROUNDTWO{\rho \Rightarrow_\mathsf{NWHNF}}\, e'}{ \s}[][{\irr{\D}}][{z}]}
    }
    {\splitjudg{\ccase{e}{\z[\irr{\D}]~\{\ov{\rho\,
            \ROUNDTWO{\Rightarrow}\, e'}\}}}{\s}}
}

\newcommand{\TypeAltNNotWHNF}{
    \mprset{flushleft}
    \infer[(AltN$_{\textrm{Not WHNF}}$)]
    { \judg[\G,\ov{\xo},\ov{y_i{:}_{\D_i}\tau_i}][\D]{e}{\s}
    \\ \ov{\D_i} = \ov{\lctag{\D_s}{K_i}}^n
}
    {\judg{K~\ov{\xo},\ov{y_i{:}_1\tau_i}^n \,\ROUNDTWO {\Rightarrow_\mathsf{NWHNF}}\,  e}{\s}[][\D_s][z]}
}

\newcommand{\TypeAltNWHNF}{
    \mprset{flushleft}
    \infer[(AltN$_{\textrm{WHNF}}$)]
    { \judg[\G,\ov{\xo},\ov{y_i{:}_{\D_i}\tau_i}^n][\D]{e}{\s}
}
    {\judg{K~\ov{\xo},\ov{y_i{:}_1\tau_i}^n \,\ROUNDTWO{\Rightarrow_\mathsf{WHNF}}\,e}{ \s}[][\ov{\D_i}^n][z]}
}

\newcommand{\TypeAltZero}{
    \mprset{flushleft}
    \infer[(Alt0)]
    { \judg[\subst{\G}{\cdot}{\D_s}_z,\ov{\xo}][\subst{\D}{\cdot}{\D_s}]{e}{\s}
}
    {\judg{K~\ov{\xo}\,\ROUNDTWO{\Rightarrow}\, e}{ \s}[][\D_s][z]}
}

\newcommand{\TypeAltWild}{
    \mprset{flushleft}
    \infer[(Alt\_)]
    { \judg{e}{\s} }
    {{\judg{\_ \,\ROUNDTWO{\Rightarrow}\ e}{ \s}[][\D_s][z]}}
}

\newcommand{\TypeWHNFCons}{
    \infer[($\textrm{WHNF}_K$)]
    { \ov{\G; \cdot \vdash e_\omega : \tau_i} \\ \ov{\G; \D_j \vdash e_j
        : \tau_j} \\ \ov{\D_j} = \D \\ K : \ov{\tau_i \rightarrow_\omega
        \tau_j \lolli} \s}
    { \G; \D \Vdash K~\ov{e_\omega}\ov{e_j} : \s \gtrdot \ov{\D_j} }
}

\newcommand{\TypeWHNFLam}{
    \infer[($\textrm{WHNF}_\lambda$)]
    { \judg{\lambda x.~e}{\s} }
    { \G; \D \Vdash \lambda x.~e : \s \gtrdot \D}
}

\newcommand{\TypeWellFormedMult}{
    \infer*[right=$(1)$]
    { }
    {\Gamma \vdash 1}
\qquad
    \infer*[right=$(\omega)$]
    { }
    {\Gamma \vdash \omega}
\qquad
    \infer*[right=$(\rho)$]
    { }
    {\Gamma, \rho \vdash \rho}
}

\newcommand{\TypingRules}{
\begin{figure}[ht]
\small
\[
\begin{array}{c}
    \judgment{\judg{e}{\tau}}
\\[0.4cm]
    \TypeMultLamIntro
\qquad
    \TypeMultLamElim
\\[0.2cm]
    \TypeLamIntroL
\qquad
    \TypeLamIntroW
\\[0.2cm]
    \TypeVarDelta
  \qquad
    \TypeVarSplit
\\[0.2cm]
    \TypeVarP
  \qquad
    \TypeVarOmega
\qquad
    \TypeLamElimL
\\[0.2cm]
    \TypeLinearVar
\qquad
  \TypeLamElimW
\\[0.2cm]
  \TypeConstr
\\[0.2cm]
    \TypeLet
\quad
    \TypeLetRec
\\[0.2cm]
    \TypeCaseWHNF
\\
    \TypeCaseNotWHNF
\\[1cm]
    \ROUNDTWO{\judgment{\judg{\rho \Rightarrow e}{ \s}[][\Delta_s][z]}}
\\[0.4cm]
    \TypeAltNWHNF
\quad
    \TypeAltNNotWHNF
\\[0.2cm]
    \TypeAltZero
\qquad
    \TypeAltWild
\end{array}
\]
\caption{Linear Core Type System}
\label{fig:linear-core-typing-rules}
\end{figure}
}

\newcommand{\TypingRulesOther}{
\begin{figure}[ht]
\small
\[
\begin{array}{c}
    \judgment{\Gamma \vdash_{mult} \pi}
\\[0.4cm]
    \TypeWellFormedMult
\\[1cm]
    \judgment{\G; \D \Vdash e : \s \gtrdot \ov{\D_i}}
\\[0.4cm]
    \TypeWHNFCons
\qquad
    \TypeWHNFLam
\end{array}
\]
\caption{Linear Core Auxiliary Judgements}
\label{fig:linear-core-other-judgements}
\end{figure}
}

\newcommand{\WHNFConvSoundness}{
\begin{restatable}[Irrelevance]{lemma}{irrelevancelemma}~\label{lem:irrev}
If $\Gamma, \z[\irr{\D}]; \irr{\Delta}, \D' \vdash \ROUNDTWO{\rho \Rightarrow_\mathsf{NWHNF}}\,\,  e :^z_{\irr{\D}} \sigma $
then $\Gamma, \z[\D^\dag]; \D^\dag, \D' \vdash \ROUNDTWO{\rho \Rightarrow e} :^z_{\D^\dag} \sigma$, for \emph{any} $\D^\dag$
\end{restatable}
}

\newcommand{\DeltaLinearRelationLemma}{
\begin{assumption}[$\D \Rightarrow 1$]
A $\Delta$-variable can replace its usage environment $\D$ as a linear variable if $\D$ is decidedly consumed through it.\\
If $\G,\xD; \D,\D' \vdash e : \s$ and $\D$ is consumed through $\xD$ in $e$
then $\G[x/\D]; \D',\xl \vdash e :\s$.
\end{assumption}
}

\newcommand{\LinearDeltaRelationLemma}{
\begin{assumption}[$1 \Rightarrow \D$]
A linear variable can be moved to the unrestricted context as a $\D$-var with usage environment $\D$ by introducing $\D$ in the linear resources.\\
If $\G; \D',\xl \vdash e :\s$
then $\G[\D/x],\xD; \D,\D' \vdash e : \s$.
\end{assumption}
}

\newcommand{\DeltaUnrestrictedRelationLemma}{
\begin{assumption}[$x{:}_\omega\sigma = x{:}_{\cdot}\sigma$]
An unrestricted variable is equivalent to a $\D$-var with an empty usage environment.\\
$\G,\xo; \D \vdash e : \s$ iff $\G,\x[\cdot]; \D \vdash e : \s$
\end{assumption}
}

\newcommand{\LinearSubstitutionLemma}{
\begin{restatable}[Substitution of linear variables preserves typing]{lemma}{linearsubstlemma}~
  \begin{enumerate}
  \item If $\judg[\G][\D,\xl]{e}{\vp}$ and $\judg[\G][\D']{e}{\s}$
    then $\judg[\subst{\G}{\D'}{x}][\D,\D']{e[e'/x]}{\vp}$
  \item If $\judg[\G][\D,\xl]{\rho\,\ROUNDTWO{\Rightarrow}\, e}{\s }[][\D_s][z]$ and
    $\judg[\G][\D']{e'}{\tau}$ and
    $\D_s \subseteq \D,x$ then
    $\judg[\subst{\G}{\D'}{x}][\D,\D']{\rho \,\ROUNDTWO{\Rightarrow}\, e[e'/x]}{\s }[][\subst{\D_s}{\D'}{x}][z]$
  \end{enumerate}
\end{restatable}
}

\newcommand{\UnrestrictedSubstitutionLemma}{
\begin{restatable}[Substitution of unrestricted variables preserves
  typing]{lemma}{unrestrictedsubstlemma}~\\
  \begin{enumerate}
\item If $\Gamma, x{:}_\omega\sigma; \Delta \vdash e : \varphi$ and
  $\Gamma; \cdot \vdash e' : \sigma$ then $\G,\D \vdash e[e'/x] :
  \varphi$
 \item If $\G, \xo; \D \vdash \rho \Rightarrow  e :^z_{\Delta_s} \s$ and $\G; \D \vdash e' : \tau$
    then $\G; \D \vdash\rho \Rightarrow e[e'/x] :^z_{\Delta_s} \s$
  \end{enumerate}

\end{restatable}
}

\newcommand{\DeltaSubstitutionLemma}{
\begin{restatable}[Substitution of $\Delta$-variables preserves
  typing]{lemma}{deltasubstlemma}~\\
  \begin{enumerate}
\item If $\G,\xD; \D, \D' \vdash e : \varphi$ and $\G; \D \vdash e' :
  \sigma$ then $\G; \D, \D' \vdash e[e'/x] : \varphi$
 \item If $\G,\xD; \D,\D' \vdash \rho \Rightarrow e :^z_{\Delta_s} \s'$ and
    $\G; \D \vdash e' : \s$ and $\Delta_s \subseteq (\Delta,\Delta')$ then $\G; \D, \D' \vdash \rho \Rightarrow e[e'/x] :^z_{\Delta_s} \s' $
  \end{enumerate}

\end{restatable}
}

\newcommand{\BetaReductionTheorem}{
\begin{theorem}[$\beta$-reduction preserves types]~\\
If $\G; \D \vdash (\lambda \x[\pi][\s].~e)~e' : \vp$ then $\G; \D \vdash e[e'/x] : \vp$
\end{theorem}
}

\newcommand{\BetaReductionSharingTheorem}{
\begin{theorem}[$\beta$-reduction with sharing preserves types]~\\
    If $\G; \D \vdash (\lambda \xo.~e)~e' : \vp$ then $\G; \D \vdash \llet{x = e'}{e} : \vp$\\
    NB: We only apply this transformation when $x$ is unrestricted, otherwise beta-reduction without sharing is the optimization applied.
\end{theorem}
}

\newcommand{\BetaReductionMultTheorem}{
\begin{theorem}[$\beta$-reduction on multiplicity abstractions preserves types]~\\
    If $\G; \D \vdash (\Lambda p.~e)~\pi : \vp$ then $\G; \D \vdash e[\pi/p] : \vp$
\end{theorem}
}

\newcommand{\BinderSwapTheorem}{
\begin{theorem}[Binder-swap preserves types]~\\
If $\G; \D \vdash \ccase{x}{z~\{\ov{\rho_i\Rightarrow  e_i}\}} : \vp$ then $\G; \D \vdash \ccase{x}{z~\{\ov{\rho_i\Rightarrow e_i[z/x]}\}} : \vp$
\end{theorem}
}

\newcommand{\InliningTheorem}{
\begin{theorem}[Inlining preserves types]~\\
If $\G; \D,\D' \vdash \llet{\xD = e}{e'} : \vp$ then $\G; \D, \D' \vdash \llet{\xD = e}{e'[e/x]} : \vp$
\end{theorem}
}

\newcommand{\FullLazinessTheorem}{
\begin{theorem}[Full-laziness preserves types]~\\
    If $\G; \D,\D' \vdash \lambda \y[\pi].~\llet{\xD = e}{e'} : \vp$ and $y$ does not occur in $e$ \\then $\G;\D,\D' \vdash \llet{\xD=e}{\lambda \y[\pi].~e'}$
\end{theorem}
}

\newcommand{\LocalTransformationsTheorem}{
\begin{theorem}[commuting lets preserve types]
\[
  \begin{array}{llcl}
  1. & \G; \D \vdash (\llet{v = e}{b})~a : \vp & \Rightarrow & \G; \D \vdash \llet{v = e}{b~a} : \vp\\
  2. & \G; \D \vdash \ccase{(\llet{v = e}{b})}{\dots} : \vp & \Rightarrow & \G; \D \vdash \llet{v = e}{\ccase{b}{\dots}} : \vp\\
  3. & \G; \D \vdash \llet{x = (\llet{v = e}{b})}{c} : \vp & \Rightarrow & \G; \D \vdash \llet{v = e}{\llet{x = b}{c}} : \vp\\
  \end{array}
  \]
\end{theorem}
}

\newcommand{\CaseOfKnownConstructorTheorem}{
\begin{theorem}[Case-of-known-constructor preserves types]~\\
If $\G; \D, \D' \vdash \ccase{K~\ov{e}}{\z[\D][\s]~\{..., K~\ov{x} \Rightarrow e_i\}} : \vp$ then $\G; \D,\D' \vdash e_i\ov{[e/x]}[K~\ov{e}/z] : \vp$
\end{theorem}
}

\newcommand{\CaseOfCaseTheorem}{
\begin{restatable}[Case-of-case preserves types]{theorem}{caseofcasethm}~\\
    If $\G; \D, \D',\D'' \vdash \ccase{\ccase{e_c}{\z[\D]~\{\ov{\rho_{c_i} \Rightarrow e_{c_i}}\}}}{\var[w][\irr{\D,\D'}][\s']~\{\ov{\rho_i \Rightarrow e_i}\}} : \vp$\\
    then $\G; \D, \D',\D'' \vdash \ccase{e_c}{\z[\D]~\{\ov{\rho_{c_i} \Rightarrow \ccase{e_{c_i}}{w~\{\ov{\rho_i \Rightarrow e_i}\}}}\}} : \vp$\\
\end{restatable}
}

\newcommand{\EtaExpansionTheorem}{
\begin{theorem}[$\eta$-expansion preserves types]~\\
    If $\G; \D \vdash f : \s \to_\pi \vp$ then $\G; \D \vdash \lambda x.~f~x : \s \to_\pi \vp$
\end{theorem}
}

\newcommand{\EtaReductionTheorem}{
\begin{theorem}[$\eta$-reduction preserves types]~\\
    If $\G; \D \vdash \lambda x.~f~x : \s \to_\pi \vp$ then $\G; \D \vdash f : \s \to_\pi \vp$
\end{theorem}
}

\begin{CCSXML}
<ccs2012>
   <concept>
       <concept_id>10003752.10003790.10003801</concept_id>
       <concept_desc>Theory of computation~Linear logic</concept_desc>
       <concept_significance>500</concept_significance>
       </concept>
   <concept>
       <concept_id>10003752.10003790.10011740</concept_id>
       <concept_desc>Theory of computation~Type theory</concept_desc>
       <concept_significance>500</concept_significance>
       </concept>
   <concept>
       <concept_id>10011007.10011006.10011008.10011009.10011012</concept_id>
       <concept_desc>Software and its engineering~Functional languages</concept_desc>
       <concept_significance>500</concept_significance>
       </concept>
   <concept>
       <concept_id>10011007.10011006.10011041</concept_id>
       <concept_desc>Software and its engineering~Compilers</concept_desc>
       <concept_significance>500</concept_significance>
       </concept>
 </ccs2012>
\end{CCSXML}

\ccsdesc[500]{Theory of computation~Linear logic}
\ccsdesc[500]{Theory of computation~Type theory}
\ccsdesc[500]{Software and its engineering~Functional languages}
\ccsdesc[500]{Software and its engineering~Compilers}

\keywords{Linear Types, Lazy Evaluation, Haskell, GHC Core}

\setcopyright{cc}
\setcctype{by}
\acmDOI{10.1145/3776711}
\acmYear{2026}
\acmJournal{PACMPL}
\acmVolume{10}
\acmNumber{POPL}
\acmArticle{69}
\acmMonth{1}
\received{2025-07-10}
\received[accepted]{2025-11-06}

\begin{document}

\title{Lazy Linearity for a Core Functional Language}

\author{Rodrigo Mesquita}
\orcid{0009-0005-1467-115X}
\affiliation{\institution{Well-Typed LLP}
  \city{London}
  \country{United Kingdom}
}
\email{rodrigo@well-typed.com}

\author{Bernardo Toninho}
\orcid{0000-0002-0746-7514}
\affiliation{\institution{Instituto Superior Técnico, University of Lisbon}
  \city{Lisbon}
  \country{Portugal}
}
\affiliation{\institution{INESC-ID}
  \city{Lisbon}
  \country{Portugal}
}
\email{bernardo.toninho@tecnico.ulisboa.pt}

\begin{abstract}
Traditionally, \ROUNDTWO{in linearly typed languages, consuming} a linear resource is synonymous
with its syntactic occurrence in the program. However, under the lens of
non-strict evaluation, linearity can be further understood semantically,
where a syntactic occurrence of a resource does not necessarily entail
using that resource when the program is executed.
\ROUNDTWO{While this distinction has been largely unexplored, it turns
  out to be inescapable in} Haskell's optimising compiler, which heavily rewrites the source
    program in ways that break syntactic linearity but preserve the program's
    semantics.
We introduce Linear Core, \ROUNDTWO{a novel system which accepts the lazy
semantics of linearity statically} and
is suitable for \ROUNDTWO{lazy languages such as } the Core intermediate language of the
Glasgow Haskell Compiler. We prove \ROUNDTWO{that} Linear Core is sound,
\ROUNDTWO{guaranteeing} linear resource usage, and \ROUNDTWO{that} multiple
optimising transformations preserve linearity in Linear Core while failing
to do so in Core. We have implemented Linear Core as a compiler plugin to
validate the system against linearity-heavy libraries, including
\texttt{linear-base}.\end{abstract}

\maketitle

\section{Introduction}\label{sec:intro}

Linear types~\cite{cite:linear-logic,cite:barberdill} increase the
expressiveness of type systems by \ROUNDTWO{guaranteeing that certain
resources} are used \emph{exactly once}.
In programming languages with a linear type system, \ROUNDTWO{discarding so-called linear} resources,
or using them twice, is flagged as a type error. Linear types can \ROUNDTWO{be used to}, for instance,
statically enforce correct usage of file descriptors, that heap-allocated
memory is freed exactly once \ROUNDTWO{s.t.}~leaks and double-frees become type errors,
or guarantee deadlock freedom in channel-based communication protocols~\cite{10.1007/978-3-642-15375-4_16},
among other correctness properties~\cite{10.1145/3373718.3394765,10.1145/3527313,cite:linearhaskell}.
The following program with a runtime error (allocated memory is freed twice) is
accepted by a C-like type system. Conversely, under the lens of a linear type
system where $p$ is deemed a linear resource created by the call to
\texttt{malloc}, the program is rejected:
\begin{code}
let p = malloc(4) in free(p); free(p);
\end{code}

Despite their promise and extensive presence in research
literature~\cite{Wadler1990LinearTC,CERVESATO2000133,10.1093/logcom/2.3.297},
an effective design combining linear and non-linear typing is both
challenging and necessary to bring the advantages of linear typing to
mainstream languages.
Consequently, few general purpose programming languages have linear or substructural
type systems. Among them are Idris~2~\cite{brady:LIPIcs.ECOOP.2021.9},
Rust~\cite{10.1145/2692956.2663188}, whose
ownership types build on linear types to guarantee memory safety
without garbage collection, and, more recently,
Haskell~\cite{10.1145/3158093}, a pure, functional, and
\emph{lazy} general purpose language.

Linearity in Haskell stands out from linearity in other languages because
linear types permeate Haskell down to Core, the intermediate language into
which Haskell is translated by the Glasgow Haskell Compiler (GHC).
Core is a minimal, explicitly typed, \ROUNDTWO{pure} lazy functional language with both
linear and unrestricted types, to which multiple Core-to-Core optimising
transformations are applied during compilation.
Notably, Core \ROUNDTWO{is typechecked after each transformation}. This serves
as a sanity check to the correction of \ROUNDTWO{the implementation} of
\ROUNDTWO{such} transformations, since unsound ones are likely to introduce
ill-typed expressions.

Just as Core's type system \cite{10.1145/1190315.1190324} provides a
degree of validation to the 
translation from Haskell, dubbed \emph{desugaring}, and the subsequent
optimising transformations, a linearly typed Core would guarantee that
linear resource usage in the source language is not violated by desugaring
and the compiler optimisation passes. Moreover, linearity information in
Core can inform Core-to-Core optimising 
transformations~\cite{cite:let-floating,peytonjones1997a,10.1145/3158093},
such as inlining and $\beta$-reduction, to produce more efficient programs.
\ROUNDTWO{However}, despite the advantages of \ROUNDTWO{having} linear types in Core, the status quo is
that linearity is effectively ignored in Core, being only checked
in the source Haskell code. The reason is that, \ROUNDTWO{after} desugaring, the
Core-to-Core optimising transformations eliminate and rearrange most of the
syntactic constructs through which linearity checking is performed -- often
resulting in programs that are syntactically very different from the original,
especially due to the flexibility provided by laziness with regard to the
optimisations a compiler may perform.

While compiler optimisations aim to preserve the semantics of
programs and therefore a program's resource consumption,
a naive syntactic analysis of the optimised programs as
performed by the current Core type checker often fails to recognize
\ROUNDTWO{resulting} programs as linear.
As a trivial example, let $x$ be a linear resource in the next two expressions,
where the latter results from inlining $y$ without eliminating the
binder -- a common optimising transformation in GHC. Even
though the second expression no longer appears linear (as there are now two occurrences
of $x$), it is indeed linear because the let-bound expression is never
evaluated under \ROUNDTWO{non-strict} evaluation and $x$ is still consumed exactly once (when it is freed in the let body):
\[
\begin{array}{ccc}
\llet{y = \textsf{free}~x}{y} & \Longrightarrow_{Inlining} & \llet{y = \textsf{free}~x}{\textsf{free}~x}
\end{array}
\]

The Core optimising transformations expose a fundamental limitation of Core's
\ROUNDTWO{linear} type system: it does not account for the non-strict evaluation model
of Core and, thus, a whole class of programs that are linear under the lens of
non-strict evaluation are rejected by Core's current \ROUNDTWO{checker}.
In this work, we address this limitation by exploring how syntactic linearity
breaks down in the presence of non-strictness and design a linear type
system for Core which accepts the vast majority of programs and optimising
transformations that, while correct under non-strict evaluation, are rejected
by traditional linear typing disciplines.
Additionally, we show \ROUNDTWO{that} our system
is suitable for the intermediate language of an
optimising compiler by implementing it as a Glasgow Haskell Compiler plugin.
Our contributions are as follows:
\begin{itemize}

\item We \ROUNDTWO{illustrate} the \ROUNDTWO{lazy semantics} of linearity, in
contrast to the \ROUNDTWO{traditional} syntactic \ROUNDTWO{understanding
of} linearity, in Haskell, by example (Section~\ref{sec:linearity-semantically}).
A precise \ROUNDTWO{semantic definition of} linearity, \ROUNDTWO{closely
following that of Linear Haskell~\cite{cite:linearhaskell}}, is given by
our instrumented operational semantics under which a program using a
linear resource non-linearly gets stuck
(Section~\ref{sec:main:metatheory}).

\item We introduce Linear Core (Section~\ref{sec:main:linear-core}), a \ROUNDTWO{lazy} linear language with
the key features of Core (except for type equality coercions), whose type
system \ROUNDTWO{is designed to accept the lazy} semantics of
linearity. \ROUNDTWO{Our system achieves this via a novel combination
  of features: explicit tracking of captured linear resources through
        \emph{usage environments} (Section~\ref{sec:usage-environments});
  a distinct treatment of case
  scrutinees in weak-head normal form (Section~\ref{sec:onwhnf});
  and a notion of irrelevant and tagged resources
  (Section~\ref{sec:irrev}), which allow for fine-grained control over
  variable usage.} 
To the best of our
knowledge, this is the first type system \ROUNDTWO{to accept programs
which depend on non-strict evaluation to be understood as linear}.

\item We prove that our type system is sound, guarantees linear resource usage, and
    prove that multiple optimising transformations, which are rejected in Core,
        are validated by Linear Core (Section~\ref{sec:main:metatheory}).

\item We discuss our implementation of Linear Core as a GHC
  plugin which checks linearity in
all intermediate Core programs produced during the compilation process, showing
it accepts the programs resulting from transformations in libraries such as
\texttt{linear-base} (Section~\ref{sec:discuss:implementation}).

\end{itemize}

\ROUNDTWO{While GHC's Core language is our main focus, we
note that the ideas explored in this work can potentially be applied to any
language with non-strict features, even if the language itself is
strict. We assume some familiarity with Haskell and linear types for
functional languages, but try to explain our work and motivations
without assuming much knowledge of GHC and Core. }

Additional proofs and definitions are given in Appendix.

\section{Linearity, Semantically\label{sec:linearity-semantically}}

Linear type systems guarantee that linear resources are consumed exactly once.
\ROUNDTWO{Thus}, whether a program is \ROUNDTWO{accepted as linear} intrinsically depends on
the definition of consuming a resource.
\ROUNDTWO{Traditionally}, consuming a resource \ROUNDTWO{has been} equated with its 
syntactic occurrence along a control-flow path in the program. However, as this
section \ROUNDTWO{aims to} make clear, \ROUNDTWO{in a non-strict context, a
syntactic definition of linearity is overly conservative}.
For instance, in the following Haskell-like program, \lstinline{f} is a linear
function (a function that must use its argument
exactly once), but there are two syntactic
occurrences of \lstinline{handle}: one is in a let-bound computation that
closes this handle, the other \ROUNDTWO{is in the then branch of a
  conditional}. If the program is 
evaluated lazily, the handle will, \ROUNDTWO{dynamically, either be
  used exactly once (i.e,~closed) in
the else branch, or be used exactly once (i.e.,~returned) in the then branch}, since the 
\ROUNDTWO{let-bound variable is not needed in that branch and
so the expression which closes the handle is never evaluated}.
\begin{lstlisting}
f : Handle *' $\lolli$ '* Handle
f handle =
    let closed = close handle in
    if <cond>
        then handle
        else closed
\end{lstlisting}

Intuitively, a computation that depends on a linear resource to produce some
result consumes that resource iff the result is fully consumed; in
contrast, a computation that depends on a linear resource, but is never run,
will not consume that resource.
We argue that a linear type system for a language with non-strict evaluation
semantics should accept the above program, unlike a linear type system for the
same program if it were to be evaluated eagerly, where the let-bound
expression would always be evaluated.
Given this observation, we turn our focus to linearity under lazy evaluation
\ROUNDTWO{specifically} since the distinction between semantically and
syntactically consuming a resource is only exposed under non-strict semantics,
and because GHC Core is lazy.

\subsection{Semantic Linearity by Example\label{sec:semantic-linearity-examples}}

This section aims to develop an intuition for semantic linearity through simple
Core programs that can be understood as linear but are rejected by Core's
linear type system (or, in fact, by all linear type systems we are
aware of prior to this work).
\begingroup
\setlength{\fboxsep}{0pt}In the examples, a \colorbox{working}{\workingcolorname} background highlights
programs that are syntactically linear and accepted by Core's linear type
system.
A \colorbox{notyet}{\notyetcolorname} or
\colorbox{limitation}{\limitationcolorname} background mark programs that are
semantically linear, but not seen as linear by Core's type system.
\ROUNDTWO{Our system accepts all \colorbox{notyet}{\notyetcolorname} programs,
but not the \colorbox{limitation}{\limitationcolorname} examples.}
A \colorbox{noway}{\nowaycolorname} background indicates that the program
violates linearity semantically, i.e.~the program effectively
discards or duplicates linear resources.
\endgroup

\paragraph{Let bindings}

The example at the start of Section~\ref{sec:linearity-semantically}
highlighted the subtlety of linearity in a non-strict language with
a single unused let binding.
In general, a let-bound expression that captures a linear variable
can be conditionally needed at runtime, depending on the branch taken.
Both optimising transformations (float-out) and programmers used to
non-strict evaluation are keen to write programs with bindings that are selectively
used in case alternatives, such as in the following program. Note that we write
$\lolli$ for linear functions and \lstinline{->} for non-linear functions:
\begin{notyet}
\begin{lstlisting}
f :: (a *' $\lolli$ '* a) -> Bool -> a *' $\lolli$ '* a
f use bool x = let y = use x in
  case bool of
    True -> x 
    False -> y
\end{lstlisting}
\end{notyet}
This example returns \lstinline{x} in one branch and the let-bound
\lstinline{y} in the other. Semantically, as opposed to syntactically, this
program is linear because the linear resource \lstinline{x} is used exactly once regardless of which branch is
taken: either directly in the \lstinline{True} branch, or \emph{via}
the use of the let-bound \lstinline{y} in the \lstinline{False} branch.

That said, a let-bound \lstinline{y} that captures a linear resource
\lstinline{x} must still be used exactly once, regardless of whether evaluation
inlines the let (\emph{à la} call-by-name) or creates a thunk (\emph{à
  la} call-by-need):
\begin{noway}
\begin{lstlisting}
f1 :: (a *' $\lolli$ '* b) -> a *' $\lolli$ '* (b, b)
f1 use x = let y = use x in (y, y)
\end{lstlisting}
\end{noway}
If \lstinline{y} is inlined, linearity is trivially violated.
On the other hand, if \lstinline{y} becomes a thunk, \lstinline{use x} will
only be evaluated once regardless of how many times \lstinline{y} is used.
Even so, this program must be rejected under call-by-need.
Intuitively, linearity is violated if the computation leaks the resource or
parts of it. For instance, if \lstinline{use = id}, the program must clearly be
rejected.

Lastly, consider an ill-typed program which defines two let bindings \incode{z}
and \incode{y}, where \incode{z} uses \incode{y} which in turn uses the linear
resource \incode{x}:
\begin{noway}
\begin{lstlisting}
f2 :: (a *' $\lolli$ '* a) -> a *' $\lolli$ '* ()
f2 use x = let y = use x in let z = use y in ()
\end{lstlisting}
\end{noway}
Even though \incode{y} occurs in \incode{z}, \incode{x} is never consumed
\ROUNDTWO{because \incode{z} is discarded}. \ROUNDTWO{This example highlights
that naively checking for syntactic occurrences of let-bound variables such as
\incode{y} is also inadequate to track resource usage}.
\ROUNDTWO{A better mental model is that} using \incode{y} implies
using \incode{x}, and thus using \incode{z} implies using \incode{x}
\ROUNDTWO{too}. If neither \incode{z} nor \incode{y} are used, then \incode{x}
is \ROUNDTWO{effectively discarded}.

In essence, an unused let binding doesn't consume any resources, and a let
binding used exactly once consumes the resources it captures exactly once. Let
binders that depend on linear resources must be used \emph{at most once} in
mutual exclusion with the linear resources themselves (in a sense,
let-bound variables are affine in the let body).
We discuss how to encode this principle between let bindings and their dependencies using so called \emph{usage
environments}, in Section~\ref{sec:usage-environments}.

\paragraph{Recursive let bindings\label{sec:semantic-linearity-examples:recursive-lets}}

Recursive let-bindings
behave similarly to non-recursive lets as far as their usage in the let
continuation is concerned.
The challenge lies in understanding linearity in the definitions of the
recursive binders and determining which resources are consumed when one of the
binders in the recursive group is used.
Consider the following program, which calls a recursive let-bound function
defined in terms of the linear resource \incode{x} and itself:
\begin{notyet}
\begin{lstlisting}
f3 :: Bool -> a *' $\lolli$ '* a
f3 bool x = let go b = case b of
                     True -> x
                     False -> go (not b)
                   in go bool
\end{lstlisting}
\end{notyet}
Function \incode{f3} is semantically linear because, iff it is consumed exactly once,
then \incode{x} is consumed exactly once. We can see this by case analysis on \incode{go}'s argument:
when \incode{bool} is \incode{True}, we'll use the resource \incode{x};
when \incode{bool} is \incode{False}, we recurse by calling \incode{go} on \incode{True}, which in turn will use the resource \incode{x}.
In \incode{go}'s body, \incode{x} is used directly in one branch and indirectly in the
other (by recursing, which we know will result in using \incode{x} linearly).

It so happens that \incode{go} will terminate on any input, and will always consume
\incode{x}. However, termination is not a requirement for a binding to use a resource linearly,
and we could have a similar example in which \incode{go} might never terminate but still
uses \incode{x} linearly if evaluated:

\begin{notyet}
\begin{lstlisting}
f4 :: Bool -> a *' $\lolli$ '*  a
f4 bool x = let go b = case b of
                     True -> x
                     False -> go b
                   in go bool
          \end{lstlisting}
        \end{notyet}

The key to linearity in the presence of non-termination is Linear Haskell's
definition of a linear function: \emph{if a linear function application (\incode{f u}) is
consumed exactly once, then the argument (\incode{u}) is consumed exactly once}.
If \incode{f u} doesn't terminate, it is never consumed, thus the claim holds
vacuously.
If \incode{go} doesn't terminate, we aren't able to compute (or consume) the result
of \incode{f4}, so we do not promise anything about \incode{x} being consumed (\incode{f4}'s
linearity holds trivially). If it did terminate, it would consume \incode{x} exactly
once (e.g. if \incode{go} was applied to \incode{True}).

Naturally, a linear type system must statically reject any recursive group that could
lead to non-linear resource usage.
For instance, if the linear resource \incode{x} is used in one case alternative and
recursion is used more than once in another, \incode{x} may end up being consumed
more than once:
\begin{noway}
\begin{lstlisting}
f5 :: Bool -> Bool *' $\lolli$ '* Bool
f5 bool x =
  let go b
        = case b of
           True -> x
           False -> go (not b) && go True
  in go bool
\end{lstlisting}
\end{noway}

To determine whether the usage of a binding from a recursive group is linear,
we must \ROUNDTWO{check} the superset of resources that may end up being consumed in a
single use of any of the bindings. For a strongly connected recursive group,
this superset is the same for all bindings by definition.
It follows from the resources used being the same for all bindings
that at most a single variable from the recursive group can be used in the
continuation.
The treatment of recursive let bindings in our system is discussed in Section~\ref{sec:recursivelets}.

\paragraph{Case expressions}
We have observed that more programs can be accepted as linear by delaying
resource consumption in (recursive) let bindings until they are evaluated.
On the other hand, \emph{case expressions drive evaluation} -- and, therefore,
do consume resources.

To understand exactly how,
we turn to the definition of consuming a resource from Linear
Haskell~\cite{10.1145/3158093}:
to consume a value of atomic base type (such as~\texttt{Int} or
        \texttt{Ptr}) exactly once, just evaluate it;
to consume a function value exactly once, apply it to one argument,
        and consume its result exactly once;
to consume a value of an algebraic datatype exactly once,
        pattern-match on it, and consume all its linear components exactly once.
Thus, we can consume a linear resource by fully evaluating it, and evaluation
happens with case expressions.

In Core, cases are of the form $\ccase{e_s}{z~\{\ov{\rho_i \to e_i}\}}$,
where $e_s$ is the case \emph{scrutinee}, $z$ is the case \emph{binder}, and
$\ov{\rho_i \to e_i}$ are the case \emph{alternatives}, composed of a pattern
$\rho_i$ and of the corresponding expression $e_i$. 
The case scrutinee is always evaluated to Weak Head Normal Form (WHNF) and
the case binder is an alias to the result of evaluating the scrutinee
to WHNF. Moreover, in Core, the alternative patterns are always exhaustive.
\ROUNDTWO{The first} example program constructs a tuple from linear resources,
matches on it, then uses both linearly-bound variables from the tuple pattern.
This is well-typed in Linear Haskell:
\begin{working}
\begin{lstlisting}
f6 :: a *' $\lolli$ '* b *' $\lolli$ '* (a *' $\lolli$ '* b *' $\lolli$ '* c) -> c
f6 x y use = case (x,y) of z { (a,b) -> use a b }
\end{lstlisting}
\end{working}
\ROUNDTWO{Perhaps surprisingly}, a similar program which discards the pattern
variables, and instead uses the resources \ROUNDTWO{from} the scrutinee
\ROUNDTWO{again}, is still linear, despite not being accepted by Linear Haskell:
\begin{notyet}
\begin{lstlisting}
f7 :: a *' $\lolli$ '* b *' $\lolli$ '* (a *' $\lolli$ '* b *' $\lolli$ '* c) -> c
f7 x y use = case (x,y) of z { (a,b) -> use x y }
\end{lstlisting}
\end{notyet}
\incode{f7} \ROUNDTWO{can be seen as} linear by \ROUNDTWO{realizing} that
the tuple being scrutinized is already in WHNF, so evaluating it again does
not consume either $x$ \ROUNDTWO{nor} $y$. Even if the tuple \ROUNDTWO{paired}
two \ROUNDTWO{arbitrary} expressions which \ROUNDTWO{captured} $x$ and $y$,
\ROUNDTWO{because $a$ and $b$ are not forced in the continuation, $x$ and $y$ would remain unused.}
\ROUNDTWO{On the other hand, as seen in the next example}, if $a$ were used in 
the case body, then $x$ \ROUNDTWO{is consumed via $a$ and must not be used again.}
Thus, \incode{f8} must be rejected:
\begin{noway}
\begin{lstlisting}
f8 :: a *' $\lolli$ '* b *' $\lolli$ '* (a *' $\lolli$ '* a *' $\lolli$ '* c) -> c
f8 x y use = case (x,y) of z { (a,b) -> use a x }
\end{lstlisting}
\end{noway}
This idea that $x$ and $a$ are mutually exclusive is \ROUNDTWO{akin to how} let
bindings \ROUNDTWO{are} mutually exclusive \ROUNDTWO{with the set} of resources
\ROUNDTWO{they capture, in the let-continuation}.
\ROUNDTWO{Forcing a pattern variable evaluates the expression for the
corresponding constructor field, and all linear variables captured by that
expression may end up being used}.
A third option, semantically \ROUNDTWO{linear}, but rejected by Linear Haskell, is to use
the case binder $z$ instead of $a,b$ or $x,y$:
\begin{notyet}
\begin{lstlisting}
f9 :: a *' $\lolli$ '* b *' $\lolli$ '* (a *' $\lolli$ '* b *' $\lolli$ '* c) -> c
f9 x y use = case (x,y) of z { (a,b) -> uncurry use z }
\end{lstlisting}
\end{notyet}
Again, $z$ is mutually exclusive with $(a,b)$ and with $(x,y)$, but at least
one of the three must \ROUNDTWO{be used} to ensure the
linear resources are consumed. Essentially, in this example, using $a$ entails
using the resource $x$, $b$ the resource $y$, and the case binder $z$ entails
using both $a$ and $b$.

\ROUNDTWO{Conversely}, if the scrutinee is an expression that's not in WHNF,
evaluation to WHNF \ROUNDTWO{may indeed} consume \ROUNDTWO{some or all of the}
linear resources \ROUNDTWO{captured by said scrutinee}. \ROUNDTWO{Consider the following
non-WHNF scrutinee example},
\incode{f10}, \ROUNDTWO{which is accepted} by our system:
\begin{notyet}
\begin{lstlisting}
f10 :: a *' $\lolli$ '* b *' $\lolli$ '* (a *' $\lolli$ '* b *' $\lolli$ '* (c,d)) -> (c,d)
f10 x y use = case use x y of z { (a,b) -> z }
\end{lstlisting}
\end{notyet}
\ROUNDTWO{In contrast to} a scrutinee in WHNF, \ROUNDTWO{the resources captured
by a non-WHNF scrutinee} (here, $x,y$), \ROUNDTWO{conservatively, can no longer
be used} in the case alternatives, \ROUNDTWO{since reducing \incode{use x y} to
WHNF may consume (parts of) $x$ and/or $y$}. \ROUNDTWO{Consequently}, either the case
binder $z$ or the linear pattern variables $a,b$ must be used exactly
once,
since the linear resources in the scrutinee are \ROUNDTWO{only} fully consumed
if the result of evaluating the scrutinee \ROUNDTWO{is also} fully consumed . For instance, take
\incode{use} in \incode{f10} to be the pairing function \incode{(,)}: it is not
sufficient for \incode{use x y} to be evaluated to WHNF to consume \incode{x}
and \incode{y}.
If all the resources were considered to be fully consumed after
\ROUNDTWO{evaluating the} scrutinee to \ROUNDTWO{WHNF}, the pattern variables
could simply be ignored and thus linear resources potentially discarded.
In \ROUNDTWO{practice}, if the scrutinee is not in WHNF then \ROUNDTWO{either}
the case binder or the linear components of the pattern \ROUNDTWO{must be consumed}.

\ROUNDTWO{Alternatively}, if a constructor without any linear components \ROUNDTWO{is matched},
all linear resources used in the scrutinee must \ROUNDTWO{already} have been
fully consumed in that branch since the result of evaluating it can be consumed
unrestrictedly (by definition, a constructor application to unrestricted
arguments is also unrestricted).
Hence, in the \ROUNDTWO{case-}branch for a constructor without linear fields, the case
binder can also be used unrestrictedly. For example, the program in
Figure~\ref{fig:nolin} is semantically linear.
\begin{figure}[h]
 \begin{minipage}{0.5\textwidth}
\begin{notyet}
\begin{lstlisting}
f11 :: () *' $\lolli$ '* ()
f11 x = case x of z { () -> z <> z }
\end{lstlisting}
\end{notyet}
\vspace{-0.5cm}
\caption{No Linear Fields\label{fig:nolin}}
\end{minipage}\begin{minipage}{0.5\textwidth}
\begin{limitation}
\begin{lstlisting}
f12 :: a *' $\lolli$ '* a
f12 x = case K1 x of z { K2 -> x; K1 a -> x }
\end{lstlisting}
\end{limitation}
\vspace{-0.5cm}
\caption{Absurd Branches\label{fig:absurd}}
\end{minipage}
\end{figure}
A second example of an unrestricted pattern, where \incode{K2} has no fields
and \incode{K1} has one linear field, is shown in Figure~\ref{fig:absurd}.
The use of \incode{x} in the \incode{K2} branch is valid despite \incode{x}
also occurring in the scrutinee because this branch is never taken. In fact,
any arbitrary resource could be used in absurd branches and be understood as
linear (despite not being seen as such by our system).

While many of these examples may seem artificial due to the pattern match on a
known constructor, we note that in between the transformations programs
undergo in an optimising compiler, many such programs naturally occur (e.g.,~if
the definition of a function is inlined in the scrutinee).

As for \emph{default} case alternatives, also known as \emph{wildcards}
(written $\_$): matching a wildcard doesn't provide any linearity information,
but the
scrutinee is still evaluated to WHNF.
When matching a wildcard, if the scrutinee is already in WHNF,
programs that either use the resources from the scrutinee directly, or the case
binder in that alternative, can be seen as linear. If the scrutinee is not in
WHNF, we \emph{must} use the case binder, as it's the only way to linearly
consume the result of evaluating the scrutinee.

\section{A Type System for Semantic Linearity in Core\label{sec:main:linear-core}}

In this section, we develop a pure linear calculus $\lambda_\Delta^\pi$,
dubbed \emph{Linear Core}, that combines the linearity-in-the-arrow
and multiplicity polymorphism introduced by Linear
Haskell~\cite{10.1145/3158093} with all the key features of GHC Core: algebraic
datatypes, case expressions, and recursive lazy let bindings; except for equality
coercions, which we discuss as future work in Section~\ref{sec:future-work}.

Linear Core makes precise the various insights discussed in the
previous section in a linear type system that we prove \ROUNDTWO{to be} sound and guarantee
linear resource usage at runtime.
This result is obtained (see Section~\ref{sec:main:metatheory}) via a soundness argument on a
call-by-need big-step operational semantics that becomes stuck when linear
variables in the runtime heap are used more than once, \ROUNDTWO{a
  technical approach that is natural in the literature~\cite{10.5555/1076265,10.1145/3471874.3472980}}. We show this
semantics to be bisimilar to a
Launchbury-style~\cite{10.1145/158511.158618} natural semantics. 
Crucially,
Linear Core typing accepts all the semantically linear examples
(highlighted with \colorbox{notyet}{\notyetcolorname})
from Section~\ref{sec:semantic-linearity-examples}, which Core currently
rejects.
Additionally, we prove that multiple optimising Core-to-Core transformations
preserve linearity in Linear Core, while violating it under Core's
current type system.
Despite the focus on GHC Core, the fundamental ideas for \ROUNDTWO{typing} linearity in
our call-by-need calculus may be applied to other languages \ROUNDTWO{which
combine non-strict features or semantics with linearity}.

\ROUNDTWO{The key ideas that make Linear Core work are:}

\ROUNDTWO{
\begin{enumerate}
    \item Usage environments (Section~\ref{sec:usage-environments})
    \item Distinct treatment of case scrutinees in WHNF (Section~\ref{sec:onwhnf})
    \item Irrelevant and tagged resources (Sections~\ref{sec:irrev}~and~\ref{sec:tagging})
\end{enumerate}

\noindent (1) Encodes the idea that lazy bindings do not consume resources upon
definition, but rather when the bound variables themselves are consumed.
Consuming a variable with some usage environment $\Delta$ equates to consuming
the variables contained in $\Delta$.
(2) Case expressions in WHNF may capture ambient linear resources, but these
resources can be safely used in branches since the case will not further
evaluate its scrutinee. On the other hand, scrutinizing a non-WHNF expression
must be treated more conservatively.
(3) Irrelevant resources make existing bindings unusable while forcing the use
of others. Namely, linear resources occurring in non-WHNF case scrutinees can
no longer be used in the alternatives, while the pattern variables (or the case
binder) must necessarily be used. Tagged resources follow the observation that
some variables must be used jointly or not at all.}

\ROUNDTWO{The reader may wonder whether the complexities of our
  system hinted above are a better alternative to simply specializing
  the optimisation passes to produce code that passes a more naive
  linearity check. This avenue was already explored during the
  implementation of Linear
  Haskell in GHC,
 exhibiting the prohibitive cost of too often restricting
 optimisations and producing less efficient code. Our work adheres to
 the principle that the type system bends to the optimisations, not
 the other way around\footnote{\url{https://gitlab.haskell.org/ghc/ghc/-/blob/c85c845dc5ad539bf28f1b8c5c1dbb349e3f3d25/compiler/GHC/Core/Lint.hs\#L3225-3235}}.}

We present Linear Core's syntax and type system incrementally, starting with the
judgements and base linear calculi rules (Section~\ref{sec:base-calculi}).
Then, usage environments, the rule for $\Delta$-bound variables, and rules for
(recursive) let bindings (Section~\ref{sec:usage-environments}).
Finally, we introduce the rules to type case expressions and case alternatives,
along with the key insights to do so, namely \ROUNDTWO{discriminating by} the WHNF-ness of
the scrutinee, irrelevant resources, and tagging (Section~\ref{sec:lc-case-exps}).

\subsection{Language Syntax}

The complete syntax of Linear Core is given in Figure~\ref{fig:full-linear-core-syntax}.
The types of Linear Core are algebraic datatypes, function types, and
multiplicity schemes to support multiplicity polymorphism: datatypes
($T~\ov{p}$) are parametrised by multiplicities, function types
($\vp\to_\pi\s$) are also annotated with a multiplicity, which can be $1$,
$\omega$ (read \emph{many}), or a multiplicity variable $p$ introduced by a
multiplicity universal scheme ($\forall p.~\vp$).

Terms are variables $x,y,z$, data constructors $K$, multiplicity
abstractions $\Lambda p.~e$ and applications $e~\pi$, term abstractions
$\lambda \x[\pi].~e$ and applications $e~e'$, where lambda binders are
annotated with a multiplicity $\pi$ and a type $\s$. Then, there are
non-recursive let bindings $\llet{\xD = e}{e'}$, recursive let bindings
$\lletrec{\ov{\xD = e}}{e'}$, where the overline denotes a set of distinct
bindings $x_1{:}_{\D}\s_1\dots x_n{:}_{\D}\s_n$ and associated expressions
$e_1\dots e_n$, and case expressions $\ccase{e}{\zD~\{\ov{\rho \Rightarrow e'}\}}$,
where $z$ is the case binder and the overline denotes a set of distinct
patterns $\rho_1\dots \rho_n$ and corresponding right hand sides $e'_1\dots
e'_n$. We often omit the case binder $z$ when it is unused. Notably, (recursive) let-bound
binders and case-bound binders are annotated
with a so-called \emph{usage environment} $\Delta$ -- a fundamental construct
for typing semantic linearity in the presence of laziness that we present in
Section~\ref{sec:usage-environments}.
Case patterns $\rho$ can be either the \emph{default/wildcard}
pattern $\_$, which matches any expression, or a constructor $K$ and a set of
variables that bind its arguments, where each field of the constructor has an
associated multiplicity denoting which pattern-bound variables must be
consumed linearly or unrestrictedly.
Additionally, the set of patterns in a case expression is assumed to be exhaustive,
i.e. there is always at least one pattern which matches the scrutinized expression.

Datatype declarations $\datatype{T~\overline{p}}{\overline{K:\overline{\sigma
\to_\pi}~T~\overline{p} }}$ have the name of the type being declared $T$
parametrized over multiplicity variables $\ov{p}$, and a set of the data
constructors $K$ with signatures indicating the type and multiplicity of the
constructor arguments. Note that a linear resource is used many times when a
constructor with an unrestricted field is applied to it, since, dually, pattern
matching on the same constructor with an unrestricted field allows it to be
used unrestrictedly.

\SyntaxFull

\subsection{Base Typing System\label{sec:base-calculi}}

Linear Core ($\lambda^\pi_\Delta$) is a linear lambda calculus akin to Linear
Haskell's $\lambda^q_\to$ in that both have multiplicity polymorphism,
(recursive) let bindings, case expressions, and algebraic data types.
\ROUNDTWO{We use the $\pi$ superscript to stand for multiplicity
  and the $\Delta$ subscript to emphasize usage
  environments, a key feature of our work.}
$\lambda^\pi_\Delta$ diverges from $\lambda^q_\to$ primarily when typing lets,
case expressions, and alternatives.
The core rules of the calculus for abstraction and application are similar to
that of $\lambda^q_\to$ and mostly standard. 
We note that we handle multiplicity polymorphism differently from
Linear Haskell by \ROUNDTWO{effectively} ignoring the multiplicity semiring and
treating all multiplicity polymorphic \ROUNDTWO{variables}
\ROUNDTWO{parametrically}, for the sake of simplicity. \ROUNDTWO{Such
  variables cannot be duplicated in our system.}
The full type system is given in Figure~\ref{fig:linear-core-typing-rules},
with auxiliary judgements given in
Figure~\ref{fig:linear-core-other-judgements}.

\TypingRules
\TypingRulesOther

The main judgement is written $\G;\D \vdash e : \tau$ to denote that expression
$e$ has type $\tau$ under the unrestricted environment $\G$ and linear
environment $\D$.
Variables in $\G$ can be freely discarded (\emph{weakened}) and duplicated
(\emph{contracted}), while resources in $\D$ must be used exactly once. Despite
not having explicit weakening and contraction rules in our system, they are
available as admissible rules for $\G$ (but not for $\D$), since, equivalently
\cite{91621fae-5e53-3497-8291-32b2fab5a743}, resources from $\G$ are duplicated
for sub-derivations and may unrestrictedly exist in the variable rules.

Occurrences of unrestricted variables from $\G$ are well-typed as long as the linear
environment is empty, while occurrences of linear \ROUNDTWO{(or multiplicity polymorphic)} variables
are only well-typed
when the variable being typed is the only resource available in the
linear context (rules $Var_\omega$, $Var_1$ \ROUNDTWO{and $Var_p$}, respectively).
In \ROUNDTWO{the $Var_1$ and $Var_p$ cases}, the linear context must contain exactly \ROUNDTWO{the
  appropriate variable binding}, whereas the unrestricted context may contain arbitrary
variables.
Variables in contexts are annotated with their type and multiplicity, so
$\x[\pi]$ is a variable named $x$ of type $\s$ and multiplicity $\pi$.

Linear functions are introduced via the function type ($\s \to_\pi \vp$) with
$\pi = 1$, i.e. a function of type $\s \to_1 \vp$ (or, equivalently, $\s \lolli \vp$)
introduces a linear resource of type $\s$ in the linear environment $\D$ to then type an expression of type $\varphi$.
Unrestricted functions are introduced via the function type ($\s \to_\pi \vp$) with $\pi =
\omega$, and the $\lambda$-bound variable is introduced in $\G$ (rules
$\lambda I_1$ and $\Lambda I$). Function application is standard, with
linear function application splitting the linear context in two
disjoint parts, one used to type the argument and the other the
function.
Arguments of unrestricted functions must also be unrestricted, i.e. no
linear variables can be used to type them. 
Typing linear and unrestricted function application separately is less
general than typing applications of functions of any multiplicity
$\pi$ by scaling (per the multiplicity semiring) the multiplicities of
the resources used to type the argument by $\pi$, however, our
objective of typing semantic linearity does not benefit much from
doing so and therefore we opt for a simpler design.

Multiplicity abstractions ($\Lambda I$) introduce a multiplicity variable $p$,
and construct expressions of type $\forall p.~\s$, i.e. a type universally
quantified over a multiplicity variable $p$. \ROUNDTWO{In the body
of the abstraction, function types and datatype fields annotated with the multiplicity
variable $p$ are typed parametrically in their multiplicity, since} $p$ can be instantiated at both $\omega$ and $1$.
A multiplicity application ($\Lambda E$) instantiates a multiplicity-polymorphic type
$\forall p.~\s$ at a particular (argument) multiplicity $\pi$, resulting in an
expression of type $\s$ where occurrences of $p$ are substituted by $\pi$, i.e.
$\s[\pi/p]$.
The rule additionally requires that $\pi$ be \emph{well-formed} in order
for the expression to be well-typed, using the judgement $\G \vdash_{mult}
\pi$, where well-formedness is given by $\pi$ either being $1$, $\omega$, or an
in-scope  multiplicity variable in $\G$.

\subsection{Usage environments\label{sec:usage-environments}}

A \emph{usage environment} $\Delta$ encodes the idea that lazy
\ROUNDTWO{variable} bindings do not consume resources upon definition, but rather when the bindings themselves are consumed.
Specifically, we annotate so-called $\Delta$-bound variables with \ROUNDTWO{their} \emph{usage
environment} \ROUNDTWO{$\Delta$} to denote that consuming
\ROUNDTWO{such variables must be equated with} consuming the
resources in the annotated \ROUNDTWO{typing} environment.
\ROUNDTWO{Such} variables are introduced in the unrestricted environment by
\ROUNDTWO{multiple} constructs such as (recursive) let binders, case binders, and case
pattern variables.
For example, the let-bound variable $u$ in the following program is annotated with a
usage environment $\{x{:}_1\s,y{:}_1\s\}$ tracking the resources that are used in its
body. Thus, in the expression $e$, consuming $u$ will equate to consuming the
linear resources $x$ and $y$:
\[
f = \lambda \x:_1\tau.~\lambda \y[1].~\llet{u_{\{x:_1\tau,y:_1\s\}} = (x,y)}{e}
\]
Furthermore, usage environments guarantee that using a $\Delta$-bound variable
is mutually exclusive with directly using the resources it is annotated with --
using the $\Delta$-bound variable consumes all linear resources listed in its
usage environment, meaning they are no longer available for direct usage.
Dually, using the linear resources directly means they are no longer available
to consume through the usage environment of the $\Delta$-bound variable. The
$\Delta$-bound variable can be left unused since it is unrestricted, regardless
of the variables it is annotated with. \ROUNDTWO{In the example above,
  our type system ensures that \emph{either} $x$ and $y$ are consumed
  or $u$ is consumed in $e$, since $u$ has usage environment $\{x:_1\tau,y:_1\s\}$.}

\paragraph{\texorpdfstring{$\D$}{Delta}-bound variables}

A $\Delta$-bound variable $u$ is a variable annotated with
\ROUNDTWO{some} usage environment \ROUNDTWO{$\Delta'$}.
\ROUNDTWO{For conciseness, we use the meta-variable $\Delta$ to refer to the usage
  environment of a $\Delta$-bound variable.}
Crucially, for any $\Delta$-bound variable $u$:
(1) using $u$ is equivalent to using all the linear resources in $\Delta$;
(2) using $u$ is mutually exclusive with using the $\Delta$ resources it depends on;
(3) $u$ can be safely discarded as long as the resources in $\Delta$ are consumed.
Fortunately, since linear resources must be linearly split across
sub-derivations, (2) follows from (1) since consuming the linear
resources in $\Delta$ to type $u$ makes them unavailable in the
context of any other derivation.
Similarly, (3) also follows from (1), because if the linear resources aren't
consumed \ROUNDTWO{by a use of $u$}, they must be consumed in 
some other derivation (or otherwise the \ROUNDTWO{overall} expression is ill-typed).
These observations lead to the typing rule for $\Delta$-bound
variables:

\[
\TypeVarDelta
\]
The rule states that an occurrence of a $\Delta$-bound variable is well-typed if
the linear environment is made up exactly of the resources in the usage environment of
that variable, \ROUNDTWO{denoted by the $\Delta = \Delta'$ premise of
  the rule.}
\ROUNDTWO{$\Delta$-bound variables} are unrestricted and can be discarded and duplicated, despite
multiple occurrences of the \ROUNDTWO{variable} in an expression not
being well-typed
\ROUNDTWO{due to} non-linear usage of linear resources.

\subsection{Lazy let bindings}

In Section~\ref{sec:semantic-linearity-examples}, we discussed how linear
resources used in let-bound expressions are only consumed when the respective
binders are evaluated in the let body.
Moreover, resources captured by a let-bound expression cannot be used
simultaneously with the let-binder in its body, since \ROUNDTWO{those}
resources would end up being consumed more than once and violate
\ROUNDTWO{the semantics of}
linearity. Either the binding or the
resources \emph{must} be used.

Let-bound variables are the canonical example of a
$\Delta$-bound variable.
Annotating let-bound variables with \ROUNDTWO{their} usage environment $\D$ delays the
consumption of resources to when the variables themselves are used and
guarantees the desired mutual exclusion property between the binder and the resources:
\[
\TypeLet
\]
The rule for (non-recursive) let bindings splits the linear environment in $\D$
and $\D'$. $\D$ is used to type the body $e$ of the let binding $x$. Perhaps
surprisingly, the resources $\D$ are still available to type the let body $e'$, alongside the unrestricted $x$ binding
annotated with the usage environment $\D$. The resources of $e$ being
available in $e'$ reflects how the body of a lazy binder
doesn't consume resources outright, and $x$ being an
\ROUNDTWO{unrestricted $\Delta$-bound variable}
ensures that using $x$ will consume the resources $\D$ that $e$
captures.

\subsection{Recursive let bindings}\label{sec:recursivelets}

Recursive let bindings also bind expressions lazily and so
introduce a $\D$-bound variable for each binding. \ROUNDTWO{The} resources required to type the
let-bindings are still available in the \ROUNDTWO{let body}. \ROUNDTWO{These
resources may} be consumed
via \ROUNDTWO{the corresponding} $\D$-bound variables, or directly if the let-bindings are unused.

The typing rule for recursive groups of bindings leverages both the assumption
that all mutually recursive groups are strongly connected and the corollary
that every binder in such a group must be typed with the same linear context,
as observed in Section~\ref{sec:semantic-linearity-examples:recursive-lets}.
Consequently, all bindings of the recursive group are introduced as
$\D$-bound variables
sharing the same usage environment -- using any one of the bindings in a
recursive group entails consuming all resources required to type that same
group -- so, at most a single binder from the group can be used in the
let body.
\[
\TypeLetRec
\]
This formulation is not syntax-directed 
since \ROUNDTWO{it does not describe how to annotate binders with} a
particular \ROUNDTWO{usage environment} $\D$.
\ROUNDTWO{Our system simply assumes the annotations are explicitly
  provided in the source and checks the correctness of the overall
  recursive let.} 

In practice, determining this typing environment $\D$ amounts to finding a
least upper bound of the resources needed to type each mutually-recursive
binding that (transitively) uses all binders in the recursive group.
Our implementation uses a naive $O(n^2)$ algorithm for inferring usage
environments of recursive bindings.
Moreover, we note that inference of usage environments for recursive binding
groups bears some resemblance to the inference of principle types for recursive
bindings traditionally achieved through the Hindley–Milner inference
algorithm~\cite{DBLP:conf/popl/DamasM82}. We leave the exploration of this connection
as future work.

\subsection{Case Expressions\label{sec:lc-case-exps}}

A case expression drives evaluation by evaluating its scrutinee to Weak Head
Normal Form (WHNF)~\cite{10.5555/1096899}. Then, the case
alternative whose pattern matches the result of evaluating the
scrutinee is taken\footnote{In our calculus, the alternatives are assumed to be
  exhaustive, i.e. there always exists at least one pattern which
  matches the scrutinee in its WHNF, so we're guaranteed to have an
  expression to progress evaluation.}.
An expression in WHNF can either be a $\lambda$-abstraction $(\lambda x.~e)$ 
or a datatype constructor application $(K~\ov{e})$.
In both cases, the sub-expressions $e$ or $\ov{e}$ occurring in the lambda body
or as constructor arguments need not be evaluated for the lambda or constructor
application to be in WHNF (and they may capture linear resources).

A key detail is that when the scrutinee is already in WHNF, evaluation continues
in the alternative simply by substituting the pattern variables by the
scrutinee sub-expressions. No resource from the scrutinee can be consumed since
no computation happens at all.
We introduce a typing judgement $\G;\D \Vdash e : \s \gtrdot \ov{\D_i}$ to
extract additional information from the static structure of terms in WHNF:
\[
    \TypeWHNFCons
\qquad
    \TypeWHNFLam
\]
This judgement differs from the main typing judgement in that (1) it only
applies to expressions in weak head normal form, and (2) it ``outputs'' (to the right of $\gtrdot$) a
disjoint set of linear environments ($\ov{\D_i}$), where each environment corresponds to the
linear resources used by a sub-expression of the WHNF expression.
To type a constructor application $K~\ov{e_\omega e_i}$, where $e_\omega$
are the unrestricted arguments and $e_i$ the linear arguments of the
constructor, we split the resources $\D$ into a disjoint set of resources
$\ov{\D_i}$ required to type each linear argument individually and return exactly
that split of the resources; the unrestricted $e_\omega$ expressions must be
typed on an empty linear environment. A lambda expression is typed with
the main typing judgement and trivially ``outputs'' the whole $\D$ environment,
as there is always only a single sub-expression in lambda abstractions.

\subsubsection{\ROUNDTWO{Discriminating by} WHNF-ness}\label{sec:onwhnf}

The dichotomy between evaluation (and resource usage) of a case expression
whose scrutinee is in weak head normal form and one whose scrutinee is not
leads to one of the key insights of $\lambda^\pi_\Delta$:
\ROUNDTWO{accurate typing of case expressions depends on whether the scrutinee is in WHNF.}
When the scrutinee is already in WHNF, the resources are
unused upon evaluation and thus available in the alternatives.
When it is not, resources will be consumed and cannot be used in the
alternative.
To illustrate, consider the following case expressions:
\[
\begin{array}{ccc}
    (1)~\lambda x.~\ccase{K~x}{z~\{\_ \to x\}} &  & (2)~\lambda x.~\ccase{free~x}{z~\{\_ \to x\}}
\end{array}
\]
The first function uses $x$ linearly, but the second does not.
Alternatives may also use the case binder or pattern variables, referring to, respectively,
the whole scrutinee and all its used resources; or
constructor arguments and the resources used to type them.
\ROUNDTWO{The reader may wonder about the significance of case expressions
  where the scrutinee is in WHNF. We note that in an
  optimizing compiler (such as GHC) many intermediate programs that
  arise due to program transformations commonly scrutinize expressions
  in WHNF.}

There are three competing ways
to use the resources from a scrutinee in WHNF in a case alternative: directly, via
the case binder, or by using pattern-bound variables.
Recall how $\D$-bound variables encode mutual exclusion between alternative
ways of consuming resources -- it follows that case binders and pattern-bound
variables are yet another instance of $\D$-bound variables.
This suggests the following first rule for cases of scrutinees already in WHNF:
\[
\TypeCaseWHNFIntermediate
\]

First, we assert this rule is only applicable to expressions in weak head
normal form. Second, we appeal to the typing judgement for expressions in WHNF
to determine the split of resources amongst the scrutinee
sub-expressions. Finally, we type all case alternatives with the same context, using the
introducing
the case binder $z$ in the unrestricted environment as a $\D$-bound
variable whose usage environment is the linear resources $\ov{\D_i}$ used to type the
scrutinee. Those same resources $\ov{\D_i}$ are again made available in the
linear typing environment of the alternative, similarly to how the resources
used to type a $Let$ binder are still available in the continuation.
Although the main idea for typing ``WHNF case expressions'' is conveyed by
this rule, the full rule of our system is slightly more involved, as will be seen after
discussing cases of scrutinees not in WHNF.

The alternative judgement \ROUNDTWO{$\G;\D \vdash \rho \Rightarrow e :^z_\D 
\s$} is used to type case alternatives, encompassing three
modes that are distinguished by the kind of arrow that is used:
for alternatives of case expressions whose scrutinee is in WHNF
($\Rightarrow_\mathsf{WHNF}$), not in WHNF ($\Rightarrow_\mathsf{NWHNF}$),
and for alternatives agnostic to the WHNF-ness of the scrutinee
($\Rightarrow$), with $\Rightarrow$ also generalizing the other two.

First, an alternative whose pattern is a constructor with $n > 0$ linear components
under the $\Rightarrow_{\mathsf{WHNF}}$ mode is typed as:
\[
\TypeAltNWHNF
\]
The rule states that, for such a match on a scrutinee already in
WHNF, we introduce the linear components of the pattern as $\D$-bound variables
with usage environment matching the linear resources required to type the
corresponding constructor argument in the scrutinee. The resources required to
type each constructor sub-expression come annotated in the judgement (as
${:}_{\ov{\D_i}}$). Unrestricted fields of the constructor are introduced as
unrestricted variables. We note that the typing environment $\D$ always
contains the resources $\ov{\D_i}$ in uses of the alternative judgement.

Second, the rule for alternatives that match on the wildcard pattern,
regardless of the WHNF status of the scrutinee (note the use of $\Rightarrow$, which
is applicable both under $\Rightarrow_{WHNF}$ and $\Rightarrow_{NWHNF}$):
\[
\TypeAltWild
\]
To type a wildcard alternative we simply type the expression with the main
judgement, ignoring all annotations. The case binder will
have already been introduced in the environment with the appropriate usage
environment by the relevant case expression rule.

Lastly, consider an alternative matching on a data constructor without any
linear components. The linear resources used to type a scrutinee matching such
a pattern are fully consumed during evaluation: the resulting unrestricted
constructor application can only be well-typed under an empty linear
environment.
Consequently, the resources to type the scrutinee which are carried over to the
case-alternative environments by the $Case_{WHNF}$ and $Case_{NWHNF}$ rules
(see the latter below), must be reactively destroyed from branches where the
pattern is unrestricted. In failing to do this, we would allow the fully-consumed
resources to be used again in the branch.
The $Alt0$ rule essentially encodes this insight, and is applicable regardless
of the WHNF status of the scrutinee ($\Rightarrow$ mode), as long as
the constructor pattern has no linear fields:
\[
\TypeAltZero
\]
The rule deletes the annotated scrutinee environment $\D_s$ from
the linear typing context (written $\D[\cdot/\D_s]$, a substitution of
the scrutinee typing environment by the empty linear environment $\cdot$)
and from the usage environment of the case binder $z$, written $\G[\cdot/\D_s]_z$
to delete $\D_s$ from the usage environment of the \ROUNDTWO{$\D$-bound variable} $z$ in $\G$.
Given that the case-binder $z$ is always annotated with $\D_s$ (essentially,
the linear environment typing the scrutinee), this latter substitution makes
the case-binder unrestricted in branches with unrestricted patterns.

This rule observes that any linear resources required to type an expression
that reduces to an unrestricted constructor application are fully consumed
through evaluation and the resulting unrestricted expression can be freely
discarded or duplicated (e.g. via the case binder).

\subsubsection{Irrelevant resources}\label{sec:irrev}

When the scrutinee is not in WHNF
\ROUNDTWO{we cannot statically determine which resources are consumed
  upon scrutinee evaluation or which are transferred into the branch continuation}
e.g. as in $\ccase{f\ x}{z\ \{K\
a \rightarrow e\}}$:
\begin{enumerate}

    \item Cases with non-WHNF scrutinees must be typed such that no resource
    from the scrutinee can be used directly in the branch again, or we
    risk
    consuming the same resource twice. In the example, $x$ must not directly occur
    in $e$ since $f$ \emph{could} have consumed it.

    \item On the other hand, since evaluation to WHNF does not guarantee all
    resources are consumed (unless the result matches an
    unrestricted pattern), we must ensure that any resources that transfer from the
    scrutinee to the alternative are ultimately consumed (in the example, $x$
    is not consumed by evaluating the scrutinee for $f = K$).

\end{enumerate}

\ROUNDTWO{
    Considering (1), we must resort to a more uniform method of guaranteeing we
    ``finish consuming'' the scrutinee. There are only two ways of
    uniformly referring to the remaining linear resources in the newly-evaluated scrutinee:

        \begin{itemize}
            \item The case binder, which refers to the whole result of evaluating the scrutinee;
            \item The linear components of a constructor pattern, which refer to expressions that may contain linear resources.
        \end{itemize}}

To encode  linear resources that cannot be
directly used (i.e.,~to which the $\textsc{Var}$ rule is not
applicable) we introduce \emph{irrelevant} resources, written as linear resources
within square brackets $[ x{:}\s]$ and \ROUNDTWO{pointwise} lifted to contexts
\ROUNDTWO{and usage environments} as $[\D]$.
Irrelevant resources
are linear resources in every sense, meaning they must be used
exactly once. However, since irrelevant resources cannot be
discarded or used directly, they have to be consumed \emph{indirectly} via
$\D$-bound variables, namely, the case binder, or, in mutual exclusion, the linear pattern-bound
variables.

Hence, to type a case expression whose scrutinee is not in WHNF, we
type the scrutinee with linear resources $\D$ and the case
alternatives by introducing the case binder with an \ROUNDTWO{irrelevant} usage environment $[\D]$,
having the same irrelevant $[\D]$ in the typing
environment, and annotating the judgement with the irrelevant resources
for use in the $\Rightarrow_{\textsf{NWHNF}}$ judgement:
\[
\TypeCaseNotWHNF
\]
Note how the rule is similar to the one for scrutinees in WHNF, but
requires the resources in the case binder, typing environment, and
judgement annotation to be made ``irrelevant''.

Finally, we recall the tentative $Case_\textrm{WHNF}$ rule presented before and
highlight its flaw: the $\G;\D \vdash \rho \Rightarrow_{\textsf{WHNF}}
e :^z_{\D_s} \s$ judgement is only well-defined for patterns $\rho$
matching the WHNF form 
of the scrutinee, as the distribution of resources per constructor components
only makes sense for the constructor pattern matching the scrutinee.
Alternatives not matching the scrutinee could use resources arbitrarily as they
will never be executed. We uniformly treat non-matching alternatives
as if the scrutinee were not in WHNF. Having introduced irrelevant
resources, we can now present the full $Case_\textrm{WHNF}$ rule:
\[
\TypeCaseWHNF
\]
\ROUNDTWO{where we appeal to the pattern judgment for WHNF scrutinees
  for pattern $\rho_j$, allowing the $\ov{\Delta_i}$ resources to be
  consumed either through the case-binder or outright; and to the
  non-WHNF judgment for the remaining branches, where the
  $\Delta$ resources may only be consumed via the case-binder.}
We again note that it might seem unusual to specialize a rule for expressions in
WHNF, as programs scrutinizing an expression in WHNF are rarely
written by humans.
Yet, our system is designed to be suitable for optimising compilers
in which intermediate programs commonly scrutinize expressions in WHNF.

\subsubsection{Splitting and tagging fragments}\label{sec:tagging}

As opposed to scrutinees in WHNF, where resources 
can be cleanly divided amongst the various sub-expressions of a constructor
application and, henceforth, each pattern variable, there is no direct mapping
between the resources typing a scrutinee not in WHNF and the usage
environments of pattern variables in any alternative.

We introduce \emph{tagged resources} as a means to enforce that all pattern-bound
variables for a scrutinee not in WHNF are either \emph{jointly} used to consume
all resources occurring in the environment, or not at all (instead, the case
binder may be used). Given \ROUNDTWO{irrelevant} resources $[\D_s]$ used to type a scrutinee,
and a pattern $K~\ov{x_\omega},\ov{y_i}$ with $i$ linear components, we assign
a usage environment $\D_i$ to each linear pattern variable where $\D_i$ is
obtained from the scrutinee environment \emph{tagged} with the constructor name
and linear-variable index $\lctag{\irr{\D_s}\!}{K_i}$. Then, $y_i{:}_{\D_i}\s$ is
introduced in $\G$, just like other $\D$-bound variables.
\[
  \begin{array}{cc}
    \TypeAltNNotWHNF & \TypeVarSplit
    \end{array}
\]
Having uniquely tagged the resources in the usage environment of each pattern
variable, we need only express that (1) the pattern $\D$-bound variables can
be used (i.e., the tagged resources need be available in the linear
environment in order for the $\textsc{Var}_\D$ rule to be applicable) and (2)
if a $K$-pattern $\D$-bound variable is consumed, all remaining
$K$-$\D$-bound variables,
standing for the remaining linear components of the same pattern, must also be
consumed.

To satisfy these two constraints, we introduce a $\textsc{Split}$ rule that
allows a linear resource $x{:}_1\s$ to be split into $n$ resources at a given
constructor $K$, where $n$ is the number of linear components of the
constructor, and each resource resulting from the split is \emph{tagged} with
$K$ and the positional index of a linear component (i.e. $x{:}_1\s$ can be
split into $\ov{\lctag{x{:}_1\s}{K_i}}^n$).
By assigning to each pattern variable a \emph{fragment} of the scrutinee
resources (with a tag), we require the scrutinee resources to be
$\textsc{Split}$ in order to use pattern variables at all.  Moreover, the
remaining tagged resources cannot be used directly, yet need to still be consumed
exactly once. Thus, the choice to consume the tagged resources via the usage
environments of the other pattern variables is forced, ensuring that no
variable for a constructor's linear component can go unused (the tagged environments are disjoint).
Tagged resources are inspired by fractional permissions in separation logic~\cite{10.5555/1760267.1760273}.

For instance, in the term $\lambda x.~\ccase{f~x}{z~\{K~a~b\to
  (a,b)\}}$, where $x$ is a linear variable, the case alternative is
typed with $\irr{x}$,
the case binder $z$ is introduced as $\z[\irr{x}]$, and the pattern variables
are introduced as $\var[a][\lctag{\irr{x}}{K_1}]$ and
$\var[b][\lctag{\irr{x}}{K_2}]$, assuming both components of $K$ are linear.
The occurrences of $a$ and $b$ are well-typed because we can first
$\textsc{Split}$ $[x{:}_1\s]$ into $\lctag{[x{:}_1\s]}{K_1}, \lctag{[x{:}_1\s]}{K_2}$,
noting that $\textsc{Split}$ can be applied both to relevant and proof irrelevant linear
resources in $\D$.

\section{Metatheory of Linear Core\label{sec:main:metatheory}}

We develop the metatheory of Linear Core by first 
presenting the operational semantics of $\lambda^\pi_\D$,
consisting of a lazy natural semantics in the
style of Launchbury~\cite{10.1145/158511.158618}.
This semantics conveys the expected behavior of an implementation of
lazy evaluation and is agnostic to any linearity information.
Such a semantics is ill-suited for reasoning about
linearity and so we develop an \emph{instrumented} linearity-aware semantics that is
enriched with sufficient information to allow us to establish type
safety, thus showing that linearity is preserved. We show that the natural
and instrumented semantics are bisimilar for well-typed terms and so
derive type safety in the natural semantics
(Section~\ref{sec:typesafe}). Finally, we show that multiple optimising
transformations such as those performed by GHC on Linear Core are
type preserving
(Section~\ref{sec:optimisations-preserve-types-meta}), and thus
linearity preserving.

\subsection{Operational Semantics}

The Launchbury-style natural semantics, presented in Figure~\ref{fig:opsem}, captures
standard lazy evaluation while ignoring linearity information.
The natural semantics is equipped with an evaluation environment that
maps variables to expressions and is
mutated in order to express shared evaluation. Terms are explicit about
their sharing (in the sense of shared reductions in lazy evaluation),
and so we perform a translation step that makes all sharing explicit
through let-binders (see~Appendix).
We write $\llet{\ov{x_i{:}_{\Delta_i}\sigma = e_i}}{e'}$ for the iterated
let-binding of variables $\ov{x_i}$. 

\begin{figure}

  {\small
    \[
  \begin{array}{c}
    \infer[]
    {\,}
    {\Theta : \Lambda p.e \Downarrow \Theta : \Lambda p.e}
    \qquad
    \infer[]
    {\Theta : e\Downarrow \Theta' : \Lambda p.e' \quad \Theta' :
    e'[\pi/p]\Downarrow \Theta'' : v}
    {\Theta : e~\pi\Downarrow \Theta'' : v }\\[1.5em]
    \infer[]
    {\,}
    {\Theta : \lambda x:_\pi \sigma . e \Downarrow \Theta : \lambda x:_\pi \sigma . e }
    \qquad
    \infer[]
    {\Theta : e \Downarrow \Theta' : \lambda y:_\pi \sigma . e' \quad
    \Theta' : e'[x/y] \Downarrow \Theta'' : v }
    {\Theta : e~x \Downarrow \Theta'' : v}\\[1.5em]
    \infer[]
    {(\Theta,  x:_\omega \sigma = e) : e \Downarrow \Theta' : v}
    {(\Theta , x:_\omega \sigma = e) : x \Downarrow (\Theta',x:_\omega
    \sigma = v) : v}
    \qquad
    \infer[]
    {(\Theta,x:_\omega \sigma = e) : e' \Downarrow \Theta' : v}
    {\Theta : \llet{x{:}_{\Delta}\sigma = e}{e'}\Downarrow \Theta' : v
    }\\[1.5em]
    \infer[]
    {\,}
    {\Theta : K~\ov{x_i} \Downarrow \Theta : K~\ov{x_i}}
    \qquad
    \infer[]
    {\Theta : e \Downarrow \Theta' : K~\ov{x_i} \quad
    \Theta' : e'\ov{[x_i/y_i]}[K~\ov{x_i}/z] \Downarrow \Theta'' : v }
    {\Theta :
    \ccase{e}{z{:}_{\Delta'}\sigma~\{{\dots,K~\ov{y_i} \Rightarrow e', {\dots}}\}}
    \Downarrow \Theta'' : v}\\[1.5em]
    \infer[]
    {\Theta : e \Downarrow \Theta' : K~\ov{x_i} \quad
    \Theta' : e'[K~\ov{x_i}/z] \Downarrow \Theta'' : v }
    {\Theta :
    \ccase{e}{z{:}_{\Delta'}\sigma~\{\dots , \_\Rightarrow
    e'}\} \Downarrow \Theta'' : v}
    \qquad
    \infer[]
    {(\Theta, \overline{x_i:_\omega \sigma_i = e_i}) : e' \Downarrow \Theta'
    : v
    }
    {\Theta : \lletrec{\overline{x_i{:}_{\Delta}\sigma_i = e_i}}{e'}
    \Downarrow \Theta' : v}
    \end{array}
\]}
    \caption{Natural Semantics of $\lambda^\pi_\D$\label{fig:opsem}}
  \end{figure}

  The natural semantics is defined via the relation
$\Theta : e \Downarrow \Theta' : v$ where $e$ is an expression, $v$ a
value, and $\Theta$ and $\Theta'$ are evaluation environments with bindings of
the form $x:_\omega\sigma = e$ which assigns the expression $e$ to the variable
$x$ of the given type $\s$. The rules are standard, augmenting the runtime
environment with let-bound expressions which are evaluated (only once) if the
corresponding let-bound variable is forced.

While the natural semantics is a mostly direct model of an implementation
of lazy evaluation, it is inconvenient for formal reasoning about
well-typed linear terms.
We follow an approach
similar to that of \cite{cite:linearhaskell,10.1145/3158093}, defining
an \emph{instrumented} operational semantics in which linear variables are
erased from 
the runtime environment once they are forced, ensuring that any term violating linearity
will result in a stuck state. The semantics is also type-aware and
carries with it sufficient data to reconstruct typing derivations for
the purposes of showing type safety.

The instrumented semantics is presented in
Figure~\ref{fig:linopsem}, defining the judgment $\Gamma ; \Delta \vdash (\Theta \mid e) \Downarrow (\Theta'
\mid v) : \sigma , \Sigma$ where $\Gamma$ and $\Delta$ are typing
contexts for expressions in $\Sigma$
($\Gamma$ tracks unrestricted and $\Delta$-bound variables, $\Delta$ tracks linear
assumptions);
$\Theta$ and $\Theta'$ are evaluation environments, consisting of bindings of the form
$x :_\pi \sigma = e$ or $x :_\Delta \sigma = e$; $e$ is the expression being evaluated and $v$
its resulting value, both of type $\sigma$; $\Sigma$ is a list of assignments of the form
$e : \tau$ which are expressions in which
bindings in $\Theta$ may also be used. This additional data is needed to
inductively show that the overall evaluation state is well-typed (and
linearity preserving). The rules mostly mirror those of the natural
semantics, with the exception of the linear variable rule which erases
the binding from the execution environment. Both semantics define
evaluation relations of the form $a \Downarrow b$, where $a$ and $b$
are evaluation states. We also rely on a notion
of partial derivations and partial evaluation (see~Appendix), written $a\Downarrow^* b$, which allows us
to state and reason about a progress property.

\begin{figure}
  {\small
\[
  \begin{array}{c}
    \infer[]
    {\,}
    {\Gamma;\Delta \vdash (\Theta \mid \Lambda p.e) \Downarrow (\Theta \mid
    \Lambda p.e) : \forall p . \sigma , \Sigma }
    \\[1.5em]
    \infer[]
    {\Gamma ; \Delta \vdash (\Theta \mid  e) \Downarrow (\Theta' \mid
    \Lambda p.e') : \forall p . \sigma , \Sigma \quad \Gamma ;\Delta \vdash
      (\Theta' \mid e'[\pi/p]) \Downarrow (\Theta'' \mid v) : \sigma[\pi/p], \Sigma}
    {\Gamma ; \Delta \vdash (\Theta \mid e~\pi) \Downarrow (\Theta''
      \mid v) : \sigma[\pi/p] , \Sigma  }\\[1.5em]
    \infer[]
    {\,}
    {\Gamma ;\Delta \vdash (\Theta \mid \lambda x:_\pi \sigma . e)
    \Downarrow (\Theta \mid \lambda x:_\pi\sigma .e) : \sigma
    \rightarrow_\pi \tau,\Sigma }\\[1.5em]
    \infer[]
    {\Gamma;\Delta \vdash  (\Theta \mid e) \Downarrow (\Theta' \mid
    \lambda y:_1 \sigma . e') : \sigma \rightarrow_1 \tau , x:\sigma, \Sigma
    \quad
    \Gamma ; \Delta \vdash (\Theta' , y:_{1} \sigma = x \mid e') \Downarrow (\Theta''
    \mid v) : \tau , \Sigma  }
    {\Gamma ; \Delta\vdash (\Theta \mid  e~x ) \Downarrow (\Theta''
    \mid v) : \tau , \Sigma }\\[1.5em]
           \infer[]
    {\Gamma;\Delta \vdash  (\Theta \mid e) \Downarrow (\Theta' \mid
    \lambda y:_\omega \sigma . e') : \sigma \rightarrow_\omega \tau , x:\sigma, \Sigma
    \quad
            \Gamma ; \Delta \vdash (\Theta' \mid e'[x/y]) \Downarrow (\Theta''
    \mid v) : \tau , \Sigma  }
    {\Gamma ; \Delta\vdash (\Theta \mid  e~x ) \Downarrow (\Theta''
    \mid v) : \tau , \Sigma }\\[1.5em]
     
       \infer[]
    {\Gamma ; \Delta  \vdash (\Theta  , x:_{\Delta'} \sigma = e\mid e)
    \Downarrow (\Theta' \mid v) : \sigma , \Sigma \and \D'' = \D' \setminus (\Theta \setminus \Theta')}
    {\Gamma ; \Delta \vdash (\Theta , x:_{\Delta'} \sigma = e \mid x)
      \Downarrow (\Theta' , x:_{\Delta''} \sigma = v \mid v) : \sigma,
    \Sigma }
    \\[1.5em]
           \infer[]
    {\Gamma ;\Delta  \vdash (\Theta \mid e)
    \Downarrow (\Theta' \mid v) : \sigma,\Sigma}
    {\Gamma ; \Delta \vdash (\Theta , x:_1 \sigma = e \mid x)
    \Downarrow (\Theta' \mid v) : \sigma,\Sigma}
    \\[1.5em]
    
    \infer[]
    {\Gamma ;\Delta \vdash (\Theta,x:_{\Delta'} \sigma = e \mid e')
    \Downarrow (\Theta' \mid v) : \tau,\Sigma}
    {\Gamma ; \Delta \vdash (\Theta \mid \llet{x{:}_{\Delta'}\sigma =
    e}{e'})\Downarrow (\Theta' \mid v) : \tau , \Sigma
    }\\[1.5em]
    
    \infer[]
    {\,}
    {\Gamma ; \Delta \vdash (\Theta \mid K~\ov{x_i}) \Downarrow
    (\Theta \mid K~\ov{x_i}) : T , \Sigma}
    \\[1.5em]
    
    \infer[]
    { K : \ov{ \sigma_i
    \rightarrow_\omega \sigma_j \lolli} T
    \\
    \Gamma, z{:}_{\ov{y_j}}T, \ov{y_i :_\omega \sigma_i} ;\Delta , \ov{y_j :_1 \sigma_j} \vdash (\Theta \mid e) \Downarrow (\Theta' \mid
      K~\ov{x_i}) : T , e' : \tau ,\Sigma \\
    \Gamma ;\Delta \vdash (\Theta' \mid e'\ov{[x_i/y_{i,j}]}[K~\ov{x_i}/z]
    ) \Downarrow (\Theta'' \mid v) :\tau , \Sigma }
    {\Gamma ;\Delta \vdash (\Theta \mid
    \ccase{e}{z{:}_{\Delta'}T~\{{\dots,K~\ov{y_{i,j}} \Rightarrow e', {\dots} }\}})
    \Downarrow (\Theta'' \mid  v) : \tau , \Sigma}\\[1.5em]

    \infer[]
    {K : \ov{ \sigma_i
    \rightarrow_\omega \sigma_j \lolli} T\\
    \Gamma ; \Delta, z{:}_1\s \vdash (\Theta \mid e) \Downarrow (\Theta' \mid
    K~\ov{x_i}) : T , e', \Sigma \quad
    \Gamma ;\Delta \vdash (\Theta' \mid  e'[K~\ov{x_i}/z])
    \Downarrow (\Theta'' \mid v) : \tau,\Sigma }
    {\Gamma ; \Delta \vdash (\Theta \mid 
    \ccase{e}{z{:}_{\Delta'}T~\{\dots , \_\Rightarrow
    e'}\}) \Downarrow (\Theta'' \mid v) : \tau}
    \\[1.5em]
    \infer[]
    {\Gamma ; \Delta \vdash (\Theta, \overline{x_i:_{\Delta'} \sigma_i =
    e_i} \mid e' )\Downarrow (\Theta'
    \mid v) : \tau
    }
    {\Gamma ;\Delta \vdash (\Theta \mid \lletrec{\overline{x_i{:}_{\Delta'}\sigma_i = e_i}}{e'})
    \Downarrow (\Theta' \mid v) : \tau}

   \end{array}
 \]}
 \caption{Instrumented Operational Semantics~\label{fig:linopsem}}
  
\end{figure}

\subsection{Type Safety}\label{sec:typesafe}

Our results rely on a series of lemmas relating linear, unrestricted and
$\Delta$-bound variables, as well as a lemma characterizing irrelevant
resources.

\renewcommand{\DeltaLinearRelationLemma}{
\begin{restatable}[$\Delta$-bound to Linear]{lemma}{deltaone}\label{lem:deltaone}
If $\G,\x[\irr{\D}]; \irr{\D},\D' \vdash e : \s$ 
then $\G[x/\irr{\D}]; \D',\xl \vdash e :\s$.
\end{restatable}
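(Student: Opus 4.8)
The plan is to proceed by induction on the derivation of $\G,\x[\irr{\D}]; \irr{\D},\D' \vdash e : \s$. The organizing intuition is that an irrelevant bundle $\irr{\D}$ is inert for the direct variable rules: neither $\textsc{Var}_1$ nor $\textsc{Var}_p$ can fire on a resource inside $\irr{\D}$, so the only way any part of $\irr{\D}$ is ever consumed is \emph{indirectly}, through the usage environment of some $\D$-bound variable. The principal such variable is $x$ itself, whose usage environment is exactly $\irr{\D}$; the others are pattern variables carrying tagged fragments $\lctag{\irr{\D}}{K_i}$ that arise from nested case alternatives on non-WHNF scrutinees. The transformation $\subst{\G}{x}{\irr{\D}}$, together with the replacement of $\irr{\D}$ in the linear context by $\xl$, is thus a single consistent renaming of this entire bundle into one linear token $\xl$, applied uniformly at every interface through which $\irr{\D}$ could be consumed. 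Soundness of the renaming rests on the fact that $\irr{\D}$ is still linear, so each of its resources must be consumed exactly once and, being irrelevant, must flow to whichever sub-derivation consumes it via a $\D$-bound variable.

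First I would discharge the base cases. When $e = x$ the derivation ends in $\textsc{Var}_{\D}$, which forces the linear context to equal $\irr{\D}$ and hence $\D' = \cdot$; this is replayed by $\textsc{Var}_1$ on $\xl$. When $e$ is a different $\D$-bound variable $w$ whose usage environment mentions $\irr{\D}$ (or one of its tagged fragments), the substitution rewrites that environment to mention $x$ (resp.\ $\lctag{\xl}{K_i}$) and, as the linear context is rewritten in lockstep, $\textsc{Var}_{\D}$ still applies; occurrences of genuine $\D'$-resources and of unrestricted variables are untouched. For the structural rules --- $\lambda$-abstraction, application, the (recursive) let rules, and both case rules --- the linear context is partitioned across the sub-derivations; since $\irr{\D}$ is irrelevant, each of its resources lands in exactly the sub-derivation that eventually consumes it, so the partition is compatible with pushing $\xl$ (or the relevant tagged fragment of it) into that same sub-derivation, letting the induction hypothesis fire there while the sibling derivations are carried over unchanged.

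The hard part will be the interaction with $\textsc{Split}$ and the fact that $\irr{\D}$ need not be consumed atomically through $x$: a nested non-WHNF case may instead consume $\irr{\D}$ jointly through its tagged pattern variables, whose usage environments are the fragments $\lctag{\irr{\D}}{K_i}$ obtained by $\textsc{Split}$-ing each resource of $\irr{\D}$ (recall $\textsc{Split}$ applies to irrelevant resources as well). To accommodate this I expect to strengthen the induction hypothesis so that it speaks not of the full bundle $\irr{\D}$ alone but of an arbitrary tagged sub-bundle, permitting $x$ to appear in correspondingly split form. Each $\textsc{Split}$ that fragments the resources of $\irr{\D}$ is then replayed by a single $\textsc{Split}$ of $\xl$ into $\ov{\xt{K_i}}$ --- a many-to-one correspondence, since the renaming collapses the whole bundle into one token --- and the disjointness and forced joint-consumption enforced by tagging are preserved because renaming $\irr{\D}\mapsto x$ commutes with tagging. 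Checking that the fragments of $\irr{\D}$ and those of $\xl$ remain aligned through every $\textsc{Split}$ and every context partition is the delicate bookkeeping at the heart of the argument; once it is established, the remaining rule-by-rule cases are routine and the $\textsc{Var}_{\D}$/$\textsc{Var}_1$ correspondence closes the induction.
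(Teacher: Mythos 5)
Your proposal is correct and follows essentially the same route as the paper: the paper's proof (the appendix version of this lemma) is exactly a structural induction that reconstructs the given derivation, replacing each application of $\textsc{Var}_\Delta$ on $x$ by $\textsc{Var}_1$, stated mutually with a second statement for the case-alternative judgement so that the induction passes through the case rules. The $\textsc{Split}$/tagged-fragment bookkeeping you single out as the hard part is a genuine feature of the system, but the paper leaves it entirely implicit (its proof is literally the one line ``straightforward induction\dots applying rule $\textsc{Var}_1$ where rule $\textsc{Var}_\Delta$ was applied''), relying on the substitution $\Gamma[x/[\Delta]]$ to absorb occurrences of $[\Delta]$ and its fragments in other usage environments; your strengthened induction hypothesis is a reasonable way to make that step explicit rather than a departure from the paper's argument.
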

}

\renewcommand{\LinearDeltaRelationLemma}{
  \begin{restatable}[Linear to $\Delta$-bound]{lemma}{onedelta}\label{lem:onedelta}
If $\G; \D',\xl \vdash e :\s$
then $\G[\D/x],\xD; \D,\D' \vdash e : \s$ ($\Delta$ fresh).
\end{restatable}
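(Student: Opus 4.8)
The plan is to proceed by induction on the derivation of $\G; \D', x{:}_1\tau \vdash e : \s$, re-routing the direct linear consumption of $x$ through a fresh $\Delta$-bound variable with usage environment $\D$. The guiding intuition---dual to the converse \Cref{lem:deltaone}---is that the operation on contexts is a single substitution $[\D/x]$: every position that counted the linear resource $x$ (either its occurrence in the linear context, or its appearance inside the usage environment of some $\Delta$-bound variable of $\G$) now counts $\D$ instead, while $x$ itself is reintroduced in the unrestricted context as $x{:}_\D\tau$. Since (Let), (LetRec), and the case rules delegate to the alternative judgement, I would strengthen the statement to a mutually-inductive pair, adding the analogous claim for $\G; \D', x{:}_1\tau \vdash \rho \Rightarrow e :^z_{\D_s} \s$, mirroring the two-part structure of the linear substitution lemma.

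The variable rules anchor the induction. If $e = x$, the premise comes from (Var$_1$), which forces $\D' = \cdot$, so the goal $\G[\D/x], x{:}_\D\tau; \D \vdash x : \tau$ is precisely an instance of (Var$_\Delta$), whose side condition asks that the linear context coincide with $x$'s usage environment $\D$. The other variable rules are immediate: (Var$_\omega$) cannot have applied, as the linear context is nonempty; and for a $\Delta$-bound $y \neq x$ typed by (Var$_\Delta$), its usage environment equals the entire linear context $\D', x$, so applying $[\D/x]$ to both $y$'s annotation and the context preserves the side condition verbatim.

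For the structural cases I track the unique occurrence of $x$. In ($\lambda E_1$) the context $\D', x$ splits as $\Delta_1, \Delta_2$; since $x$ is linear it lies in exactly one part, say $\Delta_1$, so I apply the induction hypothesis to that premise---introducing $\D$ there---and leave the other untouched, noting it cannot mention any $\Delta$-bound variable whose usage environment contains $x$ (such a use would demand $x$ in $\Delta_2$), so $\G[\D/x]$ does not disturb it. The (Let) and (LetRec) rules proceed identically, pushing $\D$ into whichever of the bound expression or the continuation actually consumes $x$ and relying on freshness of $\D$ to avoid capture with the bound usage environments; the abstraction rules are routine.

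The principal difficulty is the case rules and their alternatives, where resources are not merely split but also made irrelevant and tagged. When $x$ is consumed in the scrutinee (so $x \in \D_s$), it flows---via (Case$_\textrm{WHNF}$) or (Case$_\textrm{Not WHNF}$)---into the case binder's usage environment and into the alternatives under both the plain and irrelevant liftings, so I must check that $[\D/x]$ commutes with the bracket $\irr{\cdot}$ and, through (Split) and the (AltN) rules, with tagging, i.e. that $\lctag{\irr{\D_s}}{K_i}[\D/x] = \lctag{\irr{\D_s[\D/x]}}{K_i}$. Granting these commutations (which hold because bracketing and tagging act pointwise and $\D$ is fresh, so no tagged or bracketed copy of $\D$ pre-exists), the output $\ov{\D_i}$ of $\G; \D_s \Vdash e : \tau \gtrdot \ov{\D_i}$ simply has $x$ replaced by $\D$ in its relevant component, and the branch premises follow from the mutually-inductive hypothesis; the subcase where $x$ is consumed inside an alternative body is subsumed by that same hypothesis. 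I expect verifying these commutation properties for tagged and irrelevant resources to be the main obstacle, since that is where the usage-environment bookkeeping is most delicate, whereas the rest reduces to routing the fresh bundle $\D$ to the single sub-derivation that consumed $x$.
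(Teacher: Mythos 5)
Your proposal is correct and follows essentially the same route as the paper: the paper also strengthens the statement to a mutually-inductive pair with the alternative judgement and proves it by a straightforward induction on the derivation, reconstructing it verbatim except that rule (Var$_\Delta$) is applied wherever (Var$_1$) was applied to $x$. Your write-up simply fills in the bookkeeping (context splitting, the $\G[\D/x]$ action on usage environments, and commutation with irrelevance brackets and tags) that the paper's terse proof leaves implicit.
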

}

\renewcommand{\DeltaUnrestrictedRelationLemma}{
\begin{restatable}[Unrestricted and $\Delta$-bound]{lemma}{undelta}\label{lem:undelta}
$\G,\xo; \D \vdash e : \s$ iff $\G,\x[\ROUNDTWO{\emptyset}]; \D \vdash e : \s$
\end{restatable}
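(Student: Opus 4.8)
The plan is to prove both implications simultaneously by induction on the typing derivation, first strengthening the statement to range over all three mutually defined judgment forms of the system: the main judgment $\G;\D\vdash e:\s$, the alternative judgment $\G;\D\vdash\rho\Rightarrow e:^z_{\D_s}\s$, and the auxiliary WHNF judgment $\G;\D\Vdash e:\s\gtrdot\ov{\D_i}$. Throughout, I treat the unrestricted context $\Gamma$ as unordered, so the position of the binding for $x$ is immaterial and the induction hypothesis can be applied regardless of where $x$ sits. The single observation doing any real work is that the two variable rules that can type a bare occurrence of $x$ — namely $Var_\omega$ applied to $x{:}_\omega\tau$ and $Var_\Delta$ applied to $x{:}_\emptyset\tau$ — carry exactly the same side condition, that the linear context be empty: the premise $\Delta=\Delta'$ of $Var_\Delta$ degenerates to $\Delta'=\emptyset$ when the usage environment is $\emptyset$, which is precisely the linear context demanded by $Var_\omega$.

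In the base case $e=x$, I convert between the two derivations by swapping $Var_\omega$ for $Var_\Delta$ at usage environment $\emptyset$ in the forward direction, and the reverse swap in the backward direction; both replacements are sound because the shared hypothesis (empty $\D$) is identical. This case analysis is exhaustive: since $x{:}_\emptyset\tau$ is a $\Delta$-bound variable, $Var_\Delta$ is the \emph{only} rule that can type its bare occurrence, and symmetrically $Var_\omega$ is the only rule applicable to $x{:}_\omega\tau$. For every compound rule, the binding for $x$ is carried unchanged into each premise — possibly alongside freshly introduced bindings and under the same splitting of the linear context — and both $x{:}_\omega\tau$ and $x{:}_\emptyset\tau$ are unrestricted, hence freely weakened and contracted into subderivations (so repeated uses of $x$ pose no difficulty). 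I therefore apply the induction hypothesis to each subderivation and reassemble with the same rule.

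The only rules meriting care are those that rewrite usage environments: $Let$ and $LetRec$ introduce new $\Delta$-bound variables, while $Case_\textrm{WHNF}$, $Case_\textrm{Not WHNF}$, and $Alt0$ modify the usage environment of the case binder $z$ and the linear context — for instance the substitution $\Gamma[\cdot/\Delta_s]_z$ and the context rewrite $\D[\cdot/\Delta_s]$ in $Alt0$. The key point is that every such operation targets bound variables ($z$ and the pattern variables) that are fresh, and thus distinct from $x$; consequently the annotation on $x$ — whether $\omega$ or the empty usage environment — is left untouched by each substitution, and $x{:}_\omega\tau$ is by convention identified with the empty usage environment under these rewrites, so $\Gamma[\cdot/\Delta_s]_z$ commutes past the binding for $x$ in either form. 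Hence the induction hypothesis applies verbatim to the premises and the conclusion reassembles with the same rule.

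I expect the main obstacle to be purely bookkeeping rather than genuine mathematical content: one must check, rule by rule, that no usage-environment rewrite collides with the name $x$ (ruled out by the freshness conventions on binders) and that the mutual induction over the main, alternative, and WHNF judgments is arranged so each appeal to the induction hypothesis is at a strictly smaller derivation. Once the statement is generalized to all three judgments and $x$ is kept disjoint from every binder, the argument is a routine structural induction whose sole nontrivial step is the coincidence of side conditions in $Var_\omega$ and $Var_\Delta$ at the empty usage environment.
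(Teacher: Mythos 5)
Your proposal is correct and takes essentially the same route as the paper: the paper's proof (Lemma~\ref{lem:undelta-app}) is likewise a straightforward structural induction, generalized to the case-alternative judgment, whose only substantive observation is that rules $\mathit{Var}_\omega$ and $\mathit{Var}_\Delta$ coincide when the usage environment is empty. Your extra care in folding the auxiliary WHNF judgment into the mutual induction and in checking that the usage-environment rewrites in $\mathit{Let}$, $\mathit{Case}$, and $\mathit{Alt0}$ never touch the binding for $x$ is just an explicit spelling-out of what the paper calls ``straightforward.''
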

}

\paragraph{$\Delta$-bound variables} A well-typed program with a linear variable ($\xl$) is equivalently
well-typed if the linear variable were instead $\D$-bound ($\xD$) with
usage environment $\D$, with $\D$ available in the linear context and
\ROUNDTWO{with any occurrences of $x$ in usage environments in $\G$
  expanded into $x$'s usage environment $\Delta$, with the abuse of notation $\Gamma[\Delta/x]$}.

\LinearDeltaRelationLemma

\noindent Dually, any $\Delta$-bound variable with an irrelevant usage
environment can be made into a linear variable as long as its
(irrelevant) environment is dropped from the linear context
\ROUNDTWO{and all the occurrences of the set of variables $[\D]$ in
  usage environments in $\Gamma$ are collapsed to $x$, written $\Gamma[x/[\Delta]]$}.

\DeltaLinearRelationLemma

\noindent Finally, unrestricted and $\D$-bound
variables with an empty usage environment are equivalent.

\DeltaUnrestrictedRelationLemma

\paragraph{Irrelevance}
Irrelevant resources \ROUNDTWO{(Section~\ref{sec:irrev})} can only be consumed
indirectly, essentially encoding that the non-WHNF case scrutinee
resources must be consumed through the case binder or the linear
pattern-bound variables.
As a case expression is evaluated, the scrutinee will eventually
reduce to WHNF,
which must then be typed with rule $Case_{\textrm{WHNF}}$.
Crucially, the two case rules must be in harmony in the system, in the sense that
case expressions typed using the $Case_{\textrm{Not WHNF}}$ rule must also be
well-typed by the $Case_{\textrm{WHNF}}$ once the scrutinee is evaluated to WHNF.

\WHNFConvSoundness

The \emph{Irrelevance} lemma \ROUNDTWO{shows that if a case
  alternative is well-typed with irrelevant resources, then it is
  well-typed with arbitrary resources.}
However, typing
a case alternative with irrelevant resources is not complete wrt using
arbitrary resources -- a counter example needs only to use a resource
directly.

We prove type safety of the Linear Core system via the standard type
preservation and progress results. As is customary, we make use of multiple
substitution lemmas, one for each kind of variable: unrestricted variables
$\xo$, linear variables $\xl$, and $\D$-bound variables $\xD$. The
development is reported in the Appendix, where type safety is established directly for the
instrumented semantics and ported to the natural semantics via a
bisimulation argument.

\ROUNDTWO{We adopt the standard semantic technique of dealing with badly-formed
terms, such as trying to add two values that are not numbers, by
having such terms be stuck. In our setting, linear variables used more
than once result in stuck states. Thus,} progress ensures that linear
functions consume their arguments at most once if their result is
consumed exactly once.
Type preservation ensures that evaluation of a
closed term returns an evaluation environment with no linear variables.

  \begin{theorem}[Type Preservation]
    For any well-typed evaluation state $a$, if $a \Downarrow
    b$ or $a \Downarrow^* b$ then $b$ is well-typed.
  \end{theorem}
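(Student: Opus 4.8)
The plan is to prove the statement by induction on the derivation of the big-step evaluation $a \Downarrow b$, with the partial case $a \Downarrow^* b$ handled by the same induction extended to derivations that may terminate at an unfinished premise. Before the induction, I would pin down what it means for an evaluation state $(\Theta \mid e)$ to be well-typed at a type $\s$ with auxiliary assignments $\Sigma$ under contexts $\G; \D$: the heap $\Theta$ must be well-formed, meaning every binding $\xr{\pi}{\sigma_x} = e_x$ or $\xr{\D_x}{\sigma_x} = e_x$ has $e_x$ typeable at $\sigma_x$ using exactly the resources dictated by its multiplicity or usage annotation, and the expression $e$ together with the expressions recorded in $\Sigma$ must be jointly typeable with the linear context $\D$ partitioned among them. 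The essential invariant to carry through the induction is the correspondence between the linear fragment of $\D$ and the live linear bindings $\xr{1}{\sigma_x} = e_x$ in $\Theta$: forcing a linear variable removes it from both simultaneously.

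The inductive cases follow the structure of the evaluation rules. The value rules (for $\Lambda p.~e$, $\lambda \x[\pi].~e$, and $K~\ov{x_i}$) are immediate, since $b = a$ and neither heap nor contexts change. For the application rules ($e~x$ and $e~\pi$) I would apply the induction hypothesis to the two premises in sequence, threading the intermediate state, and discharge the substitutions ($e'[x/y]$, $e'[\pi/p]$) using the substitution lemmas established earlier -- unrestricted substitution for the $\omega$-ary application and the corresponding $\beta$- and multiplicity-substitution results for the linear and polymorphic cases. The $\llet{}{}$ and $\lletrec{}{}$ rules extend $\Theta$ with the new bindings; here I would show the extended heap stays well-formed by reading off the usage-environment annotations from the (Let) and (LetRec) typing rules, which is exactly what makes the bound expressions typeable in the enlarged context.

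The subtle cases are variable lookup and case analysis, which I expect to be the main obstacle. In the linear-variable rule the binding is erased from the heap; the induction hypothesis gives well-typedness of the result, and erasure preserves the invariant precisely because the linear resource has just been consumed. In the $\Delta$-bound variable rule the binding is rewritten from $\xr{\D'}{\sigma} = e$ to $\xr{\D''}{\sigma} = v$ with $\D'' = \D' \setminus (\Theta \setminus \Theta')$; justifying that this updated usage environment still faithfully records the resources remaining after the shared sub-evaluation consumed $\Theta \setminus \Theta'$ requires an accounting lemma describing how $\Theta$ grows and shrinks during evaluation, which is the most delicate bookkeeping in the proof. Finally, in the case rules the scrutinee is first evaluated to a concrete WHNF value $K~\ov{x_i}$ (or a $\lambda$), and the continuation is typed by substituting the pattern variables and the case binder; the branch, however, was originally typed under the $(\textrm{Case}_{\textrm{Not WHNF}})$ rule using \emph{irrelevant} resources and the $\Rightarrow_{\mathsf{NWHNF}}$ alternative judgement. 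Reconciling that irrelevant-resource typing with the concrete resources now delivered by the evaluated scrutinee is exactly the content of the Irrelevance lemma (Lemma~\ref{lem:irrev}), so this case hinges on invoking it together with the substitution lemmas. Assembling these pieces, every rule preserves well-typedness, closing the induction.
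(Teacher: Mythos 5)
Your proposal is correct and follows essentially the same route as the paper: a structural induction on the instrumented evaluation derivation (including partial derivations), with value cases trivial, applications and case continuations discharged by the substitution lemmas, and the non-WHNF case branches reconciled with the now-evaluated scrutinee via the irrelevance machinery (the paper's proof invokes Lemma~\ref{lem:deltaone}, $\Delta$-bound to Linear, from which Lemma~\ref{lem:irrev} is itself proved, so your appeal to Irrelevance plays the same role). The only packaging differences are that the paper defines well-typedness of a state by typing the expanded term $[\Theta \mid e]$ and moves between that and your heap-typing invariant via Lemma~\ref{lem:unwrapenv}, and that at a forced $\Delta$-bound heap binding it simply reads the updated usage environment off the typing of the resulting value rather than proving the explicit $\Delta'' = \Delta' \setminus (\Theta \setminus \Theta')$ accounting lemma you rightly flag as the delicate bookkeeping step.
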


\begin{theorem}[Progress]
Let $a$ be a well-typed evaluation state. For any partial evaluation $a \Downarrow^* b$ 
the evaluation can be extended.
\end{theorem}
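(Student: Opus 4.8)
The plan is to prove progress by induction on the partial evaluation derivation $a \Downarrow^* b$, analyzing the expression at the frontier of the partial derivation that remains to be evaluated. A partial evaluation leaves some state $(\Theta \mid e)$ whose evaluation has not yet been completed; the goal is to show that, provided this frontier state is well-typed, there is always a rule of the instrumented semantics that applies to it, so the derivation can be extended by one step. By the Type Preservation theorem, which applies to both $\Downarrow$ and $\Downarrow^*$, every state reached along a partial evaluation is itself well-typed, so I may assume the frontier state is well-typed throughout and extend freely.

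First I would set up the central invariant relating the typing contexts to the runtime environment: in a well-typed state $\Gamma; \Delta \vdash (\Theta \mid e) \Downarrow \cdots$, every linear assumption $x:_1\sigma$ recorded in $\Delta$ corresponds to a binding $x:_1\sigma = e_x$ that is still present (not yet erased) in $\Theta$, and dually every unrestricted or $\Delta$-bound assumption in $\Gamma$ has a matching binding in $\Theta$. This invariant is precisely what the instrumented semantics maintains: the linear-variable rule removes the binding from $\Theta$ as it is forced, so a second lookup of the same linear variable would find nothing and the state would be stuck. The invariant guarantees this stuck configuration never arises for well-typed states.

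The core of the proof is a case analysis on the shape of the frontier expression $e$. If $e$ is a value -- a multiplicity abstraction $\Lambda p.~e'$, a $\lambda$-abstraction, or a fully-applied constructor $K~\ov{x_i}$ -- the corresponding leaf of the derivation is already complete. For the elimination forms (multiplicity application $e_1~\pi$, term application $e_1~x$, and case $\ccase{e_s}{\cdots}$), well-typedness forces $e_1$ (resp. $e_s$) to have the appropriate type, so by the induction hypothesis its sub-evaluation can be extended until it reaches a value of the right shape, at which point the outer rule fires and continues. For $\mathbf{let}$ and $\mathbf{let~rec}$, the rule applies unconditionally by extending $\Theta$ with the bound expressions. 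The delicate case is the variable $e = x$: here I appeal to the invariant. If $x$ is unrestricted or $\Delta$-bound, its binding is present and the non-erasing rule applies; if $x$ is linear, the invariant guarantees the binding $x:_1\sigma = e_x \in \Theta$ is still available, so the erasing linear-variable rule applies and the derivation extends.

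I expect the main obstacle to be the variable case, and specifically threading the heap/context invariant soundly through case expressions. Case evaluation introduces pattern-bound and case-bound variables into both $\Gamma$ and $\Theta$ with their tagged and irrelevant usage environments, and the semantics erases linear scrutinee resources upon forcing; I must check that the invariant survives the transition from the $Case_{\mathrm{Not WHNF}}$ rule, where scrutinee resources are irrelevant, to the $Case_{\mathrm{WHNF}}$ rule once the scrutinee reduces to a constructor, using the Irrelevance lemma (Lemma~\ref{lem:irrev}) to retype the alternative with the concrete resources produced by evaluation. Establishing that no linear resource is dropped or duplicated across this retyping -- and hence that no linear variable is ever looked up after erasure -- is where the substantive work lies; the remaining cases follow the standard progress pattern.
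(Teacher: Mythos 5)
Your proposal is correct and follows essentially the same route as the paper: the frontier (head) of any partial derivation is well-typed by Type Preservation over $\Downarrow^*$, every non-variable frontier expression is handled by a standard rule application, and the only delicate case --- a variable lookup --- is discharged because a state $\Gamma;\Delta \vdash \Theta \mid x : \tau,\Sigma$ with $x$ absent from $\Theta$ is not well-typed and hence can never be the head of a partial derivation, so the (erasing) variable rules always apply. The extra machinery you anticipate needing --- threading a heap/context invariant through the case rules via the Irrelevance lemma --- is work the paper locates entirely inside the Type Preservation proof, which you are already assuming; for progress itself nothing beyond the well-typedness of the frontier state is required.
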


\subsection{Optimisations preserve linearity\label{sec:optimisations-preserve-types-meta}}
A main goal of the Linear Core type system is 
to serve as a typed intermediate representation for optimising
compilers of lazy languages with linearity. In light of this,
we show that multiple optimising transformations are type
preserving in Linear Core. 

\begin{figure}[t]

  {\small
  \[
    \begin{array}{ll}
   (\llet{\xD = e}{e'} )  \Longrightarrow \llet{\xD =       e}{e'[e/x]} &                    \mbox{Inlining}\\[1ex]
       (\lambda \x[\pi][\s].~e)~e'  \Longrightarrow e[e'/x]  &                \mbox{$\beta$-reduction}\\[1ex]
      (\ccase{(K~\ov{e})}{\z[\D][\s]~\{\overline{\rho_j\to e_j}, K~\ov{x} \to e_i, \overline{\rho_k\to e_k}\}} )\Longrightarrow  e_i\ov{[e/x]}[K~\ov{e}/z] &
                                  \mbox{Case of known constr.}\\[1.5ex]
      (\ccase{(\ccase{e_c}{\z[\D]~\{\ov{\rho_{c_i}\to e_{c_i}}\}})}{\var[w][\irr{\D,\D'}][\s']~\{\ov{\rho_i\to e_i}\}})\\
            \Longrightarrow 
\ccase{e_c}{\z[\D]~\{\ov{\rho_{c_i} \to \ccase{e_{c_i}}{w~\{\ov{\rho_i\to e_i}\}}}\}}
      & \mbox{Case-of-case}\\[1.5ex]
      (\lambda \y[\pi].~\llet{\xD = e}{e'})
      \Longrightarrow
      \llet{\xD=e}{\lambda\y[\pi].~e'} & \mbox{Commuting let-$\lambda$}\\[1ex]
      ((\llet{v = e}{b})~a) \Longrightarrow \llet{v = e}{b~a} &\mbox{Commuting let-app.}\\[1ex]
      (\ccase{(\llet{v = e}{b})}{\{\overline{\rho\to e'}\}}) \Longrightarrow
\llet{v =  e}{\ccase{b}{\{\overline{\rho\to e'}\}}}
                               &
                                 \mbox{Commuting let-case}\\[1ex]
      (\ccase{e}{\{\overline{\rho_j\to e_j}, \rho \to E[e_1],
      \overline{\rho_k\to e_k}\}} ) \qquad (\mbox{$x$ fresh}) \\[1.5ex]
\Longrightarrow \llet{x = e_1}{\ccase{e}{\{\overline{\rho_j\to e_j}, \rho \to E[x], \overline{\rho_k\to e_k}\}}} 
                                                                        & \mbox{Commuting case-let}\\[1.5ex]
      (\llet{x = (\llet{v = e}{b})}{c}) \Longrightarrow \llet{v = e}{\llet{x = b}{c}}
                                                                      & \mbox{Commuting let-let}\\[1ex]
      f \Longrightarrow \lambda x.(f~x) & \mbox{$\eta$-expansion}\\[1ex]
      \lambda x.(f~x) \Longrightarrow f & \mbox{$\eta$-reduction}\\[1ex]
      (\ccase{x}{z~\{\ov{\rho_i\to e_i}\}}) \Longrightarrow \ccase{x}{z~\{\ov{\rho_i\to  e_i[z/x]}\}}
                                                                      & \mbox{Binder swap}\\[1ex]
\end{array}
  \]}
  \caption{Optimising transformations validated by Linear
    Core\label{fig:opttrans} (for commuting case-let we
      require $e_1$ not to capture the case binder, variables bound by the
      pattern $\rho$ and the context $E[{-}]$ -- see Appendix for
    details).} 
\end{figure}

We describe Core-to-Core transformations as $e_1 \Longrightarrow e_2$,
where $e_1$ is an arbitrary \emph{well-typed} expression
matching a certain shape (or of a certain type), which is
transformed into expression $e_2$. Validating a transformation
entails showing that $e_2$ is well-typed. We often annotate the arrow
with the name of the transformation when describing chains of such
transformations.

The list of transformations we have
validated is given in Figure~\ref{fig:opttrans}.
The behaviour of most transformations can be inferred from its name:
Inlining substitutes a let-bound expression in the let body; $\beta$-reduction
and case of known constructor effectively perform evaluation of the given expression;
case-of-case commutes cases in scrutinee position inside the outer case alternatives;
the five commuting conversions allow let-bound expressions to be commuted with other
Core constructs; $\eta$-expansion and reduction apply the $\eta$-laws to terms of function type;
finally, binder swap substitutes occurrences of a scrutinee variable for the
case binder. While many of the transformations above are not obviously optimising,
they are chained in the compiler pipeline to expose optimisation
opportunities.

\paragraph{Reverse Binder Swap\label{sec:reverse-binder-swap-considered-harmful}}

Reverse binder swap is dual to the binder swap transformation, substituting
occurrences of the case binder $z$ by the scrutinee in the special case when
the scrutinee is a variable $x$.
By using the scrutinee $x$ instead of the case binder, we might be able to
float out expressions from the alternative using the case binder:
\[
    \ccase{x}{z~\{\rho \rightarrow e\}} \Rightarrow \ccase{x}{z~\{\rho \rightarrow e[x/z]\}}
\]

Even though GHC applies the reverse binder swap transformation in the
Core-to-Core passes, this optimisation violates linearity in Linear Core
because the transformed program is rejected -- the scrutinee $x$, a single
variable, is not considered to be in WHNF by our system, thus occurrences of
$x$ in any alternative are ill-typed in it. This is a conservative choice in $\lambda^\pi_\Delta$ for
typing the subtle case where the scrutinee is a single variable.

Consider $\ccase{x}{\_ \to x}$. Perhaps surprisingly, this expression can be
considered semantically linear under call-by-need but violates
linearity under call-by-name:
Under call-by-need, if $x$ refers to an unevaluated expression, scrutinizing it forces the expression
bound by $x$ to WHNF. In a subsequent use of
$x$ in an alternative, $x$ refers to the already evaluated scrutinee in WHNF.
Since $x$ is just another name (like the case binder) for the scrutinee in
WHNF, its use is also valid in the alternatives.
Dually, if $x$ refers to an expression already in WHNF, no evaluation takes
place and $x$ is still just another name for the scrutinee in WHNF.

However, under call-by-name, evaluation of $x$ is not shared:
\[
\begin{array}{l}
(\lambda x.~\ccase{x}{\_ \to x})~(use~y)
\Longrightarrow_\textrm{CBName}
\ccase{use~y}{\_ \to use~y}
\end{array}
\]
After the reduction, $y$ is a linear variable consumed both in the scrutinee
and in the case alternative -- $y$ has been duplicated!
If the application is instead reduced call-by-need, $y$ is not duplicated:
\[
\begin{array}{l}
(\lambda x.~\ccase{x}{\_ \to x})~(use~y)
\Longrightarrow_\textrm{CBNeed}
\llet{x = use~y}{\ccase{x}{\_ \to x}}
\end{array}
\]

In an optimising compiler such as GHC, there is no definitive evaluation
strategy. It is mostly up to heuristics to determine what to inline
\emph{à la} call-by-name and what to create let-bindings for \emph{à
la} call-by-need.
In this flexible evaluation setting, we highlight the reverse binder swap is
only a linearity-preserving transformation as long as a linear variable with
more than one occurrence is never substituted for an expression with free
linear variables.

\section{Linear Core as a GHC Plugin\label{sec:discuss:implementation}}

We implemented Linear Core as a plugin for the Glasgow Haskell Compiler.
GHC plugins allow developers to inspect and modify programs being compiled by
GHC, at different stages of compilation~\cite{10.1145/3331545.3342599}.
The Linear Core GHC plugin~\cite{cite:linear-core-plugin} is a typechecker. For
every single intermediate Core program produced by GHC, our plugin checks the
linearity of all expressions, failing if a linear resource is not used
\emph{exactly once} according to \ROUNDTWO{our typing discipline}.
The plugin implementation of Linear Core provides us with strong confidence
that the Linear Core type system is suitable for the intermediate language of
an optimizing compiler, since it is successful in accepting the vast majority
of real-world linear programs produced by the GHC optimizer, and even
identified some which violate linearity.

The implementation of a Linear Core checker does not follow directly from the
type system because Linear Core is not syntax-directed \ROUNDTWO{(the design
aims to be as simple to reason with as possible).}
Specifically, the \ROUNDTWO{features of Linear Core that are not
syntax-directed are essentially the}
splitting of linear resources amongst sub-derivations and
consuming fragments of resources through pattern-bound variables.
\ROUNDTWO{As standard in substructural typing systems, our rules
  non-deterministically split linear resources as needed.}
\ROUNDTWO{In our implementation,} we thread input/output linear resources through each
derivation~\cite{CERVESATO2000133}.
To use pattern variables with usage environment $\lctag{\irr{\D}}{K_i}$, the
scrutinee resources must first be $\textsc{Split}$ into fragments. Instead of
guessing which resources need to be split,
the implementation consumes tagged fragments of a resource as needed,
i.e., only when consuming a resource tagged $K_j$ do we $\textsc{Split}$ that
resource on $K$ and consume the $K_j$ fragment.
Finally, our implementation must infer the usage environments of binders in
a recursive let group before using them to typecheck the let body. We use a
naive $O(n^2)$ algorithm (where $n$ is the number of let bindings) to determine
these usage environments. We leave the study of the soundness and completeness
of this implementation strategy to future work.

The results of running the Linear Core GHC plugin on real-world libraries with
linear types are reported in Figure~\ref{fig:core-plugin-res}.
We compiled the Haskell standard library for programming with linear types,
\texttt{linear-base} (4000 lines, comprising over 100 modules),
\texttt{linear-smc}~\cite{10.1145/3471874.3472980} (1500 lines),
\texttt{priority-sesh}~\cite{10.1145/3471874.3472979} (1400 lines),
\texttt{text-builder-linear} \cite{cite:text-builder-linear} (1800 lines), and
\texttt{linear-generics}~\cite{cite:linear-generics} (2000 lines), using our
plugin.
We count the number of programs accepted by our implementation, where
each top-level binding in a module counts as a program, and every such
binding is checked once per optimisation pass, i.e.~we check
all intermediate programs produced by GHC. The total of rejected programs
is counted in the same way, without halting compilation. We note that our implementation is not performance
conscious and types every intermediate program from scratch.

{\small
  \begin{figure}[h]
\centering{
\begin{tabular}{|l|c|c|c|c|}
\thead{Library} & \thead{Total\\Accepted} & \thead{Total\\Rejected} & \thead{Unique\\Rejected} & \thead{Accept Rate}\\
linear-base & 48018 & 735 & 51 & 98.50\% \\
    linear-smc & 18759 & 5 & 3 & 99.97\% \\
    priority-sesh & 760 & 95 & 6 & 88.89\% \\
    text-builder-linear & 4858 & 34 & 6 & 99.30\% \\
    linear-generics & 63332 & 0 & 0 & 100.00\% \\
\end{tabular}
}
\caption{Linear Core Plugin on Linear Libraries with GHC 9.10.3}
\label{fig:core-plugin-res}
\end{figure}
   }
The results indicate that our mostly direct implementation of Linear Core is
successful in accepting the vast majority of the thousands of intermediate
programs produced by GHC when compiling libraries that make extensive use of
linear types.
However, Linear Core does not accept every single program produced by the
optimisation pipeline of GHC. Our manual analysis of rejected programs revealed
excellent results. Programs considered ill-typed by the Linear Core
implementation include programs that:
\begin{itemize}
\item we expected GHC to produce, but know are not seen as linear by $\lambda^\pi_\Delta$.
        Namely, programs resulting from the reverse binder swap which
        scrutinize a variable and then use it again in the case alternatives.
        As discussed in Section~\ref{sec:reverse-binder-swap-considered-harmful}, these
        programs could be understood as linear as long as all applications of
        linear functions are reduced call-by-need
\item were rejected due to the lack of ``linearity coercions'' in
        $\lambda^\pi_\Delta$. Extending Linear Core with these coercions would
        make these programs be accepted (as discussed in Section~\ref{sec:future-work});
\item actually violate linearity. These are the programs
        we ultimately want to spot and reject by having a linearly typed Core
        validate intermediate programs. We found both a program which outright
        discarded a linear resource and that rewrite rules implementing stream
        fusion~\cite{10.1145/165180.165214,10.1145/1291151.1291199}
        (among other rules) could produce linearly-invalid programs (in one such case,
        the type of \lstinline{build} was not general enough to accommodate linearity).
\end{itemize}
Besides validating that our implementation is faithful to the
$\lambda^\pi_\Delta$ system, these programs provide insight as to what more is
needed to fully understand linearity in a mature optimising compiler.

\section{Related and Future Work\label{sec:discussion}}

\paragraph{Linear Haskell\label{sec:related-work-linear-haskell}.}
Linear Haskell~\cite{10.1145/3158093} augments the Haskell surface language
with linear types, but it is not concerned with extending GHC's intermediate
language(s), which presents its own challenges. 
However, the implementation of Linear Haskell in GHC does modify and extend Core
with linearity/multiplicity annotations. Core's type system is unable to type
semantic linearity of programs in the sense elaborated in our work,
which results in Core rejecting most linear programs resulting from optimising
transformations that leverage the non-strict semantics of Core.
Our work overcomes the limitations of Core's linear type system derived from
Linear Haskell by understanding linearity semantically in the presence of
laziness, and by showing that multiple Core-to-Core
optimisations employed by GHC are linearity preserving, using a
linearity-aware semantics that is essentially identical to that of~\cite{cite:linearhaskell}.
Linear Core can also be seen as a system that validates the programs written in
Linear Haskell and are compiled by GHC, by guaranteeing (through typing) that
linear resources are still used exactly once throughout the optimising
transformations.
\ROUNDTWO{We note that both Linear Haskell and our work has no
  special treatment of exceptions. This is also the case in GHC Core,
  where exceptions have no special status and so there is
  no natural way of dealing with the interaction between exceptions and
  linearity in Core itself without a major overhaul of the exception
  mechanisms of the language. }

\paragraph{Linearity-aware Semantics}
Several other work explore linearity aware semantics,
dating back to \cite{Chirimar_Gunter_Riecke_1996} and
\cite{TURNER1999231}.
The former uses a computational interpretation of linear logic to give
an account of reference counting. The latter present a heap-based
operational interpretation of linear logic and explore memory usage
properties under call-by-name and call-by-need, showing both satisfy
different properties under this lens. Their semantics is similar to
ours, but does not account for pattern matching.
More recently, \cite{DBLP:journals/pacmpl/ChoudhuryEEW21,DBLP:conf/esop/MarshallVO22} and
\cite{cite:linearhaskell,10.1145/3158093} all make use of usage-aware
semantics. \citet{DBLP:journals/pacmpl/ChoudhuryEEW21} do so in the context of graded dependent
type theory, using the usage-aware heap-based operational semantics to
show correct accounting of resource usage in their setting, as we do
in ours, but they do not model
call-by-need, only call-by-name. The work of \cite{DBLP:conf/esop/MarshallVO22} is
applies the ideas of \cite{DBLP:journals/pacmpl/ChoudhuryEEW21}
to a non-graded call-by-name setting. \citet{10.1145/3158093} provide a similar semantics
to ours, but present only a naive linear typing system that does not
leverage non-strict evaluation.
\ROUNDTWO{We also note the non-strict language
  Clean~\cite{10.1007/3-540-18317-5_20} that employs uniqueness types
  to ensure resources are used at most once.}

\paragraph{Inspirations for Linear Core\label{sec:linear-mini-core}}

Linear Mini-Core~\cite{cite:minicore} is a specification of linear types in
Core as they were being implemented in GHC, and doubles as the (unpublished)
precursor to our work. Linear Mini-Core first observes the incapacity of
Core's type system to accept linear programs after transformations, and 
introduces usage environments for let-bound variables with the same goal of
Linear Core of specifying a linear type system for Core that accepts the
optimising transformations.
We draw from Linear Mini-Core the rule for non-recursive let expressions and
how let-bound variables are annotated with a usage environment. However, our
work further explores the interaction of laziness with linearity in depth, and
diverges significantly in rules for typing other constructs, notably, case expressions and
case alternatives. Furthermore, unlike Mini-Core, we prove Linear Core is type
safe, guarantees linear resource usage, and that multiple optimising
transformations, when applied to Linear Core programs, preserve linearity.
\ROUNDTWO{Our irrelevant resources
  have similarities with 0-multiplicity
  values from QTT~\cite{brady:LIPIcs.ECOOP.2021.9} which are
  disallowed from being used directly in executable code and, as previously noted,
  our tagged resources are related to fractional permissions from separation
  logic~\cite{10.5555/1760267.1760273}.}

\paragraph{Linearity-directed optimisations}
Core-to-Core transformations appear in multiple papers~\cite{cite:let-floating,peytonjones1997a,santos1995compilation,peytonjones2002secrets,baker-finch2004constructed,maurer2017compiling,Breitner2016_1000054251,sergey_vytiniotis_jones_breitner_2017},
all designed in the context of a typed language (Core) which does not have
linear types. However, some authors~\cite{cite:let-floating,peytonjones1997a,10.1145/3158093} have observed that
optimisations (in particular, let-floating and inlining) can greatly
benefit from linearity analysis and, in order to improve those transformations,
special purpose linear-type-inspired systems were created and implemented.
Preserving linear types in Core throughout the compilation pipeline allows the
optimiser to leverage non-heuristic linearity information where, previously, it
would resort solely to ad-hoc or incomplete custom-built linearity inference
passes (naturally, these passes are still necessary to optimise programs not
using explicit linearity or multiplicity annotations). \ROUNDTWO{Linearity may
potentially be used to the benefit of other transformations.
An obvious candidate is
\emph{inlining}, which is applied based on heuristics from information provided
by the \emph{cardinality analysis} pass that counts occurrences of bound
variables.  Linearity can be used to non-heuristically inform
the inliner~\cite{10.1145/3158093}.}

\subsection{Future Work\label{sec:future-work}}

We highlight some avenues of future work. Briefly,
these include \emph{multiplicity coercions}, optimisations leveraging
linearity, and incorporating Linear Core in GHC Haskell. 

\paragraph{Multiplicity Coercions.}
Linear Core does not have type equality coercions, a flagship feature of GHC
Core's type system.
Coercions allow the Core intermediate language to encode a panoply of Haskell source
type-level features such as GADTs, type families, or newtypes.
In Linear Haskell, multiplicities are introduced as annotations to function
arrows which specify the linearity of the function. In practice,
multiplicities are simply types of kind \incode{Multiplicity}, where \incode{One} and \incode{Many}
are the type constructors of the kind \incode{Multiplicity}; multiplicity polymorphism
follows from type polymorphism, where multiplicity variables are
just type variables. Encoding multiplicities as types allows Haskell programs
to leverage \ROUNDTWO{type-level features when handling multiplicities}.
For instance, it is possible via the use of GADTs to define a
function whose linearity of its second argument depends on the value
of its first argument, internally realized through so-called
\emph{multiplicity coercions}. Currently, Core cannot make use of such
coercions to determine whether the usages of linear resources match
their intended multiplicity. Studying the interaction between coercions and multiplicities is a main avenue
of future work for Linear Core.

\paragraph{Linear Core in the Glasgow Haskell Compiler.}
Integrating Linear Core in the Glasgow Haskell Compiler is one of the ultimate
goals of our work. Core's current type system ignores linearity due to
its limitation in understanding semantic linearity, and our work fills this gap
and would allow Core to be linearly typed all throughout.
Implementing Linear Core in GHC is a challenging endeavour, since we must
account for all other Core features (e.g. strict constructor fields), propagate
new or modified data throughout the entire pipeline, and accept more
optimisations. Despite our initiative in this direction\footnote{\url{https://gitlab.haskell.org/ghc/ghc/-/merge_requests/10310}},
we leave this as future work.

\paragraph{Generalizing Linear Core to Haskell.}
Linear types, despite their compile-time correctness guarantees regarding
resource management, impose a burden on programmers in being a restrictive
typing discipline (witnessed, e.g., by Linear
Constraints~\cite{cite:linearconstraints}). Linear Core eases the restrictions
of linear typing by being more flexible in understanding linearity for lazy
languages. It is future
work to apply our ideas to the surface Haskell language.

\begin{acks}

We would like to thank Arnaud Spiwack for many insightful discussions on
linearity in Core and the reviewers for their feedback, which improved the
paper significantly. This work was supported by Well-Typed LLP and by
national funds through Fundação para a Ciência e a Tecnologia, I.P. (FCT)
under projects UID/50021/2025 and UID/PRR/50021/2025.

\end{acks}

\bibliography{references}

\appendix

 \newpage

\section{Natural and Instrumented Operational Semantics}\label{app:opsem}

We define a lazy big-step operational semantics for $\lambda_\Delta^\pi$ in the
style of Launchbury~\cite{10.1145/158511.158618} natural semantics. The semantics is
equipped with an environment which assigns expressions to variables and is
mutated in order to express lazy evaluation. Terms are explicit about
their sharing (in the sense of shared reductions in lazy evaluation),
and so we perform a translation that makes all sharing explicit
through let-binders. This translation is presented in Figure~\ref{fig:explicitsharing}.
We write $\llet{\ov{x_i{:}_{\Delta_i}\sigma = e_i}}{e'}$ for the iterated
let-binding of variables $\ov{x_i}$.
\begin{figure}
\[
  \begin{array}{lclr}
    x^* & \triangleq & x\\

    (\Lambda p.~e)^* & \triangleq & \Lambda p.~(e)^*\\
    (\lambda x:_\pi\sigma .e)^* & \triangleq & \lambda x:_\pi\sigma . (e)^*\\
    (e~\pi)^* & \triangleq & (e)^*~\pi\\
    (e~x)^* & \triangleq & (e)^*~x\\
    (e~e')^* & \triangleq & \llet{x{:}_{\Delta}\sigma =
                            (e')^*}{(e)^*~x}     \\ &&\qquad \mbox{with
                            $\Gamma;\Delta' \vdash e :
                            \sigma\rightarrow_\pi \tau$} \\ &&\qquad \mbox{and
                                       $\Gamma;\Delta \vdash e' : \sigma$}\\
      K~\ov{e_i} & \triangleq & \llet{\ov{x_i:_{\Delta_i} \sigma_i =
                                   (e_i)^*}}{K~\ov{x_i}} \\ &&
                                       \qquad\mbox{with
                                       $\ov{\Gamma;\Delta_i\vdash
                                       e_i :
                                       \sigma_i}$}\\                                              
(\ccase{e}{z{:}_{\Delta}\sigma~\{\overline{\rho\ROUNDTWO{\Rightarrow} e'}\}})^* &
                                                               \triangleq
                                &
                                  \ccase{(e)^*}{z{:}_{\Delta}\sigma~\{\overline{\rho\ROUNDTWO{\Rightarrow} (e')^*}\}}\\
    (\llet{x{:}_{\Delta}\sigma = e}{e'}      )^* & \triangleq & \llet{x{:}_{\Delta}\sigma  = (e)^*}{(e')^*}    \\
    (\lletrec{\overline{x{:}_{\Delta}\sigma = e}}{e'}  )^* & \triangleq &\lletrec{\overline{x{:}_{\Delta}\sigma = (e)^*}}{(e')^*}  
      \end{array}
    \]
    \caption{Explicit Sharing Translation\label{fig:explicitsharing}}
  \end{figure}

The natural semantics (Figure~\ref{fig:natsemapp}) is defined via the relation
$\Theta : e \Downarrow \Theta' : v$ where $e$ is an expression, $v$ a
value, $\Theta$ and $\Theta'$ are environments with bindings of the
form $x:_\omega\sigma = e$ which assigns the expression $e$ to the
variable $x$ of the given type.

\begin{figure}
\[
  \begin{array}{c}
    \infer[]
    {\,}
    {\Theta : \Lambda p.e \Downarrow \Theta : \Lambda p.e}
    \qquad
    \infer[]
    {\Theta  : e\Downarrow \Theta' : \Lambda p.e' \quad \Theta' :
    e'[\pi/p]\Downarrow \Theta'' : v}
    {\Theta : e~\pi\Downarrow \Theta'' : v }\\[1.5em]
    \infer[]
    {\,}
    {\Theta : \lambda x:_\pi \sigma . e \Downarrow \Theta : \lambda x:_\pi \sigma . e }
    \qquad
    \infer[]
    {\Theta : e \Downarrow \Theta' : \lambda y:_\pi \sigma . e' \quad
    \Theta' : e'[x/y] \Downarrow \Theta'' : v }
    {\Theta : e~x \Downarrow \Theta'' : v}\\[1.5em]
    \infer[]
    {(\Theta , x:_\omega \sigma = e): e \Downarrow \Theta' : v}
    {(\Theta , x:_\omega \sigma = e) : x \Downarrow (\Theta',x:_\omega
    \sigma = v) : v}
    \qquad
    \infer[]
    {(\Theta,x:_\omega \sigma = e) : e' \Downarrow \Theta' : v}
    {\Theta : \llet{x{:}_{\Delta}\sigma = e}{e'}\Downarrow \Theta' : v
    }\\[1.5em]
    \infer[]
    {\,}
    {\Theta : K~\ov{x_i} \Downarrow \Theta : K~\ov{x_i}}
    \qquad
    \infer[]
    {\Theta : e \Downarrow \Theta' : K~\ov{x_i} \quad
    \Theta' : e'\ov{[x_i/y_i]}[K~\ov{x_i}/z] \Downarrow \Theta'' : v }
    {\Theta :
    \ccase{e}{z{:}_{\Delta'}\sigma~\{{\dots,K~\ov{y_i} \to e', {\dots}}\}}
    \Downarrow \Theta'' : v}\\[1.5em]
    \infer[]
    {\Theta : e \Downarrow \Theta' : K~\ov{x_i} \quad
    \Theta' : e'[K~\ov{x_i}/z] \Downarrow \Theta'' : v }
    {\Theta :
    \ccase{e}{z{:}_{\Delta'}\sigma~\{\dots , \_\to
    e'}\} \Downarrow \Theta'' : v}
    \qquad
    \infer[]
    {(\Theta, \overline{x_i:_\omega \sigma_i = e_i}) : e' \Downarrow \Theta'
    : v
    }
    {\Theta : \lletrec{\overline{x_i{:}_{\Delta}\sigma_i = e_i}}{e'}
    \Downarrow \Theta' : v}
    \end{array}
  \]
  \caption{Natural Semantics for $\lambda_\Delta^\pi$\label{fig:natsemapp}}
\end{figure}

Our instrumented, linearity-aware semantics (akin to that
of~\cite{cite:linearhaskell,10.1145/3158093})
is defined in Figure~\ref{fig:instrsemapp}.
In this semantics, linear variables are erased from
the runtime environment, ensuring that any term violating linearity
produces a stuck state. The semantics is defined by
the judgment $\Gamma ; \Delta \vdash (\Theta \mid e) \Downarrow (\Theta'
\mid v) : \sigma , \Sigma$ where $\Gamma$ and $\Delta$ are typing judgments
($\Gamma$ tracks unrestricted variables, $\Delta$ tracks linear
assumptions) for variables occurring in $\Sigma$;
$\Theta$ and $\Theta'$ are runtime environments, consisting of bindings of the form
$x :_\pi \sigma = e$ or $x :_\Delta \sigma = e$ ; $e$ is the expression being evaluated and $v$
its resulting value, both of type $\sigma$; $\Sigma$ is a list of assignments of the form
$e : \tau$, which are expressions in which variables in $\Theta$ are
used.

\begin{figure}
\[
  \begin{array}{c}
    \infer[]
    {\,}
    {\Gamma;\Delta \vdash (\Theta \mid \Lambda p.e) \Downarrow (\Theta \mid
    \Lambda p.e) : \forall p . \sigma , \Sigma }
    \\[1.5em]
    \infer[]
    {\Gamma ; \Delta \vdash (\Theta \mid  e) \Downarrow (\Theta' \mid
    \Lambda p.e') : \forall p . \sigma , \Sigma \quad \Gamma ;\Delta \vdash
      (\Theta' \mid e'[\pi/p]) \Downarrow (\Theta'' \mid v) : \sigma[\pi/p], \Sigma}
    {\Gamma ; \Delta \vdash (\Theta \mid e~\pi) \Downarrow (\Theta''
      \mid v) : \sigma[\pi/p] , \Sigma  }\\[1.5em]
    \infer[]
    {\,}
    {\Gamma ;\Delta \vdash (\Theta \mid \lambda x:_\pi \sigma . e)
    \Downarrow (\Theta \mid \lambda x:_\pi\sigma .e) : \sigma
    \rightarrow_\pi \tau,\Sigma }\\[1.5em]
    \infer[]
    {\Gamma;\Delta \vdash  (\Theta \mid e) \Downarrow (\Theta' \mid
    \lambda y:_1 \sigma . e') : \sigma \rightarrow_1 \tau , x:\sigma, \Sigma
    \quad
    \Gamma ; \Delta \vdash (\Theta' , y:_{1} \sigma = x \mid e') \Downarrow (\Theta''
    \mid v) : \tau , \Sigma  }
    {\Gamma ; \Delta\vdash (\Theta \mid  e~x ) \Downarrow (\Theta''
    \mid v) : \tau , \Sigma }\\[1.5em]
        \infer[]
    {\Gamma;\Delta \vdash  (\Theta \mid e) \Downarrow (\Theta' \mid
    \lambda y:_\omega \sigma . e') : \sigma \rightarrow_\omega \tau , x:\sigma, \Sigma
    \quad
            \Gamma ; \Delta \vdash (\Theta' \mid e'[x/y]) \Downarrow (\Theta''
    \mid v) : \tau , \Sigma  }
    {\Gamma ; \Delta\vdash (\Theta \mid  e~x ) \Downarrow (\Theta''
    \mid v) : \tau , \Sigma }\\[1.5em]
       \infer[]
    {\Gamma ; \Delta  \vdash (\Theta, x:_{\Delta'} \sigma = e \mid e)
    \Downarrow (\Theta' \mid v) : \sigma , \Sigma \\
      \D'' = \D' \setminus (\Theta \setminus \Theta')
      }
    {\Gamma ; \Delta \vdash (\Theta , x:_{\Delta'} \sigma = e \mid x)
      \Downarrow (\Theta' , x:_{\Delta''} \sigma = v \mid v) : \sigma, \Sigma}
    \\[1.5em]
           \infer[]
    {\Gamma ;\Delta  \vdash (\Theta \mid e)
    \Downarrow (\Theta' \mid v) : \sigma,\Sigma}
    {\Gamma ; \Delta \vdash (\Theta , x:_1 \sigma = e \mid x)
    \Downarrow (\Theta' \mid v) : \sigma,\Sigma}
    \\[1.5em]
    
    \infer[]
    {\Gamma ;\Delta \vdash (\Theta,x:_{\Delta'} \sigma = e \mid e')
    \Downarrow (\Theta' \mid v) : \tau,\Sigma}
    {\Gamma ; \Delta \vdash (\Theta \mid \llet{x{:}_{\Delta'}\sigma =
    e}{e'})\Downarrow (\Theta' \mid v) : \tau , \Sigma
    }\\[1.5em]
    
    \infer[]
    {\,}
    {\Gamma ; \Delta \vdash (\Theta \mid K~\ov{x_i}) \Downarrow
    (\Theta \mid K~\ov{x_i}) : T , \Sigma}
    \\[1.5em]
    
    \infer[]
    { K : \ov{ \sigma_i
    \rightarrow_\omega \sigma_j \lolli} T
    \\
    \Gamma, z{:}_{\ov{y_j}}T, \ov{y_i :_\omega \sigma_i} ;\Delta , \ov{y_j :_1 \sigma_j} \vdash (\Theta \mid e) \Downarrow (\Theta' \mid
      K~\ov{x_i}) : T , e' : \tau ,\Sigma \\
    \Gamma ;\Delta \vdash (\Theta' \mid e'\ov{[x_i/y_{i,j}]}[K~\ov{x_i}/z]
    ) \Downarrow (\Theta'' \mid v) :\tau , \Sigma }
    {\Gamma ;\Delta \vdash (\Theta \mid
    \ccase{e}{z{:}_{\Delta'}T~\{{\dots,K~\ov{y_{i,j}} \,\ROUNDTWO{\Rightarrow}\, e', {\dots} }\}})
    \Downarrow (\Theta'' \mid  v) : \tau , \Sigma}\\[1.5em]

    \infer[]
    {K : \ov{ \sigma_i
    \rightarrow_\omega \sigma_j \lolli} T\\
    \Gamma ; \Delta, z{:}_1\s \vdash (\Theta \mid e) \Downarrow (\Theta' \mid
    K~\ov{x_i}) : T , e', \Sigma \quad
    \Gamma ;\Delta \vdash (\Theta' \mid  e'[K~\ov{x_i}/z])
    \Downarrow (\Theta'' \mid v) : \tau,\Sigma }
    {\Gamma ; \Delta \vdash (\Theta \mid 
    \ccase{e}{z{:}_{\Delta'}T~\{\dots , \_\,\ROUNDTWO{\Rightarrow}\,
    e'}\}) \Downarrow (\Theta'' \mid v) : \tau}
    \\[1.5em]

     \infer[]
     {(\Gamma ; \Delta \vdash (\Theta, \overline{x_i:_{\Delta'} \sigma_i =
     e_i} \mid e' )\Downarrow (\Theta'
     \mid v) : \tau
     }
     {\Gamma ;\Delta \vdash (\Theta \mid \lletrec{\overline{x_i{:}_{\Delta'}\sigma_i = e_i}}{e'})
     \Downarrow (\Theta' \mid v) : \tau}

   \end{array}
 \]
 \caption{Instrumented Semantics for $\lambda_\Delta^\pi$\label{fig:instrsemapp}}
 \end{figure}

 Following~\cite{Gunter1993APA,cite:linearhaskell}, we also consider
 partial derivations of our semantics in order to be able to state a
 progress result. Given our set of rules defining judgments of the
 form $a \Downarrow b$ with ordered premises, we define a total
 derivation of $a \Downarrow b$ as a tree in the standard way. A
 partial derivation $a \Downarrow?$ is either a root labelled with $a
 \Downarrow?$ or an application of a rule matching $a$ on the left of
 the evaluation arrow, where exactly one of the premises $a'
 \Downarrow?$ has a partial  derivation, all the premises to the left
 of $a' \Downarrow?$ have a total derivation, and the premises to the
 right of $a'\Downarrow?$ are not known yet.

 By construction, in a partial derivation, there is only one premise
 $b\Downarrow?$ with no sub-derivation. We call $b$ the head of such a
 partial derivation.
We write $a \Downarrow^* b$ for the relation which holds when $b$ is
the head of some partial derivation with root $a \Downarrow?$.  We
write $a \Downarrow^* b$  the partial evaluation relation.
 
Below, we write $\Theta {\uparrow^1}$ for the linear bindings in evaluation
environment $\Theta$ and $\Theta {\uparrow^\omega}$ for all other bindings.
$\Theta\downarrow^1_e$ refers to the expressions bound to linear variables in
the runtime environment and $\Theta\downarrow^1_\sigma$ refers to the types of
those expressions.

 \begin{definition}[Environment Expansion]
   We rely on an expansion of a runtime environment $\Theta$ for the evaluation of a term $e$ into a (typed) term, written $[\Theta \mid e]$, defined inductively as follows:
   \[
     \begin{array}{lcl}
       \left[ \cdot \mid e \right] & \triangleq & e\\
       \left[ \Theta , \xl = e' \mid e \right] & \triangleq  & [\Theta \mid (\lambda \xl . e)~e']\\
       \left[\Theta , x:_\Delta \sigma = e' \mid e \right] & \triangleq & [\Theta \mid \llet{x:_\Delta \sigma = e'}{e}]
    \end{array}
   \]
 \end{definition}

We note that for the treatment of recursive let groups, environment
expansion needs to happen for the entire group of let-bound
variables. This is easily achieved by extending the runtime
environment with information about recursive let groupings, which we
omit for the sake of presentation.
 
 \begin{definition}[Well-typed State]
   We write $\G;\D \vdash \Theta \mid e : \tau,\Sigma$ to denote the following typing judgment:

   \[
\G ; \D \vdash [ \Theta \mid (e,\mathit{terms}(\Sigma))] : \tau \otimes \bigotimes(\Sigma)
   \]
   where $\mathit{terms}(e_1 : \sigma_1 , \dots , e_n : \sigma_n)$ stands for the tuple $(e_1 ,( \dots , (e_n , ())))$ and
   $\bigotimes(e_1 : \sigma_1 , \dots , e_n : \sigma_n)$ for $\sigma_1 \otimes ( \dots (\sigma_n \otimes ()))$.
 \end{definition}

 \begin{lemma}~\label{lem:unwrapenv}
   $\G ; \D \vdash [ \Theta \mid e] : \sigma$
     iff $\G , \Theta\uparrow^{\omega} ; \D , \Theta\uparrow^1 \vdash  (e, \Theta\downarrow^1_e) : \sigma \otimes \bigotimes(\Theta\downarrow^1_\sigma)$
   \end{lemma}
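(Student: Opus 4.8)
The plan is to prove the biconditional by induction on the structure of the runtime environment $\Theta$, following the inductive definition of the expansion $[\Theta \mid e]$. I read the linear pair type $\sigma \otimes \tau$ and the unit type $()$ as ordinary algebraic datatypes, so that a tuple $(e_1,e_2)$ is a constructor application typed by (Constr) and $()$ is the nullary unit constructor; the only facts I need about them are that (Constr) splits the linear context between the two components and that $()$ must be typed under an empty linear context. The induction hypothesis is stated for all expressions and all result types, so it can be reapplied to the tail environment paired with whatever expression the expansion feeds inward.

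For the base case $\Theta = \cdot$ the left side is $\Gamma;\Delta \vdash e : \sigma$ and the right side is $\Gamma;\Delta \vdash (e,()) : \sigma \otimes ()$. These are interderivable: since $()$ forces an empty linear context, the (Constr) rule must assign all of $\Delta$ to $e$, so inverting the pair constructor gives the left judgement and re-applying it gives the right.

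In the inductive step I peel one binding of $\Theta$ and match the layer of term structure it contributes in $[\Theta\mid e]$ against one layer of context splitting on the right. If the binding is linear, $x{:}_1\sigma'=e'$, the expansion contributes an application of a linear abstraction, while on the right $x{:}_1\sigma'$ is added to $\Delta$ and $e'$ becomes an additional tensor component. The crux is that $(\lambda E_1)$ together with $(\lambda I_1)$ split the linear context in exactly the same way as the pair constructor: the abstraction introduces $x{:}_1\sigma'$ to type the inner expansion and consumes the remaining resources on the argument $e'$, precisely as (Constr) types the body component under $x{:}_1\sigma'$ and the new component $e'$ under the leftover resources. Inverting these rules, applying the induction hypothesis to the tail expansion, and reassembling the derivation yields the equivalence; the extra tensor factor $\sigma'$ records that $e'$ is now paired into the tuple rather than consumed by the application. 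If the binding is $\Delta$-bound, $x{:}_\Delta\sigma'=e'$, the expansion instead contributes a $\llet{x{:}_\Delta\sigma'=e'}{\dots}$, and on the right $x{:}_\Delta\sigma'$ migrates into $\Gamma$ while $e'$ is \emph{not} added to the tuple (it is bound to a non-linear variable). Here I rely on the shape of the (Let) rule, which keeps the resources $\Delta$ used to type $e'$ available in the body and records them as the usage environment of $x$: inverting (Let) produces exactly a derivation of the inner expansion with $x{:}_\Delta\sigma'$ in $\Gamma$ and the same ambient linear context, which is the form the induction hypothesis consumes.

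I expect the main obstacle to be this $\Delta$-bound case, where I must verify that the usage-environment annotation of the binding really coincides with the linear resources used to type $e'$, and that (Let)'s retention of those resources in the body lines up exactly with relocating $x$ into $\Gamma$ as a $\Delta$-bound variable whose later uses --- via the (Var$_\Delta$) rule --- reconsume precisely that environment; this is the step where usage environments earn their keep. A secondary, purely bookkeeping hazard is keeping the nested-tensor accounting straight: each binding inserts a factor into $\bigotimes(\Theta\downarrow^1_\sigma)$ and a component into the tuple, so the argument needs the routine associativity/reordering of the iterated context split, together with the observation that dependencies among bindings are respected automatically because the expansion nests later bindings inside the scope of earlier ones.
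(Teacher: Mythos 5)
Your proposal matches the paper's own proof essentially step for step: the paper also proceeds by structural induction on the inductive definition of $[\Theta \mid e]$, applying the induction hypothesis to the tail environment with the wrapped expression ($(\lambda \x[1][\s'].\,e)\,e'$ for a linear binding, $\llet{\x[\Delta][\s']=e'}{e}$ for a $\Delta$-bound one), then inverting the pair/application/abstraction or let rules and reassembling with the pair constructor, exactly as you describe. The paper likewise treats the tuple as splitting the linear context component-wise and, like you, only details the forward direction while asserting the converse is a straightforward induction on $\Theta$, so your proposal is correct relative to the same standard.
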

\begin{proof}
By structural induction on the environment expansion definition in the
forward direction and by a straightforward induction on $\Theta$ in the
converse direction. We show only the forward direction.
\begin{description}
\item[Case:] $\left[\Theta, \xl = y \mid e\right] \triangleq \left[\Theta \mid (\lambda \xl. e)~y\right]$
\begin{tabbing}
    (1) $\G; \D \vdash \left[ \Theta, \xl = y \mid e \right] : \phi$\\
    (2) $\G; \D \vdash \left[ \Theta \mid (\lambda \xl. e)~y \right] : \phi$\` by def.\\
    (3) $\G, \Theta\uparrow^\omega; \D, \Theta\uparrow^1 \vdash ((\lambda \xl. e)~y, \Theta\downarrow^1_e) : \phi \otimes \bigotimes(\Theta\downarrow^1_\sigma)$\` by i.h.\\
    (4) $Let~\D_1,\D_2,\D_3 = \D, \Theta\uparrow^1 $\\
    (5) $\G, \Theta\uparrow^\omega; \D_3 \vdash \Theta\downarrow^1_e : \bigotimes(\Theta\downarrow^1_\sigma)$\\
    (6) $\G, \Theta\uparrow^\omega; \D_2 \vdash y : \s$\` by inv\\
    (7) $\G, \Theta\uparrow^\omega; \D_1 \vdash (\lambda \xl. e) : \s \to \phi $\` by inv\\
    (8) $\G, \Theta\uparrow^\omega; \D_1, \xl \vdash e : \phi $\` by inv\\
    (9) $\G, \Theta\uparrow^\omega; \D_1, \D_2, \xl \vdash (e, y) : \phi \otimes \sigma$\` by pair\\
    (10) $\G, \Theta\uparrow^\omega; \D, \xl \vdash (e, y, \Theta\downarrow^1_e) : \phi \otimes \sigma \otimes \bigotimes(\Theta\downarrow^1_\sigma)$\` by pair\\
\end{tabbing}
\item[Case:] $\left[\Theta, \xD = e' \mid e\right] \triangleq \left[\Theta \mid \llet{\xD = e'}{e}\right]$
\begin{tabbing}
    (1) $\G; \D' \vdash \left[ \Theta, \xD = y \mid e \right] : \phi$\\
    (2) $\G; \D' \vdash \left[ \Theta \mid \llet{\xD = e'}{e} \right] : \phi$\` by def.\\
    (3) $\G, \Theta\uparrow^\omega; \D', \Theta\uparrow^1 \vdash (\llet{\xD = e'}{e}, \Theta\downarrow^1_e) : \phi \otimes \bigotimes(\Theta\downarrow^1_\sigma)$\` by i.h.\\
    (4) $Let~\D,\D_2,\D_3 = \D', \Theta\uparrow^1 $\\
    (5) $\G, \Theta\uparrow^\omega; \D_3 \vdash \Theta\downarrow^1_e : \bigotimes(\Theta\downarrow^1_\sigma)$\\
    (6) $\G, \Theta\uparrow^\omega; \D \vdash e' : \s$\` by inv on let\\
    (7) $\G, \Theta\uparrow^\omega, \xD; \D,\D_2 \vdash e : \phi $\` by inv on let\\
    (8) $\G, \Theta\uparrow^\omega, \xD; \D,\D_2,\D_3 \vdash (e, \Theta\downarrow^1_e) : \phi \otimes \bigotimes(\Theta\downarrow^1_\sigma)$\` by pair\\
\end{tabbing}
\end{description}
\end{proof}

\subsection{Bisimulation}

We relate our two operational semantics through a bisimulation,
allowing us to transfer type safety of the instrumented semantics to
the lazy big-step semantics. The relation essentially allows the
execution environment to treat some unrestricted bindings
of the natural semantics as linear.

\begin{definition}[Bisimulation]
We define the relation $\gamma(\Theta : e)(\G ; \D \vdash \Theta' \mid
e' : \tau,\Sigma)$ between a well-typed state and an ordinary state of
the lazy big-step semantics if $e = e'$ and $\Theta'{\uparrow^\omega}
\leq \Theta$.
\end{definition}

\begin{lemma}~\label{lem:naturaltoinstr}
  \begin{itemize}
    \item For all $\gamma(\Theta : e)(\G ; \D \vdash \Theta' \mid
e : \tau,\Sigma)$ such that $\Theta : e \Downarrow \Theta'' : v$
there exists a well-typed state $\G ; \D \vdash (\Theta''' \mid v) :
\tau,\Sigma$ such that $\G ; \D \vdash (\Theta' \mid e) \Downarrow
(\Theta''' \mid v) : \tau,\Sigma$ and $\gamma(\Theta'' : v)(\G ; \D
\vdash \Theta''' \mid v : \tau,\Sigma)$
\item For all $\gamma(\Theta : e)(\G ; \D \vdash \Theta' \mid
e : \tau,\Sigma)$ such that $\Theta : e \Downarrow^* \Theta'' : v$
there exists a well-typed state $\G ; \D \vdash (\Theta''' \mid v) :
\tau,\Sigma$ such that $\G ; \D \vdash (\Theta' \mid e) \Downarrow^*
(\Theta''' \mid v) : \tau,\Sigma$ and $\gamma(\Theta'' : v)(\G ; \D
\vdash \Theta''' \mid v : \tau,\Sigma)$
\end{itemize}
\end{lemma}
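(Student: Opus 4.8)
The plan is to prove both items simultaneously by induction on the given natural-semantics derivation of $\Theta : e \Downarrow \Theta'' : v$ (for the total case) and on the structure of the partial derivation (for the $\Downarrow^*$ case), proceeding by case analysis on the last rule applied in the natural semantics. The key observation driving the argument is that $\gamma$ only constrains the unrestricted fragment of the instrumented environment via $\Theta'{\uparrow^\omega} \leq \Theta$; this is precisely what permits the instrumented side to treat some of the natural semantics' shared bindings as linear. So for each natural-semantics rule I would fire the matching instrumented rule, appeal to the induction hypotheses on the subderivations to obtain related and well-typed intermediate states, and then re-establish that the conclusion states are still related by $\gamma$ and still well-typed.

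The value rules ($\Lambda p.~e$, $\lambda x{:}_\pi\sigma.~e$, and $K~\ov{x_i}$) are immediate: nothing is evaluated, the environment is unchanged on both sides, and $\gamma$ is preserved directly. For the compound rules — multiplicity application, term application, (non-recursive and recursive) let, and case — I would feed each premise of the natural-semantics derivation to the induction hypothesis, obtaining a corresponding instrumented subderivation together with a related intermediate state, and compose these into the instrumented derivation for the conclusion. Crucially, the well-typedness of each intermediate state is not re-derived here but inherited from the Type Preservation theorem, so that the invariant that $\G;\D \vdash (\Theta \mid e) : \tau,\Sigma$ is well-typed is available at every node where $\gamma$ must be restored.

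The delicate case is variable lookup, and this is where I expect the main obstacle to lie. In the natural semantics a lookup of $x:_\omega\sigma = e$ evaluates $e$ and retains the \emph{updated} binding $x:_\omega\sigma = v$, whereas the instrumented semantics offers two rules: for a $\Delta$-bound binding it keeps the binding but rewrites its usage environment via $\D'' = \D' \setminus (\Theta \setminus \Theta')$, while for a linear binding $x:_1\sigma = e$ it \emph{erases} the binding entirely from the output environment. Since $\gamma$ tracks only $\uparrow^\omega$, erasing a linear binding cannot break $\Theta'''{\uparrow^\omega} \leq \Theta''$, and the usage-environment rewrite likewise leaves the $\uparrow^\omega$ fragment intact; the two cases are thus dispatched simply by which form the instrumented binding takes, reusing the induction hypothesis on the body evaluation. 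The subtle point I must discharge is that whenever the instrumented side has chosen to treat a shared natural-semantics binding as linear, the linearity invariant guarantees the variable is forced \emph{at most once}, so that the erased binding is never looked up again and the two derivations remain in lock-step. This is exactly why the well-typed-state hypothesis is indispensable: it is linearity, recovered from the carried typing data $\G$, $\D$ and $\Sigma$, that rules out the otherwise-possible desynchronisation between the retained (natural) and erased (instrumented) bindings.

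For the partial-derivation item I would mirror the same induction while tracking the single unexplored premise — the head of the partial derivation — applying the total-case reasoning to the fully-derived premises to its left and carrying $\gamma$ down to the head. The extension property then follows because each instrumented rule admits a partial derivative precisely when its natural-semantics counterpart does, so a partial natural-semantics reduction lifts to a partial instrumented one with the relation preserved at the head.
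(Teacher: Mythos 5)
Your proposal is correct and follows essentially the same route as the paper: the paper's proof of this lemma is literally ``by induction on the given operational semantics rules,'' with every detail left implicit, and your case analysis (trivial value cases, composition of induction hypotheses for compound rules, and the variable-lookup case where the typing invariant carried by the well-typed state rules out a second lookup of an erased linear binding) is a faithful elaboration of that induction. Your reliance on the instrumented Type Preservation theorem to maintain well-typedness of intermediate states is legitimate here, since that theorem is proved independently of the bisimulation and no circularity arises.
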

\begin{proof}
By induction on the given operational semantics rules.
  \end{proof}

\begin{lemma}~\label{lem:instrtonatural}
  \begin{itemize}
\item  For all $\gamma(\Theta : e)(\G ; \D \vdash \Theta' \mid
e : \tau,\Sigma)$ such that $\G ; \D \vdash (\Theta' \mid e) \Downarrow
(\Theta'' \mid v) : \tau,\Sigma$, there exists a state $\Theta''' : v$
such that $\Theta : e \Downarrow \Theta''' : v$ and
$\gamma(\Theta''' : v)(\G ; \D \vdash (\Theta'' \mid v) :
\tau,\Sigma)$
\item For all $\gamma(\Theta : e)(\G ; \D \vdash \Theta' \mid
e : \tau,\Sigma)$ such that $\G ; \D \vdash (\Theta' \mid e) \Downarrow^*
(\Theta'' \mid v) : \tau,\Sigma$, there exists a state $\Theta''' : v$
such that $\Theta : e \Downarrow^* \Theta''' : v$ and
$\gamma(\Theta''' : v)(\G ; \D \vdash (\Theta'' \mid v) :
\tau,\Sigma)$
   \end{itemize}
 \end{lemma}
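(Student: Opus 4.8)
The plan is to prove both items by induction on the structure of the given instrumented evaluation derivation $\G;\D\vdash(\Theta'\mid e)\Downarrow(\Theta''\mid v):\tau,\Sigma$, case-analysing on the last rule applied, and in each case assembling a matching natural-semantics derivation while re-establishing the bisimulation invariant $\gamma$. Since the explicit-sharing translation (Figure~\ref{fig:explicitsharing}) ensures every application argument is a variable, the two semantics act on syntactically identical term shapes, so the conclusion of the instrumented rule and of the natural rule we build always carry the same expression $e$; the real work is threading the environment condition $\Theta'{\uparrow^\omega}\le\Theta$ through the premises and verifying it still holds of the result states. This is the \emph{permissive} direction of the bisimulation, dual to Lemma~\ref{lem:naturaltoinstr}: the natural semantics ignores linearity, so anything the instrumented semantics does it can mirror by forgetting the linear bookkeeping.

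First I would dispatch the value rules ($\Lambda p.e$, $\lambda x.e$, $K~\ov{x_i}$): here neither semantics alters the environment and the returned value equals the evaluated expression, so $\gamma$ is preserved trivially. For the multiplicity-application, let, letrec and case rules I would apply the induction hypothesis to each instrumented premise in turn: given $\gamma$ for the incoming state, the first premise yields (by the IH) a natural derivation of the same subexpression producing a value and an updated environment still related by $\gamma$; this related pair then feeds the hypothesis of the next premise, and so on, until the premises are exhausted and recombined under the corresponding natural rule. The key bookkeeping is that a $\Delta$-bound runtime binding $x{:}_{\Delta'}\sigma = e$ introduced by let/letrec/case in the instrumented semantics corresponds to an unrestricted binding $x{:}_\omega\sigma = e$ in the natural semantics, so it lives in $\Theta'{\uparrow^\omega}$ and the ordering $\Theta'{\uparrow^\omega}\le\Theta$ extends accordingly; and that forcing a linear binding (the linear variable rule) erases it from $\Theta'{\uparrow^1}$ only, leaving $\Theta'{\uparrow^\omega}$ untouched, whereas the natural semantics retains and updates the corresponding binding --- either way the invariant $\Theta'{\uparrow^\omega}\le\Theta$ survives.

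The hard part will be the term-application case together with the linear-variable case, where the two semantics diverge in their treatment of binding: the instrumented linear-application rule extends the environment with a fresh linear binding $y{:}_1\sigma = x$ and continues on the unsubstituted body $e'$, while the natural rule performs the substitution $e'[x/y]$. As literally stated, $\gamma$ demands syntactic equality of the expressions, which fails here. To close the gap I would either strengthen the invariant so that states are identified up to inlining of \emph{administrative} linear variable-to-variable bindings (identifying $(\Theta',y{:}_1\sigma = x \mid e')$ with $(\Theta' \mid e'[x/y])$), or prove a small auxiliary lemma that forcing such a binding exactly once and then erasing it is indistinguishable from having substituted $x$ for $y$ outright; linearity guarantees $y$ is forced at most once, which is what makes this sound. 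Once this reconciliation is in place the induction goes through uniformly. Finally, the second item, concerning partial derivations $\Downarrow^*$, follows by the same induction carried out on the partial-derivation structure: I would track the unique open premise (the \emph{head}) and map each completed instrumented premise to a completed natural premise, so that the head of the constructed natural partial derivation is related by $\gamma$ to the head of the instrumented one, which is exactly what is needed to transport the progress property across the bisimulation.
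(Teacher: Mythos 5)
Your proposal follows the same route as the paper: the paper's entire proof of this lemma is the single sentence ``By induction on the given operational semantics rules,'' which is precisely the induction you set up, with the same case analysis and the same threading of $\gamma$ through premises. Where you go beyond the paper is in the term-application/linear-variable cases, and the subtlety you flag there is genuine: the instrumented rule for linear application extends the environment with $y{:}_1\sigma = x$ and continues with the \emph{unsubstituted} body $e'$, whereas the natural rule continues with $e'[x/y]$, so the two second-premise states are not related by $\gamma$ as literally defined (it demands syntactic equality of the expressions), and the induction hypothesis cannot be applied without repair. Your two proposed repairs --- strengthening the relation to identify states up to inlining of linear variable-to-variable bindings, i.e.\ identifying $(\Theta', y{:}_1\sigma = x \mid e')$ with $(\Theta' \mid e'[x/y])$, or an auxiliary lemma saying that forcing such a binding once and erasing it is indistinguishable from having substituted outright --- are exactly what is needed to make the induction close, and linearity of $y$ is what makes either repair sound; the same strengthened invariant is also what lets the linear-variable case line up with the natural semantics' lookup rule. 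So your treatment is not a divergence from the paper but a completion of the argument its one-line proof glosses over; the only cost is that the bisimulation relation used in the induction is (and must be) slightly coarser than the $\gamma$ stated in the lemma, with the stated form recovered at top level because initial and final expressions there coincide syntactically.
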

\begin{proof}
By induction on the given operational semantics rules.
  \end{proof}

\section{Type Safety Proofs}\label{app:proofs}

In the following we write $\Mapsto$ to stand for
$\Rightarrow_\mathsf{WHNF}$ and $\Rrightarrow$ for $\Rightarrow_\mathsf{NWHNF}$.

\subsection{Auxiliary Lemmas}
The following lemmas presuppose the variable names across the typing
contexts are distinct.

  \begin{lemma}[Linear to $\Delta$-bound]~\label{lem:onedelta-app}

    \begin{enumerate}
    
      \item If $\G; \D',\xl \vdash e :\vp$
        then $\G[\D/x],\xD; \D,\D' \vdash e : \vp$ (with $\Delta$ fresh).
      \item If $\G ; \D' , \xl \vdash_{alt} \rho \rightarrow e
        :^z_{\Delta_s} \sigma \Rightarrow \rho$ then
        $\G[\D/x],\xD;\D,\D' \vdash_{alt} \rho \rightarrow e
        :^z_{\Delta_s} \sigma \Rightarrow \rho$ (with $\Delta$ fresh).
   
      \end{enumerate}
\end{lemma}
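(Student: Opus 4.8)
The plan is to prove both parts simultaneously by structural induction on the typing derivation, with part~2 appealing to part~1 for the bodies of alternatives and the two statements interleaving through the $\mathsf{case}$ and alternative rules. Throughout, the fresh environment $\D$ is chosen so that its variables clash neither with $\D'$, with $e$, nor with the $\D$-bound binders introduced by $\mathsf{let}$, $\mathsf{letrec}$, and $\mathsf{case}$; this freshness is what lets me insert $\D$ into the linear context as the usage environment standing in for the single linear resource $x$.

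The base cases are the variable rules. The essential one is $Var_1$ with $e = x$: the rule forces the linear context to be exactly $\xl$, so $\D' = \cdot$, and I conclude $\G[\D/x], \xD; \D \vdash x : \s$ immediately by $Var_\Delta$, discharging its side condition with $\D = \D$. The other informative case is $Var_\Delta$ applied to a \emph{different} variable $y :_{\Delta_y} \in \G$ whose usage environment mentions $x$: the rule forces the linear context to equal $\Delta_y$, so $\Delta_y = \D', x$. After the rewrite, the binding of $y$ in $\G[\D/x]$ has usage environment $(\Delta_y \setminus x), \D = \D', \D$, which is precisely the new linear context $\D, \D'$; hence $Var_\Delta$ applies again and rebalances exactly, with the added $\xD$ simply weakened. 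The remaining variable rules $Var_\omega$ and $Var_p$ require the linear context to be empty or a single variable other than $x$, so they cannot occur with $\xl$ present and are vacuous.

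For the structural rules ($\lambda$- and $\Lambda$-introduction and elimination, constructor application, $\mathsf{let}$, $\mathsf{letrec}$, and the two $\mathsf{case}$ rules) the distinguished resource $x$ lies in exactly one side of each linear-context split, and I would apply the induction hypothesis to that subderivation, carrying $\D$ alongside, while leaving the other subderivations structurally intact. The load-bearing observation, invoked at every such split, is that the rewrite $\G[\D/x]$ is \emph{vacuous} on any subderivation whose linear context excludes $x$: the only bindings it alters are $\D$-bound variables whose usage environment contains $x$, and using such a variable via $Var_\Delta$ would require $x$ (in untagged or tagged form) in that subderivation's linear context, which conservation of linear resources forbids once $x$ is absent. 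Consequently the untouched subderivations remain well-typed under $\G[\D/x], \xD$ (the extra $\xD$ is admissible by weakening of the unrestricted context), and reassembling each rule with $\D$ inserted where $x$ stood yields the conclusion.

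I expect the principal obstacle to lie in the interaction with tagged and irrelevant resources in the $\mathsf{case}$ and alternative rules, i.e.~part~2. When $x$ is among the resources carried into a branch, the body's derivation may \textsc{Split} $x$ into tagged fragments $\ov{\lctag{x}{K_i}}$ that are consumed through the pattern-bound variables, so the transformation must commute with tagging: replacing $x$ by $\D$ has to turn each fragment $\lctag{x}{K_i}$ into the pointwise-tagged $\lctag{\D}{K_i}$ and split the fresh resources of $\D$, rather than a single $x$, to feed the pattern variables. The bookkeeping is heaviest in keeping the scrutinee annotation $\D_s$, the case binder $z$, and the usage environments of pattern variables coherent across the $\Rightarrow_{\mathsf{WHNF}}$ and $\Rightarrow_{\mathsf{NWHNF}}$ modes while applying $[\D/x]$ uniformly to every usage environment in $\G$, tagged ones included. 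The guiding invariant that makes all these cases go through is unchanged, however: consuming the single linear $x$ exactly once is faithfully simulated by consuming the fresh environment $\D$ exactly once through the new $\D$-bound variable $\xD$.
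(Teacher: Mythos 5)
Your proposal takes exactly the paper's route: the paper's entire proof of this lemma is a one-line structural induction that ``reconstructs the original derivation, applying rule $\mathit{Var}_\Delta$ wherever rule $\mathit{Var}_1$ was applied on $x$,'' which is precisely your plan, down to the mutual induction between the two statements. Your extra detail---the $\mathit{Var}_\Delta$ case for a variable $y$ whose usage environment mentions $x$ (where $\Gamma[\Delta/x]$ rebalances the context equation), the vacuousness of the rewrite on subderivations whose linear context excludes $x$, and the pointwise commutation of the rewrite with \textsc{Split}/tagging and the $\Delta_s$ annotation in the alternative judgement---is consistent with, and considerably more explicit than, what the paper records.
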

\begin{proof}
Straightforward induction on the structure of the given derivation,
reconstructing the original derivation and applying rule
$\mathit{Var}_\Delta$ where rule $\mathit{Var}_1$ was applied on
$x$ in the original derivation.
\end{proof}
\onedelta*

\begin{proof}
By Lemma~\ref{lem:onedelta-app}(1).
\end{proof}

\begin{lemma}[$\Delta$-bound to Linear]~\label{lem:deltaone-app}

  \begin{enumerate}
 \item If $\G,\x[\irr{\D}]; \irr{\D},\D' \vdash e : \vp$
   then $\G[x/\irr{\D}]; \D',\xl \vdash e :\vp$.
 \item If $\G ,\x[\irr{\D}]; \irr{\D}, \D' \vdash_{alt} \rho \rightarrow e
   :^z_{\Delta_s} \sigma \Rightarrow \rho$ then
   $\G[x/\irr{\D}];\D',\xl \vdash_{alt} \rho \rightarrow e
   :^z_{\Delta_s} \sigma \Rightarrow \rho$.
  \end{enumerate}
\end{lemma}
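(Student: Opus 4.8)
The plan is to proceed by simultaneous structural induction on the two given derivations, handling parts (1) and (2) together because the case rules of the main judgment appeal to the alternative judgment while the alternative rules type their right-hand sides with the main judgment. The guiding principle is precisely dual to Lemma~\ref{lem:onedelta-app}: wherever the original derivation types an occurrence of $x$ by rule $\mathit{Var}_\Delta$ --- which, by that rule's premise, forces the ambient linear context to be exactly the irrelevant bundle $\irr{\D}$ --- the reconstructed derivation types that same occurrence by rule $\mathit{Var}_1$ with the single linear resource $\xl$. Every other rule is replayed with the same shape, threading the substitution $\G[x/\irr{\D}]$ through all sub-derivations.

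First I would dispatch the variable leaves. The critical case is $\mathit{Var}_\Delta$ on $x$ itself: its premise gives linear context $\irr{\D}$, and we conclude $\G[x/\irr{\D}]; \xl \vdash x : \vp$ directly by $\mathit{Var}_1$ (here $\D' = \cdot$, consistent with the statement). For $\mathit{Var}_\omega$ and $\mathit{Var}_p$ on other variables the linear context cannot contain $\irr{\D}$, so the conclusion transfers once the substitution is applied to the usage-environment annotations carried in $\G$. The one delicate variable case is $\mathit{Var}_\Delta$ on a different $\Delta$-bound variable $w{:}_{\D_w}\s_w$: its premise requires $\D_w$ to match the current linear context, which here contains $\irr{\D}$; after substitution $w$ is recorded as $w{:}_{\D_w[x/\irr{\D}]}\s_w$ while the linear context has $\irr{\D}$ replaced by $\xl$, so the $\mathit{Var}_\Delta$ premise is preserved.

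The inductive cases for abstraction, application, and the (recursive) let rules follow by applying the induction hypotheses to the premises and reassembling, taking care that the usage environments annotating freshly introduced $\Delta$-bound binders (the let binders, the case binder, and the pattern variables) have $\irr{\D}$ rewritten to $x$ in lockstep with the context. The hard part will be the interaction with the $\textsc{Split}$ rule and tagged resources in the non-WHNF case and alternative rules: when the bundle $\irr{\D}$ is split into tagged fragments $\lctag{\irr{\D}}{K_i}$ and distributed to pattern variables, turning $x$ into a linear variable requires replaying the split on $\xl$ itself, so that each fragment $\lctag{\irr{\D}}{K_i}$ appearing in a usage environment becomes $\lctag{\xl}{K_i}$. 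This is sound precisely because $\textsc{Split}$ applies to relevant as well as irrelevant resources, but it forces the substitution $[x/\irr{\D}]$ to be defined not only on the whole bundle but compatibly on its tagged fragments; checking that this fragment-level bookkeeping commutes with every rule that manipulates tagged resources is where the real care lies. Once it is set up, each remaining rule application transfers mechanically, exactly as in the dual lemma.
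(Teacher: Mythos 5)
Your proposal is correct and follows essentially the same route as the paper, whose entire proof reads: ``Straightforward induction on the structure of the given derivation, reconstructing the original derivation and applying rule $\mathit{Var}_1$ where rule $\mathit{Var}_\D$ was applied on $x$ in the original derivation.'' Your additional care about the substitution $\G[x/\irr{\D}]$ acting on usage environments of other $\Delta$-bound variables and on tagged fragments under $\textsc{Split}$ is sound elaboration of details the paper leaves implicit.
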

\begin{proof}
Straightforward induction on the structure of the given derivation,
reconstructing the original derivation and applying rule
$\mathit{Var}_1$ where rule $\mathit{Var}_\D$ was applied on
$x$ in the original derivation.
\end{proof}

\deltaone*
\begin{proof}
By Lemma~\ref{lem:deltaone-app}(1).
\end{proof}

\begin{lemma}[Unrestricted and $\Delta$-bound]~\label{lem:undelta-app}
  \begin{enumerate}
    \item $\G,\xo; \D \vdash e : \vp$ iff $\G,\x[\cdot]; \D \vdash e :
      \vp$
    \item $\G,\xo;\D \vdash_{alt} \rho \rightarrow e
      :^z_{\Delta_s} \sigma \Rightarrow \rho$ iff
      $\G,\x[\cdot];\D \vdash_{alt} \rho \rightarrow e
   :^z_{\Delta_s} \sigma \Rightarrow \rho$
  \end{enumerate}
\end{lemma}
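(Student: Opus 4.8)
The plan is to prove both biconditionals---for the main judgement (part 1) and the case-alternative judgement (part 2)---simultaneously, by structural induction on the given derivation, treating each direction of the ``iff'' separately. The crux is a single observation about the variable rules: when the usage environment is empty, $\mathit{Var}_\Delta$ and $\mathit{Var}_\omega$ behave identically. Concretely, $\G,\xo;\cdot \vdash x:\tau$ holds by $\mathit{Var}_\omega$, while $\G,\x[\cdot];\cdot \vdash x:\tau$ holds by $\mathit{Var}_\Delta$ with its side condition instantiated to the trivially true $\cdot = \cdot$; both have no nontrivial premises, both force the linear context to be empty, and both yield the identical conclusion. Every other occurrence of $x$ is immaterial, because $x$ (in either form) lives in the unrestricted context $\G$, which is duplicated freely across sub-derivations and is never the object of the linear split performed by the elimination, let, and case rules.

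For the forward direction of part 1 I would induct on the derivation of $\G,\xo;\D \vdash e:\vp$ and rebuild it verbatim after replacing $\xo$ by $\x[\cdot]$, invoking the induction hypothesis on each premise. The only rule that changes is a leaf use of $\mathit{Var}_\omega$ on $x$ itself: there the linear context is forced to be $\cdot$, so I replace that leaf with $\mathit{Var}_\Delta$, whose side condition $\cdot = \cdot$ is then satisfied. The converse direction is entirely symmetric, turning each $\mathit{Var}_\Delta$ application on $x$ (whose empty usage environment forces the linear context to be $\cdot$) back into $\mathit{Var}_\omega$. This mirrors exactly the reconstruction strategy already used for Lemmas~\ref{lem:onedelta-app} and~\ref{lem:deltaone-app}.

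The one place that warrants a second look is the family of rules that manipulate usage environments recorded inside $\G$---namely $\mathit{Alt0}$, which rewrites the case binder's annotation via $\G[\cdot/\D_s]_z$, and the two case rules, which introduce fresh $\Delta$-bound case binders and pattern variables. I would argue these leave the binding for $x$ untouched under both readings: an unrestricted $\xo$ carries no usage environment at all, and $\x[\cdot]$ carries the empty one, so neither can overlap any scrutinee environment $\D_s$ and both are invariant under the substitution $[\cdot/\D_s]$; moreover $x$ is never itself the freshly introduced binder $z$ or a pattern variable. Part 2 is absorbed into the same induction, since the alternative rules ($\mathit{AltN}$, $\mathit{Alt0}$, and the wildcard rule) ultimately defer to the main judgement on the alternative body, where the induction hypothesis for part 1 applies. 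I expect this bookkeeping around the usage-environment substitutions to be the only mildly delicate point; everything else is a direct, rule-by-rule transport of the derivation.
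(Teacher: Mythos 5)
Your proposal is correct and follows essentially the same route as the paper's proof, which is a straightforward structural induction on the given derivation ``noting that rules $\mathit{Var}_\Delta$ and $\mathit{Var}_\omega$ coincide when $\Delta$ is empty'' --- precisely your key observation. The additional bookkeeping you supply (invariance of the $x$ binding under the $\mathit{Alt0}$ substitution, which is scoped to the case binder $z$ anyway, and the absorption of part 2 into the same induction) is sound detail that the paper leaves implicit.
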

\begin{proof}
Straightforward induction on the structure of the given derivation,
noting that rules $\mathit{Var}_\Delta$ and $\mathit{Var}_\omega$
coincide when $\Delta$ is empty.
\end{proof}

\undelta*

\begin{proof}
By Lemma~\ref{lem:undelta-app}.
\end{proof}

\subsection{Irrelevance\label{sec:proof:irrelevance}}

\irrelevancelemma*

\begin{proof}
By structural induction on the case alternative typing derivation.

\begin{description}

\item[Case:] $Alt\_$
\begin{tabbing}
    (1) $\G,\z[\irr{\D}];\irr{\D}, \D' \vdash \_ \ROUNDTWO{\Rightarrow} e :^z_{\irr{\D}} \s $\\
    (2) $\G, \z[\irr{\D}];\irr{\D},\D' \vdash e : \s$\\
    (3) $\irr{\D}$ is used through $z$\`since $\irr{\D}$ can't otherwise be used\\\`and is introduced in this alternative uniquely \\\`(since we have multi-tier proof irrelevance, i.e. $\irr{\irr{\D}}\neq\irr{\D}$)\\
    (4) $\G[z/\irr{\D}]; \z[1], \D' \vdash e : \s$\` by Lemma~\ref{lem:deltaone} (2,3)\\
    (5) $\G[\ov{\D_i}/\irr{\D}], \z[\ov{\D_i}]; \ov{\D_i}, \D' \vdash e : \s$\` by Lemma~\ref{lem:onedelta} (4)\\
    (6) $\G, \z[\ov{\D_i}]; \ov{\D_i}, \D' \vdash e : \s$\` by (3), resources in $\G$ with $\irr{\D}$ (parts) in\\\` the usage environment cannot be used in $e$\\
    (7) $\G, \z[\ov{\D_i}]; \ov{\D_i}, \D' \vdash \_ \,\ROUNDTWO{\Rightarrow}\, e :^z_{\ov{\D_i}} \s $\` by $Alt\_$\\
\end{tabbing}

\item[Case:] $Alt0$
\begin{tabbing}
    (1) $\G, \z[\irr{\D}]; \irr{\D}, \D' \vdash K~\ov{\xo}\ROUNDTWO{\Rightarrow} e :^z_{\irr{\D}} \s $\\
    (2) $\G, \z[\cdot], \ov{\xo}; \D' \vdash e : \s$\` by inv. on $Alt0$ and def. of empty subst.\\
    (3) $\G, \z[\ov{\D_i}]; \ov{\D_i}, \D' \vdash K~\ov{\xo}\ROUNDTWO{\Rightarrow} e :^z_{\ov{\D_i}} \s $\` by $Alt0$ (2)\\
\end{tabbing}

\item[Case:] $AltN_{\textrm{WHNF}}$
\begin{tabbing}
    Not applicable since $\Rrightarrow$ is only generalized by $\Rightarrow$, not $\Mapsto$.
\end{tabbing}

\item[Case:] $AltN_{\textrm{Not WHNF}}$
    We prove the theorem for constructing both an $AltN_{\textrm{Not WHNF}}$
        and an $AltN_{\textrm{WHNF}}$ from a proof-irrelevant $AltN_{\textrm{Not
        WHNF}}$, to prove the statement holds for any $\Rightarrow$ kind for
        $AltN$, rather than just $\Rrightarrow$ or $\Mapsto$.
\begin{tabbing}
    (1) $\G, \z[\irr{\D}]; \irr{\D}, \D' \vdash K~\ov{\xo},\ov{y_i{:}_{1}\s_i}^n\ROUNDTWO{\Rightarrow} e :^z_{\irr{\D}} \s $\\
    (2) $\G,\z[\irr{\D}],\ov{\xo},\ov{y_j{:}_{\D_j}\s_j}^n; \irr{\D}, \D' \vdash e : \s$\`by inv. on $AltN_{\textrm{Not WHNF}}$\\
    (3) $\ov{\D_j} = \ov{\lctag{\irr{\D}}{K_j}}$\`by inv. on $AltN_{\textrm{Not WHNF}}$\\
    Subcase $\irr{\D}$ is consumed through $z$\\
    (4) $\G,\ov{\xo},\ov{y_j{:}_{\D_j[z/\irr{\D}]}\s_j}^n; \z[1], \D' \vdash e : \s$\`by Lemma~\ref{lem:deltaone} (2,subcase)\\\`and vars in $\G$ with $\irr{\D}$ cannot be used by subcase\\
    Subcase constructing $AltN_{\textrm{WHNF}}$ ($\Mapsto$)\\
    (5) $\G,\ov{\xo},\ov{y_j{:}_{\D_i}\s_j}^n; \z[1], \D' \vdash e : \s$\`by $\ov{y}$ is not used, and $Weaken$\\
    (6) $\G,\ov{\xo},\ov{y_j{:}_{\D_i}\s_j}^n, \z[\ov{\D_i}]; \ov{\D_i}, \D' \vdash e : \s$\`by Lemma~\ref{lem:onedelta} (5)\\\` and $\G$ vars do not mention $z$ by (4) \\
    (7) $\G, \z[\ov{\D_i}]; \ov{\D_i}, \D' \vdash K~\ov{\xo},\ov{y_i{:}_{1}\s_i}^n \ROUNDTWO{\Mapsto} e :^z_{\ov{\D_i}} \s $\`by $AltN_{\textrm{WHNF}}$ (6)\\
    Subcase constructing $AltN_{\textrm{Not WHNF}}$ ($\Rrightarrow$)\\
    (5) $\G,\ov{\xo},\ov{y_j{:}_{\lctag{\ov{\D_i}}{K_j}}\s_j}^n; \z[1], \D' \vdash e : \s$\`by $\ov{y}$ does not consume resources\\
    (6) $\G,\ov{\xo},\ov{y_j{:}_{\lctag{\ov{\D_i}}{K_j}}\s_j}^n, \z[\ov{\D_i}]; \ov{\D_i}, \D' \vdash e : \s$\`by Lemma~\ref{lem:onedelta} (5)\\\` and $\G$ vars do not mention $z$ by (4) \\
    (7) $\G, \z[\ov{\D_i}]; \ov{\D_i}, \D' \vdash K~\ov{\xo},\ov{y_i{:}_{1}\s_i}^n \ROUNDTWO{\Rightarrow} e :^z_{\ov{\D_i}} \s $\`by $AltN_{\textrm{Not WHNF}}$ (6)\\
    Subcase $\irr{\D}$ is (fully) consumed by $\ov{y}$ (after splitting)\\
    (4) $\G,\z[\ov{y}],\ov{\xo};\ov{y_j{:}_{1}\s_j}^n, \D' \vdash e : \s$\`by Lemma~\ref{lem:deltaone} (2, subcase)\\\` and vars in $\G$ with $\irr{\D}$ cannot be used by subcase\\
    (5) $\G,\z[\ov{\D_i}],\ov{\xo};\ov{y_j{:}_{1}\s_j}^n, \D' \vdash e : \s$\`by $Weaken$ and $z$ does not consume resources\\
    Subcase constructing $AltN_{\textrm{WHNF}}$ ($\Mapsto$)\\
    (6) $\G,\z[\ov{\D_i}],\ov{\xo},\ov{y_j{:}_{\D_i}\s_j}^n; \ov{\D_i}, \D' \vdash e : \s$\`by Lemma~\ref{lem:onedelta} (5)\\\`and vars in $\G$ do not mention $\ov{y}$ (4,5)\\
    (7) $\G,\z[\ov{\D_i}]; \ov{\D_i}, \D' \vdash K~\ov{\xo},\ov{y_i{:}_{1}\s_i}^n\ROUNDTWO{\Mapsto}e :^z_{\ov{\D_i}} \s$\`by $AltN_{\textrm{WHNF}}$\\
    Subcase constructing $AltN_{\textrm{Not WHNF}}$ ($\Rrightarrow$)\\
    (6) $\G,\z[\ov{\D_i}],\ov{\xo},\ov{y_j{:}_{\lctag{\ov{\D_i}}{K_j}}\s_j}^n; \ov{\D_i}, \D' \vdash e : \s$\`by $1 \Rightarrow \D$ lemma (5)\\\`and vars in $\G$ do not mention $\ov{y}$ (4,5)\\
    (7) $\G,\z[\ov{\D_i}]; \ov{\D_i}, \D' \vdash K~\ov{\xo},\ov{y_i{:}_{1}\s_i}^n\ROUNDTWO{\Rrightarrow} e :^z_{\ov{\D_i}} \s $\`by $AltN_{\textrm{Not WHNF}}$\\
\end{tabbing}

\end{description}
\end{proof}

\subsection{Substitution Lemmas\label{sec:proof:substitution-lemmas}}

The linear substitution lemma states that a well-typed expression $e$ with a
linear variable $x$ of type $\s$ remains well-typed if
occurrences of $x$ in the $e$ are replaced by an expression $e'$ of the same
type $\s$, and occurrences of $x$ in the linear context and in usage
environments of $\D$-bound variables are replaced by the linear context $\D'$
used to type $e'$:
\LinearSubstitutionLemma

\noindent Where $\G[\D'/x]$ substitutes all occurrences of $x$ in the usage
environments of $\D$-variables in $\G$ by the linear variables in
$\D'$.
We further require that the environment annotated in the case
alternative judgement, $\D_s$, is a subset of the environment used to type the
whole alternative $\D_s \subseteq \D$. In all occurrences of the alternative
judgement (in $Case_{\textrm{WHNF}}$ and $Case_{\textrm{Not WHNF}}$), the
environment annotating the alternative judgement is \emph{always} a subset of
the alternative environment.

\begin{proof}
By structural induction on the given derivation. We omit most trivial cases.

Statement (1):

\begin{description}
  
\item[Case:] $\Lambda I$
\begin{tabbing}
  (1) $\Gamma; \D, x{:}_1\sigma \vdash \Lambda p.~e : \forall p.~\varphi$\\
  (2) $\G; \D' \vdash e' : \sigma$\\
  (3) $\G, p; \D,x{:}_1\sigma \vdash e : \varphi$ \` by inversion on $\Lambda I$\\
  (4) $p \notin \Gamma$ \` by inversion on $\Lambda I$\\
  (5) $\G[\D'/x],p; \D,\D' \vdash e[e'/x] : \varphi$ \` by i.h.(1) by (2,3)\\
  (6) $\G[\D'/x];\D,\D' \vdash \Lambda p.~e[e'/x] : \forall p.~\varphi$ \` by $\Lambda I$ (4,5)\\
  (7) $(\Lambda p.~e)[e'/x] = (\Lambda p.~e[e'/x])$ \` by def. of substitution\\
\end{tabbing}

\item[Case:] $\lambda I_1$
\begin{tabbing}
  (1) $\G; \D, \xl \vdash \lambda y{:}_1\sigma'.~e : \sigma' \to_1 \varphi$\\
  (2) $\G; \D' \vdash e' : \sigma$\\
  (3) $\G; \D, \xl, y{:}_1\sigma' \vdash e : \varphi$ \` by inversion on $\lambda I$\\
  (4) $\G[\D'/x]; \D, y{:}_1\sigma', \D' \vdash e[e'/x] : \varphi$ \` by i.h.(1) by (2,3)\\
  (5) $\G[\D'/x]; \D,\D' \vdash \lambda y{:}_1\sigma'.~e[e'/x] : \sigma' \to_1 \varphi$ \` by $\lambda I$ (4)\\
  (6) $(\lambda y{:}_1\sigma'.~e)[e'/x] = (\lambda y{:}_1\sigma'.~e[e'/x])$ \` by def. of substitution\\
\end{tabbing}

\item[Case:] $Var_1$
\begin{tabbing}
  (1) $\G;x{:}_1\sigma \vdash x : \sigma$\\
  (2) $\G;\D' \vdash e' : \sigma$\\
  (3) $\G[\D'/x]; \D' \vdash e' : \s$\` by weaken\\
  (4) $x[e'/x] = e'$ \` by def. of substitution\\
(5) $\G[\D'/x];\D' \vdash e' : \sigma$ \` by (3)\\
\end{tabbing}

\item[Case:] $Var_\omega$\\
  (1) Impossible. $x{:}_1\sigma$ can't be in the context.\\

\item[Case:] $Var_\Delta$
\begin{tabbing}
  (1) $\G,y{:}_{\Delta,\xl}\varphi; \D, x{:}_1\sigma \vdash y : \varphi$\\
  (2) $\G; \D' \vdash e' : \sigma$\\
  (3) $y[e'/x] = y$\\
  (4) $\G[\D'/x],y{:}_{\Delta,\D'}\varphi; \D,\D'; \vdash y : \varphi$ \` by $Var_\Delta$\\
\end{tabbing}

\item[Case:] $\lambda E_1$
\begin{tabbing}
  (1) $\G; \D,\D'',\xl \vdash e~e'' : \vp$\\
  (2) $\G; \D' \vdash e' : \s$\\
  Subcase $x$ occurs in $e$\\
  (3) $\G; \D,\xl \vdash e : \s' \to_1 \vp$\` by inversion on $\lambda E_1$\\
  (4) $\G; \D'' \vdash e'' : \s'$\` by inversion on $\lambda E_1$\\
  (5) $\G[\D'/x]; \D, \D' \vdash e[e'/x] : \s' \to_1 \vp$\` by i.h.(1) (2,3)\\
  (6) $\G[\D'/x]; \D, \D', \D'' \vdash e[e'/x]~e'' : \vp$\` by $\lambda E_1$\\
  (7) $(e[e'/x]~e'') = (e~e'')[e'/x]$ \` because $x$ does not occur in $e''$\\
  Subcase $x$ occurs in $e''$\\
  (3) $\G; \D \vdash e : \s' \to_1 \vp$\` by inversion on $\lambda E_1$\\
  (4) $\G; \D'', \xl \vdash e'' : \s'$\` by inversion on $\lambda E_1$\\
  (5) $\G[\D'/x]; \D'',\D' \vdash e''[e'/x] : \s'$\` by i.h.(1) (2,4)\\
  (6) $\G[\D'/x]; \D,\D',\D'' \vdash e~e''[e'/x] : \vp$\` by $\lambda E_1$\\
(7) $(e~e''[e'/x]) = (e~e'')[e'/x]$ \` because $x$ does not occur in $e$\\
\end{tabbing}

\item[Case:] $\lambda E_\omega$
\begin{tabbing}
  (1) $\G; \D, \xl \vdash e~e'' : \vp$\\
  (2) $\G; \D' \vdash e' : \s$\\
  (3) $x$ does not occur in $e''$\` by $e''$ linear context is empty\\
  (4) $\G; \D, \xl \vdash e : \s' \to_\omega \vp$\` by inversion on $\lambda E_\omega$\\
  (5) $\G; \cdot \vdash e'' : \s'$\` by inversion on $\lambda E_\omega$\\
  (6) $\G[\D'/x]; \D, \D' \vdash e[e'/x] : \s' \to_\omega \vp$\` by i.h.(1) (2,4)\\
  (7) $\G[\D'/x]; \D, \D' \vdash e[e'/x]~e'' : \vp$\` by $\lambda E_\omega$\\
(8) $(e[e'/x]~e'') = (e~e'')[e'/x]$ \` because $x$ does not occur in $e''$\\
\end{tabbing}

\item[Case:] $Let$
\begin{tabbing}
  (1) $\G; \D' \vdash e' : \s$\\
  Subcase $x$ occurs in $e$\\
  (2) $\G; \D, \xl, \D'' \vdash \llet{\y[\D,\xl][\s'] = e}{e''} : \vp$\\
  (3) $\G,\y[\D,\xl][\s']; \D, \xl, \D'' \vdash e'' : vp$\` by inversion on $Let$\\
  (4) $\G; \D, \xl \vdash e : \s'$\` by inversion on $Let$\\
  (5) $\G[\D'/x],\y[\D,\D'][\s']; \D, \D', \D'' \vdash e''[e'/x]$\` by i.h.(1) $(1,3)$\\
  (6) $\G[\D'/x]; \D, \D' \vdash e[e'/x] : \s'$ \` by i.h.(1) $(1,4)$\\
(7) $\G[\D'/x]; \D, \D', \D'' \vdash \llet{\y[\D,\D'][\s'] = e[e'/x]}{e''[e'/x]} : \vp$ \` (5,6) by $Let$\\
(8) $(\llet{\y[\D,\D'][\s'] = e[e'/x]}{e''[e'/x]}) = (\llet{\y[\D,\D'][\s'] = e}{e''})[e'/x]$ \` by subst.\\
  Subcase $x$ does not occur in $e$\\
  (2) $\G; \D, \D'', \xl  \vdash \llet{\yD = e}{e''} : \vp$\\
  (3) $\G,\yD; \D, \D'', \xl \vdash e'' : \vp$ \` by inversion on $Let$\\
  (4) $\G; \D \vdash e : \s'$ \` by inversion on $Let$\\
  (5) $\G[\D'/x],\yD; \D, \D', \D'' \vdash e''[e'/x] : \vp$\` by i.h.(1) (1,3)\\
  (6) $\G[\D'/x]; \D, \D', \D'' \vdash \llet{\yD = e}{e''[e'/x]} : \vp$ \` by $Let$ (2,5,6)\\
(7) $\llet{\yD = e}{e''[e'/x]} = (\llet{\yD = e}{e''})[e'/x]$\` by $x$ does
  not occur in $e$\\
\end{tabbing}

\item[Case:] $LetRec$
\begin{tabbing}
  (1) $\G; \D' \vdash e' : \s$\\
Subcase $\xl$ occurs in some $e_i$\\
  (2) $\G; \D, \xl, \D'' \vdash \lletrec{\ov{\var[y_i][\D,\xl][\s_i] = e_i}}{e''} : \vp$\\
  (3) $\G, \ov{\var[y_i][\D,\xl][\s_i]}; \D, \xl, \D'' \vdash e'' : \vp$\` by inversion on $LetRec$\\
  (4) $\ov{\G,\ov{\var[y_i][\D, \xl][\s_i]}; \D, \xl \vdash e_i : \s_i}$\` by inversion on $LetRec$\\
  (5) $\G[\D'/x], \ov{\var[y_i][\D, \D'][\s_i]}; \D, \D', \D'' \vdash e''[e'/x] : \vp$ \` by i.h.(1) (1,3)\\
  (6) $\ov{\G, \ov{\var[y_i][\D, \D'][\s_i]}; \D, \D' \vdash e_i[e'/x] : \s_i}$\` by i.h.(1) (1,4)\\
  (7) $\G[\D'/x]; \D, \D', \D'' \vdash \lletrec{\ov{\var[y_i][\D,\G'_1][\s_i] = e_i[e'/x]}}{e''[e'/x]} : \vp$\` by $LetRec$\\
  (8) $(\lletrec{\ov{\var[y_i][\D,\D'][\s_i] = e_i}}{e''})[e'/x] = \lletrec{\ov{\var[y_i][\D,\D'][\s_i] = e_i[e'/x]}}{e''[e'/x]}$\\
  Subcase $\xl$ does not occur in any $e_i$\\
  (2) $\G; \D, \xl, \D'' \vdash \lletrec{\ov{\var[y_i][\D][\s_i] = e_i}}{e''} : \vp$\\
  (3) $\G, \ov{\var[y_i][\D][\s_i]}; \D, \xl, \D'' \vdash e'' : \vp$\` by inversion on $LetRec$\\
  (4) $\ov{\G, \ov{\var[y_i][\D][\s_i]}; \D \vdash e_i : \s_i}$\` by inversion on $LetRec$\\
  (5) $\G[\D'/x], \ov{\var[y_i][\D][\s_i]}; \D, \D', \D'' \vdash e''[e'/x] : \vp$\` by i.h.(1) (1,3)\\
  (6) $\G[\D'/x]; \D, \D', \D'' \vdash \lletrec{\ov{\var[y_i][\D][\s_i] = e_i}}{e''[e'/x]} : \vp$\` by $LetRec$\\
  (7) $\lletrec{\ov{\var[y_i][\D][\s_i] = e_i}}{e''[e'/x]} = (\lletrec{\ov{\var[y_i][\D][\s_i] = e_i}}{e''})[e'/x]$\` by subcase\\
\end{tabbing}

\item[Case:] $CaseWHNF$
\begin{tabbing}
    (1) $\judg[\G][\D']{e'}{\s}$\\
    Subcase $x$ occurs in $e$\\
    (2) $\judg[\G][\D,\xl,\D'']{\ccase{e}{\z[\D,\xl][\s']~\{\ov{\rho\Rightarrow e''}\}}}{\vp}$\\
    (3) $e$ is in WHNF\\
    (4) $\judg[\G][\D,\xl]{e}{\s'}$\\
    (5) $\rho_j~\textrm{matches}~e$
    (6) $\judg[\G,\zr{\D,\xl}{\s'}][\D,\xl,\D'']{\rho_j \Mapsto  e''}{\vp}[][\D,\xl][z]$\\
    (7) $\ov{\judg[\G,\zr{\irr{\D,\xl}}{\s'}][\irr{\D,\xl},\D'']{\rho\Rrightarrow e''}{\vp}[][\irr{\D,\xl}][z]}$\`by inv.\\
    (8) $\judg[\subst{\G}{\D'}{x}][\D,\D']{e[e'/x]}{\vp}$\`by i.h.(1)\\
    (9) $\judg[\subst{\G}{\D'}{x},\zr{\D,\D'}{\s'}][\D,\D',\D'']{\rho_j \Mapsto e''[e'/x]}{\ \vp}[][\D,\D'][z]$\` by i.h.(2)\\
    (10) $\ov{\judg[\subst{\G}{\D'}{x},\zr{\irr{\D,\D'}}{\s'}][\irr{\D,\D'},\D'']{\rho\Rrightarrow e''[e'/x]}{ \vp}[alt][\irr{\D,\D'}][z]}$ \` by Irrelevance\\
    (11) $\judg[\subst{\G}{\D'}{x}][\D,\D',\D'']{\ccase{e[e'/x]}{\z[\D,\D'][\s']~\{\ov{\rho\to e''[e'/x]}\}}}{\vp}$\\
    Subcase $x$ occurs in $\ov{e''}$\\
    (2) $\judg[\G][\D,\D'',\xl]{\ccase{e}{\z[\D][\s']~\{\ov{\rho\Rightarrow e''}\}}}{\vp}$\\
    (3) $e$ is in WHNF\\
    (4) $\rho_j$ matches $e$\\
    (5) $\judg[\G][\D]{e}{\s'}$\\
    (6) $\judg[\G,\zr{\D}{\s'}][\D,\D'',\xl]{\rho_j \to e''}{\s' \Mapsto \vp}[alt][\D][z]$\\
    (7) $\ov{\judg[\G,\zr{\irr{\D}}{\s'}][\irr{\D},\D'',\xl]{\rho\Rrightarrow e''}{\vp}[][\irr{\D}][z]}$\`by inv.\\
    (8) $e[e'/x] = e$\` by $x$ does not occur in $e$\\
    (9) $\judg[\subst{\G}{\D'}{x},\zr{\D}{\s'}][\D,\D'',\D']{\rho_j \Mapsto e''[e'/x]}{\vp}[][\D][z]$\`by i.h.(2)\\
    (10) $\ov{\judg[\subst{\G}{\D'}{x},\zr{\irr{\D}}{\s'}][\irr{\D},\D'',\D']{\rho\Rrightarrow e''[e'/x]}{ \vp}[][\irr{\D}][z]}$\`by i.h.(2)\\
    (11) $\judg[\subst{\G}{\D'}{x}][\D,\D'',\D']{\ccase{e}{\z[\D][\s']~\{\ov{\rho\Rightarrow e''[e'/x]}\}}}{\vp}$\\
\end{tabbing}

\item[Case:] $CaseNotWHNF$
\begin{tabbing}
  (1) $\judg[\G][\D']{e'}{\s}$\\
  Subcase $x$ occurs in $e$\\
    (2) $\judg[\G][\D,\xl,\D'']{\ccase{e}{\z[\irr{\D,\xl}][\s']~\{\ov{\rho\Rightarrow e''}\}}}{\vp}$\\
    (3) $e$ is definitely not in WHNF\\
    (4) $\judg[\G][\D,\xl]{e}{\s'}$\`by inv.\\
    (5) $\ov{\judg[\G,\zr{\irr{\D,\xl}}{\s'}][\irr{\D,\xl},\D'']{\rho\Rrightarrow e''}{\vp}[][\irr{\D,\xl}][z]}$\`by inv.\\
    (6) $\judg[\subst{\G}{\D'}{x}][\D,\D']{e[e'/x]}{\vp}$\`by i.h.(1)\\
    (7) $\ov{\judg[\subst{\G}{\D'}{x},\zr{\irr{\D,\D'}}{\s'}][\irr{\D,\D'},\D'']{\rho\Rrightarrow e''[e'/x]}{ \vp}[][\irr{\D,\D'}][z]}$ \` by Irrelevance\\
(8) $\judg[\subst{\G}{\D'}{x}][\D,\D',\D'']{\ccase{e[e'/x]}{\z[\D,\D'][\s']~\{\ov{\rho\Rightarrow e''[e'/x]}\}}}{\vp}$\\
  Subcase $x$ occurs in $\ov{e''}$\\
    (2) $\judg[\G][\D,\D'',\xl]{\ccase{e}{\z[\irr{\D}][\s']~\{\ov{\rho\Rightarrow e''}\}}}{\vp}$\\
    (3) $e$ is definitely not in WHNF\\
    (4) $\judg[\G][\D]{e}{\s'}$\`by inv.\\
    (5) $\ov{\judg[\G,\zr{\irr{\D}}{\s'}][\irr{\D},\D'',\xl]{\rho\Rrightarrow e''}{ \vp}[][\irr{\D}][z]}$\`by inv.\\
    (6) $e[e'/x] = e$\` by $x$ does not occur in $e$\\
    (7) $\ov{\judg[\subst{\G}{\D'}{x},\zr{\irr{\D}}{\s'}][\irr{\D},\D'',\D']{\rho\Rrightarrow e''[e'/x]}{\vp}[][\irr{\D}][z]}$\`by i.h.(2)\\
    (8) $\judg[\subst{\G}{\D'}{x}][\D,\D'',\D']{\ccase{e}{\z[\irr{\D}][\s']~\{\ov{\rho\Rightarrow e''[e'/x]}\}}}{\vp}$\\
\end{tabbing}

\end{description}

Statement (2):

\begin{description}
\item[Case:] $AltN_{WHNF}$
\begin{tabbing}
    (1) $\judg[\G][\D']{e'}{\s}$\\
    (2) $\judg[\G][\D,\xl]{\konstructor~\Mapsto e}{\vp}[][\D_s][z][\dag]$\\
    (3) $\judg[\G,\ov{\xo},\ov{y_i{:}_{\D_i}\s_i}][\D,\xl]{e}{\vp}$\`by inv.\\
    (4) $\judg[\subst{\G}{\D'}{x},\ov{\xo},\ov{y_i{:}_{\subst{\D_i}{\D'}{x}}\s_i}][\D,\D']{e[e'/x]}{\vp}$\` by i.h.(1)\\
    (5) $\judg[\subst{\G}{\D'}{x}][\D,\D']{\rho \Mapsto e[e'/x]}{\vp}[][\subst{\D_s}{\D'}{x}][z][\dag]$\` by (4)\\
\end{tabbing}

\item[Case:] $AltN_{Not WHNF}$
\begin{tabbing}
    (1) $\judg[\G][\D']{e'}{\s}$\\
    (2) $\judg[\G][\D,\xl]{\konstructor~\Rrightarrow e}{\vp}[][\D_s][z][\ddag]$\\
(3) $\ov{\D_i}=\ov{\lctag{\D_s}{K_j}}^n$\`by inv.\\
(4) $\judg[\G,\ov{\xo},\ov{y_i{:}_{\D_i}\s_i}][\D,\xl]{e}{\vp}$\`by inv.\\
    (5) $\judg[\subst{\G}{\D'}{x},\ov{\xo},\ov{y_i{:}_{\subst{\D_i}{\D'}{x}}\s_i}][\D,\D']{e[e'/x]}{\vp}$\` by i.h.(1)\\
    (6) $\ov{\D_i[\D'/x]}=\ov{\lctag{\D_s[\D'/x]}{K_j}}^n$\`by (3) and cong.\\
    (7) $\judg[\subst{\G}{\D'}{x}][\D,\D']{\rho \Rrightarrow e[e'/x]}{ \vp}[][\subst{\D_s}{\D'}{x}][z][\ddag]$\` by (5,6)\\
\end{tabbing}

\item[Case:] $Alt0$
    This is one of the most interesting proof cases, and challenging to prove.
        \begin{itemize}
            \item The first insight is to divide the proof into two subcases, accounting
                for when the scrutinee (and hence $\D_s$) contains the linear resource and when it does not.
            \item The second insight is to recall that $\D$ and $\D'$ are
                disjoint to be able to prove the subcase in which $x$ does not
                occur in the scrutinee
            \item The third insight is to \emph{create} linear resources
                seemingly out of nowhere \emph{under a substitution that
                removes them}. We see this happen in the case where $x$ occurs
                in the scrutinee, for both the linear and affine contexts (see (5,6)).
                We must also see that we can swap $x$ for $\D'$ if neither can occur (see (7)).
        \end{itemize}
\begin{tabbing}
  (1) $\judg[\G][\D']{e'}{\s}$\\
  Subcase $x$ occurs in scrutinee\\
    (2) $\judg[\G][\D,\xl]{K~\ov{\xo}~\Rightarrow e}{\vp}[][\D_s,\xl][z]$\\
    (2.5) $\judg[\subst{\G}{\cdot}{\D_s,x}_z,\ov{\xo}][\subst{(\D,\xl)}{\cdot}{\D_s,x}]{e}{\vp}$\`by inv.\\
    (3) $\judg[\subst{\G}{\cdot}{\D_s,x}_z,\ov{\xo}][\subst{\D}{\cdot}{\D_s}]{e}{\vp}$\\
    (4) $e[e'/x] = e$\` since $x$ cannot occur in $e$ (erased from cx)\\
    (5) $\D[\cdot/\D_s] = (\D,\D')[\cdot/\D_s,\D']$   \` by cong. of subst.\\
    (6) $\G[\cdot/\D_s,x]_z[\D'/x] = \G[\D'/x][\cdot/\D_s,\D']_z$\` by cong. of subst.\\(7) $\forall x,\D,\D',\G:x\notin\Delta \land \Delta'\not\subset\D \land \G;\D \vdash e :\s \Rightarrow \G[\D'/x];\D \vdash e : \s$\` by Weaken\\
    \` and variables in $\G$ cannot occur in $e$ if they mention $x$ nor if they mention $\D'$ \\
(8) $\judg[\subst{\subst{\G}{\D'}{x}}{\cdot}{\D_s,\D'}_z,\ov{\xo}][\subst{(\D,\D')}{\cdot}{\D_s,\D'}]{e[e'x]}{\vp}$\` by (4,5,6,7)\\\` and $x$ and $\D'$ are erased from ctx\\
(9) $\judg[\subst{\G}{\D'}{x}][\D,\D']{K~\ov{\xo}~\Rightarrow e[e'/x]}{ \vp}[alt][\D_s,\D'][z]$\` by $Alt0$\\
  Subcase $x$ does not occur in scrutinee\\
    (2) $\judg[\G][\D,\xl]{K~\ov{\xo}~\Rightarrow e}{\vp}[][\D_s][z]$\\
    (3) $\judg[\subst{\G}{\cdot}{\D_s}_z,\ov{\xo}][\subst{\D}{\cdot}{\D_s},\xl]{e}{\vp}$\`by $x$ does not occur in $\D_s$ and inv.\\
    (4) $\judg[\subst{\subst{\G}{\D'}{x}}{\cdot}{\D_s}_z,\ov{\xo}][\subst{\D}{\cdot}{\D_s},\D']{e[e'/x]}{\vp}$\\\`by i.h.(1) and $x$ does not occur in $\D_s$\\
    (5) $\judg[\subst{\subst{\G}{\D'}{x}}{\cdot}{\D_s}_z,\ov{\xo}][\subst{(\D,\D')}{\cdot}{\D_s}]{e[e'/x]}{\vp}$\\
    \`by $\D$ and $\D'$ being disjoint by hypothesis,\\
    \`and $\D_s$ being a subset of $\D$\\ (6) $\D_s[\D'/x] = \D_s$\`by $x$ does not occur in $\D_s$\\
    (7) $\judg[\subst{\G}{\D'}{x}][\D,\D']{K~\ov{\xo}~\Rightarrow e[e'/x]}{\vp}[][\subst{\D_s}{\D'}{x}][z]$\\
\end{tabbing}

\end{description}

\end{proof}

The substitution lemma for unrestricted variables follows the usual
formulation, with the added restriction (common to linear type systems) that
the expression $e'$ that is going to substitute the unrestricted variable $x$
is typed on an empty linear environment:
\UnrestrictedSubstitutionLemma

\begin{proof}
By structural induction on the given derivation. We omit many trivial cases.

Statement (1):
\begin{description}

\item[Case:] $\lambda I_1$
\begin{tabbing}
  (1) $\G, \xo; \D \vdash \lambda\y[1][\s'].~e : \s'\to_1\vp$\\
  (2) $\G; \cdot \vdash e' : \sigma$\\
  (3) $\G, \xo; \D, \y[1][\s'] \vdash e : \vp$ \` by inversion on $\lambda I_1$\\
  (4) $\G; \D, \y[1][\s'] \vdash e[e'/x] : \vp$ \` by i.h.(1) (2,3)\\
  (5) $\G; \D \vdash \lambda\y[1][\s'].~e[e'/x] : \s' \to_1\vp$ \` by $\lambda I_1$\\
  (6) $(\lambda \y[\pi][\s'].~e)[e'/x] = (\lambda \y[\pi][\s'].~e[e'/x])$ \` by def. of subst.\\
\end{tabbing}

\item[Case:] $Var_\omega$
\begin{tabbing}
  (1) $\G, x{:}_\omega; \cdot \sigma \vdash x : \sigma$\\
  (2) $\G; \cdot \vdash e' : \s$\\
  (4) $x[e'/x] = e'$ \` by def. of substitution\\
  (5) $\G; \cdot \vdash e' : \sigma$ \` by (2)\\
\end{tabbing}

\item[Case:] $Var_\omega$
\begin{tabbing}
  (1) $\G, \xo; \cdot \vdash y : \vp$\\
  (2) $\G; \cdot \vdash e' : \s$\\
  (3) $y[e'/x] = y$ \` by def. of substitution\\
  (4) $\G; \cdot \vdash y : \vp$ \` by inversion on $Weaken_\omega$ (1)\\
\end{tabbing}

\item[Case:] $Var_1$
\begin{tabbing}
  (1) Impossible. The context in $Var_1$ is empty.
\end{tabbing}

\item[Case:] $Var_\Delta$
\begin{tabbing}
  (1) Impossible. The context in $Var_\Delta$ only contains linear variables.
\end{tabbing}

\item[Case:] $\lambda E_{1,p}$
\begin{tabbing}
  (1) $\G, \xo; \D,\D' \vdash e~e'' : \vp$\\
  (2) $\G; \cdot \vdash e' : \s$\\
(3) $\G, \xo; \D \vdash e : \s'\to_{1,p}\vp$ \` by inversion on $\lambda E_{1,p}$\\
  (4) $\G, \xo; \D' \vdash e'' : \s'$ \` by inversion on $\lambda E_{1,p}$\\
  (5) $\G; \D \vdash e[e'/x] : \s'\to_{1,p}\vp$ \` by i.h.(1) (2,3)\\
  (6) $\G; \D' \vdash e''[e'/x] : \s'$ \` by i.h.(1) (2,4)\\
  (7) $\G; \D, \D' \vdash e[e'/x]~e''[e'/x] : \vp$ \` by $\lambda E_{1,p}$ (5,6)\\
(8) $(e~e'')[e'/x] = (e[e'/x]~e''[e'/x])$ \` by def. of subst.\\
\end{tabbing}

\item[Case:] $Let$
\begin{tabbing}
    (1) $\G, \xo; \D, \D' \vdash \llet{\yD = e}{e''} : \vp$\\
    (2) $\G; \cdot \vdash e' : \sigma$\\
    (3) $\G,\xo,\yD; \D, \D' \vdash e'' \vp$ \` by inversion on $Let$\\
    (4) $\G,\xo; \D \vdash e : \s'$ \` by inversion on $Let$\\
    (5) $\G,\yD; \D \vdash e''[e'/x] : \vp$ \` by i.h.(1) (2,3)\\
    (6) $\G; \D \vdash e[e'/x] : \s'$ \` by i.h.(1) (2,4)\\
    (7) $\G;\D,\D' \vdash \llet{\yD = e[e'/x]}{e''[e'/x]}$ \` by $Let$ (5,6)\\
    (8) $(\llet{\yD = e}{e''})[e'/x] = (\llet{\yD = e[e'/x]}{e''[e'/x]})$
\end{tabbing}

\item[Case:] $LetRec$
\begin{tabbing}
    (1) $\G, \xo; \D, \D' \vdash \lletrec{\ov{\yD = e}}{e''} : \vp$\\
    (2) $\G'; \cdot \vdash e' : \sigma$\\
    (3) $\G,\xo,\ov{\yD}; \D, \D' \vdash e'' : \vp$ \` by inversion on $LetRec$\\
    (4) $\ov{\G,\xo, \ov{\yD}; \D \vdash e : \s'}$ \` by inversion on $LetRec$\\
    (5) $\G, \ov{\yD}; \D, \D' \vdash e''[e'/x] : \vp$ \` by i.h.(1) (2,3)\\
    (6) $\ov{\G,\ov{\yD}; \D \vdash e[e'/x] : \s'}$ \` by i.h.(1) (2,4)\\
    (7) $\G; \D, \D' \vdash \lletrec{\ov{\yD = e[e'/x]}}{e''[e'/x]} : \vp$ \` by $LetRec$ (5,6)\\
    (8) $(\lletrec{\ov{\yD = e}}{e''})[e'/x] = (\lletrec{\ov{\yD = e[e'/x]}}{e''[e'/x]})$
\end{tabbing}

\end{description}

Statement (2):

\begin{description}

\item[Case:] $AltN_{WHNF}$ (trivial induction)
\begin{tabbing}
    (1) $\G; \cdot \vdash e : \s$\\
    (2) $\G,\xo;\D \vdash_{alt} \konstructor~\to e \prescript{\dag}{}{:^z_{\ov{\D_i}^n}} \s' \Mapsto \vp$\\
    (3) $\G,\xo,\ov{\xo},\ov{y_i{:}_{\D_i}\s_i}; \D \vdash e : \vp $\\
    (4) $\G,\ov{\xo},\ov{y_i{:}_{\D_i}\s_i}; \D \vdash e[e'/x] : \vp $\` by i.h.(1)\\
    (5) $\G;\D \vdash_{alt} \konstructor~\to e[e'/x] \prescript{\dag}{}{:^z_{\ov{\D_i}^n}} \s' \Mapsto \vp$\` by $AltN$\\
\end{tabbing}

\item[Case:] $AltN_{Not WHNF}$ (trivial induction)
\begin{tabbing}
    (1) $\G; \cdot \vdash e : \s$\\
    (2) $\G,\xo;\D \vdash_{alt} \konstructor~\to e \prescript{\ddag}{}{:^z_{\D_s}} \s' \Rrightarrow \vp$\\
    (3) $\ov{\D_i} = \ov{\lctag{\D_s}{K_j}}^n$\\
    (4) $\G,\xo,\ov{\xo},\ov{y_i{:}_{\D_i}\s_i}; \D \vdash e : \vp $\` by inv.\\
    (5) $\G,\ov{\xo},\ov{y_i{:}_{\D_i}\s_i}; \D \vdash e[e'/x] : \vp $\` by i.h.(1)\\
    (6) $\G;\D \vdash_{alt} \konstructor~\to e[e'/x] \prescript{\ddag}{}{:^z_{\D_s}} \s' \Rrightarrow \vp$\` by $AltN$\\
\end{tabbing}

\end{description}
\end{proof}

Finally, we introduce the lemma stating that substitution of $\D$-bound
variables by expressions of the same type preserves the type of the original
expression.
What distinguishes this lemma from traditional substitution lemmas is that the
usage environment $\D$ of the variable $x$ being substituted by expression $e'$
must match exactly the typing environment $\D$ of $e'$ and the
environment of the original expression doesn't change with the substitution:
\DeltaSubstitutionLemma
\noindent Intuitively, if $x$ is well-typed with $\D$ in $e$, substituting $x$
by an expression $e'$ which is typed in the same environment $\D$ allows the
distribution of resources $\D,\D'$ used to type $e$ across sub-derivations to remain
unchanged. To prove the theorems, we don't need a ``stronger'' substitution of
$\D$-vars lemma (allowing arbitrary resources $\D''$ to type $e'$, as in other
substitution lemmas), as we only ever substitute $\D$-variables by expressions
whose typing environment matches the variables usage environment. However, it
is not obvious whether such a lemma is possible to prove for $\D$-variables
(e.g. let $\G;\D \vdash e :\s$ and $\G; \D' \vdash \llet{x = e'}{x}$, if we
substitute $e$ for $x$ the resources $\D'$ are no longer consumed).

\begin{proof}
By structural induction on the given derivation. We omit many trivial cases.

Statement (1):
\begin{description}

\item[Case:] $\Lambda I$
\begin{tabbing}
    (1) $\G, \xD; \D, \D' \vdash \Lambda p.~e : \forall p.~\vp$\\
    (2) $\G; \D \vdash e' : \s$\\
    (3) $\G, p, \xD; \D, \D' \vdash e : \vp$ \` by inversion on $\Lambda I$\\
    (4) $\G, p; \D, \D' \vdash e[e'/x]$ \` by i.h.(1) (2,3)\\
    (5) $\G; \D, \D' \vdash \Lambda p.~e[e'/x] : \forall p.~\vp$ \` by $\Lambda I$ \\
    (6) $(\Lambda p.~e)[e'/x] = (\Lambda p.~e[e'/x])$ \` by def. of subst.\\
    (7) $\G; \D, \D' \vdash (\Lambda p.~e)[e'/x] : \forall p.~\vp$ \` by (5,6)\\
\end{tabbing}

\item[Case:] $\lambda I_1$
\begin{tabbing}
    (1) $\G, \xD; \D, \D' \vdash \lambda \y[1][\s'].~e : \s' \to_1 \vp$\\
    (2) $\G; \D \vdash e' : \s$\\
    (3) $\G, \xD; \D, \y[1][\s'], \D' \vdash e : \vp$ \` by inversion on $\lambda I$\\
    (4) $\G; \D, \y[1][\s'], \D' \vdash e[e'/x] : \vp$ \` by i.h.(1) (2,3)\\
    (5) $\G; \D, \D' \vdash \lambda \y[1][\s'].~e[e'/x] : \s' \to_1 \vp$ \` by $\lambda I$\\
    (6) $(\lambda \y[1][\s'].~e[e'/x]) = (\lambda \y[1][\s'].~e)[e'/x]$ \` by def. of subst.\\
    (7) $\G; \D, \D' \vdash \lambda (\y[1][\s'].~e)[e'/x] : \s' \to_1 \vp$ \` by (4,5)\\
\end{tabbing}

\item[Case:] $Var_\omega$
\begin{tabbing}
    (1) $\G, \yo, \x[\cdot][\s]; \cdot \vdash y : \s'$\\
    (2) $\G; \cdot \vdash e' : \s$\\
    (3) $y[e'/x] = y$\\
    (4) $\G,\yo;\cdot \vdash y : \s'$\` by (1) and $Weaken_\D$\\
\end{tabbing}

\item[Case:] $Var_1$
\begin{tabbing}
  (1) $\G,\x[\y]; \y \vdash y : \s$\\
  (2) $\G; \y \vdash e' : \s$\\
  (3) $y[e'/x] = y$\\
  (4) $\G,\x[\y]; \y \vdash y : \s$\`by 1\\ 
  (5) $\G; \y\vdash y : \s$\` by $Weaken_\Delta$\\
\end{tabbing}

\item[Case:] $Var_\Delta$
\begin{tabbing}
    (1) $\G, \xD; \D \vdash y : \s$\\
    (2) $\G'; \D \vdash e' : \s$\\
(3) $x[e'/x] = e'$\\
    (4) $\G'; \D \vdash e' : \s$\` by (2)\\
\end{tabbing}

\item[Case:] $\lambda E_{1,p}$
\begin{tabbing}
    (1) $\G, \xD; \D,\D',\D'' \vdash e~e'' : \varphi$\\
    (2) $\G; \D \vdash e' : \sigma$\\
    Subcase $\D$ occurs in $e$\\
    (3) $\G, \xD; \D,\D' \vdash e : \s' \to_{1,p} \vp$\\
    (4) $\G, \xD; \D'' \vdash e'' : \s'$\\
    (5) $\G; \D'' \vdash e'' : \s'$\` by $Weaken_\D$\\
    (6) $\G; \D,\D' \vdash e[e'/x] : \s' \to_{1,p} \vp$ \` by i.h.(1)\\
    (7) $\G;\D, \D', \D'' \vdash e[e'/x]~e'' : \vp$\` by ($\lambda E_{1,p}$)\\
    (8) $(e[e'/x]~e'')=(e~e'')[e'/x]$\` since $x$ cannot occur in $e''$\\
    Subcase $\D$ occurs in $e''$\\
    (3) $\G, \xD; \D' \vdash e : \s' \to_{1,p} \vp$\\
    (4) $\G; \D' \vdash e : \s' \to_{1,p} \vp$\` by $Weaken_\Delta$\\
    (5) $\G,\xD; \D, \D'' \vdash e'' : \s'$\\
    (6) $\G; \D, \D'' \vdash e''[e'/x] : \s'$\` by i.h.(1)\\
    (7) $\G; \D,\D',\D'' \vdash (e~e''[e'/x]) : \vp$\` by ($\lambda E_{1,p}$)\\
    (8) $e~e''[e'/x] = (e~e'')[e'/x]$\`since $x$ does not occur in $e$\\
    Subcase $\D$ is split between $e$ and $e''$\\
    $x$ cannot occur in either, so the substitution is trivial, and $x$ can be weakened.

\end{tabbing}

\item[Case:] $\lambda E_\omega$
\begin{tabbing}
    (1) $\G, \xD; \D, \D' \vdash e~e'' : \vp$\\
    (2) $\G; \D \vdash e' : \s$\\
    (3) $\D$ cannot occur in $e''$\\
    (4) $\G, \xD; \D, \D' \vdash e : \s' \to_\omega \vp$\` by inv. on $\lambda E_\omega$\\
    (5) $\G; \cdot \vdash e'' : \s'$ \` by inv. on $\lambda E_\omega$\\
    (6) $\G; \D, \D' \vdash e[e'/x] : \s' \to_\omega \vp$ \` by i.h.(1) (2,4)\\
    (7) $\G; \D, \D' \vdash e[e'/x]~e'' : \vp$\` by $\lambda E_\omega$ (5,6)\\
    (8) $e[e'/x]~e'' = (e~e'')[e'/x]$\` $x$ does not occur in $e''$ by (3)\\
\end{tabbing}

\item[Case:] $Let$
\begin{tabbing}
    (1) $\G; \D \vdash e' : \s$\\
    Subcase $\D$ occurs in $e$\\
    (2) $\G, \xD; \D, \D', \D'' \vdash \llet{\y[\D,\D'][\s'] = e}{e''} : \vp$\\
    (3) $\G, \xD; \D, \D' \vdash e : \s'$\` by inversion on (let)\\
    (4) $\G, \xD,\y[\D,\D'][\s']; \D,\D', \D'' \vdash e'' : \vp$\` by inversion on (let)\\
    (5) $\G,\y[\D,\D'][\s']; \D,\D', \D'' \vdash e'' : \vp$\` by $Weaken_\Delta$ (admissible)\\
    (6) $\G; \D,\D' \vdash e[e'/x] : \s'$\` by i.h.(1)  (1,3)\\
    (7) $\G; \D,\D',\D'' \vdash \llet{\y[\D,\D'][\s'] = e[e'/x]}{e''} : \vp$\` by (let) (5,6)\\
    (8) $\llet{\y[\D,\D'][\s'] = e[e'/x]}{e''} = (\llet{\y[\D,\D'][\s'] = e}{e''})[e'/x]$\\\` since $x$ cannot occur in $e''$\\
    Subcase $\D$ occurs in $e''$\\
    (2) $\G, \xD; \D, \D', \D'' \vdash \llet{\y[\D'][\s'] = e}{e''} : \vp$\\
    (3) $\G, \xD; \D' \vdash e : \s'$\` by inversion on (let)\\
    (4) $\G; \D' \vdash e : \s'$\` by $Weaken_\Delta$\\
    (5) $\G,\xD,\y[\D'][\s']; \D,\D',\D'' \vdash e'' : \vp$\` by inversion on (let)\\
    (6) $\G,\y[\D'][\s']; \D,\D',\D'' \vdash e''[e'/x] : \vp$\` by i.h.(1) (1,5)\\
    (7) $\G;\D,\D',\D'' \vdash \llet{\y[\D'][\s'] = e}{e''[e'/x]} : \vp$\` by (let)\\
    (8) $\llet{\y[\D'][\s'] = e}{e''[e'/x]} = (\llet{\y[\D'][\s'] = e}{e''})[e'/x]$\\\` since $x$ cannot occur in $e$\\
    Subcase $\D$ is split between $e$ and $e''$\\
    $x$ cannot occur in either, so the substitution is trivial, and $x$ can be weakened.
\end{tabbing}

\item[Case:] LetRec
\begin{tabbing}
    (1) $\G; \D \vdash e' : \s$\\
    Subcase $\D$ occurs in $\ov{e_i}$\\
    (2) $\G,\xD; \D, \D', \D'' \vdash \lletrec{\ov{\var[y_i][\D,\D'][\s_i'] = e_i}}{e''} : \vp$\\
    (3) $\G,\xD,\ov{\var[y_i][\D,\D'][\s'_i]}; \D,\D',\D'' \vdash e'' : \vp$\` by inversion on (let)\\
    (4) $\G,\ov{\var[y_i][\D,\D'][\s'_i]}; \D,\D',\D'' \vdash e'' : \vp$\` by $Weaken_\Delta$\\
    (5) $\ov{\G,\xD,\ov{\var[y_i][\D,\D'][\s'_i]}; \D, \D' \vdash e_i : \s'_i}$\` by inversion on (let)\\
    (6) $\ov{\G,\ov{\var[y_i][\D,\D'][\s'_i]}; \D,\D' \vdash e_i[e'/x] : \s'_i}$\` by i.h.(1) (1,5)\\
    (7) $e''[e'/x] = e''$\` since $x$ cannot occur in $e''$\\
    (8) $\G; \D,\D',\D'' \vdash \lletrec{\ov{\var[y_i][\D,\D'][\s_i'] = e_i[e'/x]}}{e''} : \vp$\` by (let) (4,6)\\
    Subcase $\D$ occurs in $e''$\\
    (2) $\G, \xD; \D, \D', \D'' \vdash \lletrec{\ov{\var[y_i][\D'][\s_i'] = e_i}}{e''} : \vp$\\
    (3) $\ov{\G,\xD,\ov{\var[y_i][\D'][\s'_i]}; \D' \vdash e_i : \s'_i}$\` by inversion on (let)\\
    (4) $\ov{\G,\ov{\var[y_i][\D'][\s'_i]}; \D' \vdash e_i : \s'_i}$\` by $Weaken_\Delta$\\
    (6) $\G,\xD,\ov{\var[y_i][\D'][\s'_i]}; \D,\D',\D'' \vdash e'' : \vp$\` by inversion on (let)\\
    (7) $\G,\ov{\var[y_i][\D'][\s'_i]}; \D,\D',\D'' \vdash e''[e'/x] :
    \vp$\` by i.h.(1) (1,6)\\
    (8) $\ov{e_i[e'/x] = e_i}$\` since $x$ cannot occur in $\ov{e_i}$\\
    (9) $\G;\D,\D',\D'' \vdash \lletrec{\ov{\var[y_i][\D'][\s'_i] = e_i}}{e''[e'/x]} : \vp$\` by (let)\\
    Subcase $\D$ is split between $e$ and $e''$\\
    $x$ cannot occur in either, so the substitution is trivial, and $x$ can be weakened.
\end{tabbing}

\item[Case:] CaseWHNF
\begin{tabbing}
    (1) $\G; \D \vdash e' : \s$\\
    Subcase $\D$ occurs in $e$\\
    (2) $\G,\xD;\D,\D',\D'' \vdash \ccase{e}{\z[\D,\D'][\s']~\{\ov{\rho \Rightarrow e''}\}} : \vp$\\
    (3) $e$ is in WHNF\\
    (4) $\rho_j$ matches $e$\\
    (5) $\G,\xD;\D,\D' \vdash e : \s'$\\
    (6) $\G,\xD,\z[\D,\D'][\s']; \D,\D',\D'' \vdash \rho_j \Mapsto e'' :^z_{\D,\D'} : \vp$\\
    (7) $\ov{\G,\xD,\z[\irr{\D,\D'}][\s']; [\D,\D'],\D'' \vdash\rho \Rrightarrow e'' :^z_{\irr{\D,\D'}} \vp}$\` by inv.\\
    (8) $\G;\D,\D' \vdash e[e'/x] : \s'$\` by i.h.(1)\\
    (9) $\G,\z[\D,\D'][\s']; \D,\D',\D'' \vdash\rho_j \Rightarrow e''[e'/x] :^z_{\D,\D'} \vp $\` by i.h.(2)\\
    (10) $\ov{\G,\z[\irr{\D,\D'}][\s']; [\D,\D'],\D'' \vdash \rho \Rrightarrow e'' :^z_{\irr{\D,\D'}} \vp}$\` by (7) and proof steps below\\\`from case $CaseNotWHNF$\\
    (11) $\G;\D,\D',\D'' \vdash \ccase{e[e'/x]}{\z[\D,\D'][\s']~\{\ov{\rho \Rightarrow e''[e'/x]}\}} : \vp$\` by $CaseWHNF$\\
    Subcase $\D$ does not occur in $e$\\
    (2) $\G,\xD;\D,\D',\D'' \vdash \ccase{e}{\z[\D'][\s']~\{\ov{\rho \Rightarrow e''}\}} : \vp$\\
    (3) $e$ is in WHNF\\
    (4) $\rho_j$ matches $e$\\
    (5) $\G,\xD;\D' \vdash e : \s'$\\
    (6) $\G;\D' \vdash e : \s'$\` by (admissible) $Weaken_\D$\\
    (7) $e[e'/x] = e$\` by $x$ cannot occur in $e$\\
    (8) $\G,\xD,\z[\D'][\s']; \D,\D',\D'' \vdash \rho_j \Mapsto e'' :^z_{\D'} : \vp$\\
    (9) $\ov{\G,\xD,\z[\irr{\D'}][\s']; \D,[\D'],\D'' \vdash_{alt} \rho \Rrightarrow e'' :^z_{\irr{\D'}} \vp}$\` by inv.\\
    (10) $\G,\z[\D'][\s']; \D,\D',\D'' \vdash_{alt} \rho_j \Mapsto e''[e'/x] :^z_{\D'} \vp$\` by i.h.(2)\\
    (11) $\ov{\G,\z[\irr{\D'}][\s']; \D,[\D'],\D'' \vdash_{alt} \rho \Rrightarrow e''[e'/x] :^z_{\irr{\D'}} \vp}$\\\` by i.h.(2)\\
    (12) $\G;\D,\D',\D'' \vdash \ccase{e[e'/x]}{\z[\D'][\s']~\{\ov{\rho \Rightarrow e''[e'/x]}\}} : \vp$\` by $CaseWHNF$\\
    Subcase $\D$ is partially used in $e$\\
    This is like the subcase above, but consider $\D'$\\
    to contain some of part of $\D$ and $\D$ to refer to the other part only.
\end{tabbing}

\item[Case:] CaseNotWHNF
\begin{tabbing}
    (1) $\G;\D\vdash e' :\s$\\
    Subcase $\D$ occurs in $e$\\
    (2) $\G,\xD; \D, \D', \D'' \vdash \ccase{e}{\z[\irr{\D,\D'}][\s']~\{\ov{\rho\Rightarrow e''}\}}$\\
    (3) $\G,\xD; \D, \D' \vdash e : \s'$\` by inv.\\
    (4) $\G; \D,\D' \vdash e[e'/x] : \s'$\` by i.h.(1)\\
    (5) $\ov{\G,\xD,\z[\irr{\D,\D'}][\s']; [\D,\D'],\D'' \vdash \rho \Rrightarrow e'' :^z_{\irr{\D,\D'}} \vp}$\` by inv.\\
    (6) $e''[e'/x] = e$\` by $x$ cannot occur in $\ov{e''}$ since $\D$ is not available (only $\irr{\D}$)\\
    (7) $\ov{\G,\z[\irr{\D,\D'}][\s']; [\D,\D'],\D'' \vdash \rho \Rrightarrow e'' :^z_{\irr{\D,\D'}} \vp}$\` by (admissible) $Weaken_\Delta$\\
    (8) $\G; \D, \D', \D'' \vdash \ccase{e[e'/x]}{\z[\irr{\D,\D'}][\s']~\{\ov{\rho\Rightarrow e''}\}}$\`by $CaseNotWHNF$\\
    Subcase $\D$ does not occur in $e$\\
    (2) $\G,\xD; \D, \D', \D'' \vdash \ccase{e}{\z[\irr{\D'}][\s']~\{\ov{\rho\Rightarrow e''}\}}$\\
    (3) $\G,\xD; \D' \vdash e : \s'$\` by inv.\\
    (4) $\G; \D' \vdash e : \s'$\` by weaken\\
    (5) $e[e'/x] = e$\` by x cannot occur in $e$\\
    (5) $\ov{\G,\xD,\z[\irr{\D'}][\s']; \D,[\D'],\D'' \vdash\rho \Rrightarrow e'' :^z_{\irr{\D'}} \vp}$\` by inv.\\
    (7) $\ov{\G,\z[\irr{\D'}][\s']; \D,[\D'],\D'' \vdash_{alt} \rho \Rrightarrow e''[e'/x] :^z_{\irr{\D'}} \vp}$\\\` by i.h.(2)\\
    (8) $\G; \D, \D', \D'' \vdash \ccase{e}{\z[\irr{\D,\D'}][\s']~\{\ov{\rho\Rightarrow e''[e'/x]}\}}$\`by $CaseNotWHNF$\\
    Subcase $\D$ is partially used in $e$\\
    This is like the subcase above, but consider $\D'$\\
    to contain some of part of $\D$ and $\D$ to refer to the other part only.
\end{tabbing}

\end{description}

Statement (2):
\begin{description}

\item[Case:] $Alt0$
\begin{tabbing}
(1) $\G; \D \vdash e' : \s$\\
    Subcase $\D$ occurs in scrutinee\\
    (2) $\G,\xD; \D,\D',\D'' \vdash K~\ov{\x}~\Rightarrow e :^z_{\D,\D'} \vp$\\
    (3) $(\G,\xD)[\cdot/\D,\D']_z; (\D,\D',\D'')[\cdot/\D,\D'] \vdash e : \vp$\\
    (4) $\G[\cdot/\D,\D']_z, \xD; \D'' \vdash e : \vp$\\\`by def. of $[]_z$ subst. and $[]$ subst.\\
    (5) $\G[\cdot/\D,\D']_z, \xD; \D'' \vdash e[e'/x] : \vp$\\\`by $x$ cannot occur in $e$ by $\D$ not available\\
    (6) $\G[\cdot/\D,\D']_z; \D'' \vdash e[e'/x] : \vp$\\\`by (admissible) $Weaken_\D$\\
    (7) $\cdot = (\D,\D')[\cdot/\D,\D']$\\
    (8) $\G[\cdot/\D,\D']_z; (\D,\D',\D'')[\cdot/\D,\D'] \vdash e[e'/x] : \vp$\\\`by (6,7)\\
    (9) $\G;\D,\D',\D'' \vdash e[e'/x] :^z_{\D,\D'} \vp$\\
    Subcase $\D$ does not occur in the scrutinee\\
    (2) $\G,\xD; \D,\D',\D'' \vdash K~\ov{\x}~\Rightarrow e :^z_{\D'} \vp$\\
    (3) $(\G,\xD)[\cdot/\D']_z; (\D,\D',\D'')[\cdot/\D'] \vdash e : \vp$\\
    (4) $\G[\cdot/\D']_z, \xD; \D,\D'' \vdash e : \vp$\\\`by def. of $[]_z$ subst. and $[]$ subst.\\
    (5) $\G[\cdot/\D']_z; \D,\D'' \vdash e[e'/x] : \vp$\` by i.h.(1)\\
    (6) $\cdot = \D'[\cdot/\D']$\\
    (7) $\G[\cdot/\D']_z; (\D,\D',\D'')[\cdot/\D'] \vdash e[e'/x] : \vp$\` by (5,6)\\
    (8) $\G;\D,\D',\D'' \vdash \Rightarrow   e[e'/x] :^z_{\D'} \vp$\`by $Alt0$\\
    Subcase $\D$ is partially used in the scrutinee\\
    This is like the subcase above, but consider $\D'$\\
    to contain some of part of $\D$ and $\D$ to refer to the other part only.
\end{tabbing}

\end{description}

\end{proof}

\subsection{Type Preservation\label{sec:proof:type-preservation}}

\begin{theorem}[Type Preservation]\label{thm:typepres}
  Let $\G ; \D \vdash (\Theta \mid e) \Downarrow (\Theta' \mid v) :
  \tau , \Sigma$ or $\G ; \D \vdash (\Theta \mid e) \Downarrow^* (\Theta' \mid v) :
  \tau , \Sigma$.
  If $\G ; \D \vdash \Theta \mid e : \tau , \Sigma$ then $\G ; \D \vdash \Theta' \mid v : \tau , \Sigma$
  
\end{theorem}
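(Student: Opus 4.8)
The plan is to proceed by induction on the derivation of the big-step evaluation judgement, handling the total ($\Downarrow$) and partial ($\Downarrow^*$) relations simultaneously, since they share the same set of rules (the partial case simply stops the induction at the distinguished partial premise). The bridge between the state-typing judgement $\G;\D \vdash \Theta \mid e : \tau, \Sigma$ and the ordinary typing judgement is Lemma~\ref{lem:unwrapenv}: it lets me unwrap a well-typed state into $\G, \Theta{\uparrow^\omega}; \D, \Theta{\uparrow^1} \vdash (e, \Theta{\downarrow^1_e}) : \tau \tensor \bigotimes(\Theta{\downarrow^1_\sigma})$, reason about the expression and environment in the ordinary type system (where inversion and the substitution lemmas apply), and then re-wrap the resulting value and environment into a well-typed final state. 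Each evaluation rule is handled by (i) inverting the typing of the unwrapped initial state, (ii) applying the induction hypothesis to the evaluation premises once their own initial states are shown well-typed, and (iii) reconstructing the typing of $(\Theta' \mid v)$, using the appropriate substitution lemma wherever the rule performs a substitution.

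The value rules ($\Lambda p.e$, $\lambda x.e$, $K~\ov{x_i}$) are immediate, as the state is unchanged. The structural rules---let, recursive let, and the two variable-lookup rules---follow by unwrapping, applying the induction hypothesis to the single evaluation premise, and re-wrapping. The only delicate point is the $\D$-bound lookup rule, whose updated usage environment $\D'' = \D' \setminus (\Theta \setminus \Theta')$ must be shown to record exactly the linear resources still live after evaluating the thunk, so that the stored value $x{:}_{\D''}\sigma = v$ remains a well-typed $\D$-bound binding; this is where Lemma~\ref{lem:undelta} and the equivalences between $\D$-bound and linear variables (Lemmas~\ref{lem:onedelta} and~\ref{lem:deltaone}) are used to justify the consumption of precisely those resources. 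For the linear-variable lookup rule, erasure of the binding from the environment matches the fact that a linear resource is consumed exactly once.

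The two application rules and multiplicity application are handled by applying the induction hypothesis to the premise that evaluates the head to a $\lambda$- (resp.\ $\Lambda$-) value, inverting the resulting function type, and then substituting: the linear case $e~x$ with $\to_1$ uses the linear substitution lemma (with the argument obligation $x{:}\sigma$ threaded through $\Sigma$ and discharged by the binding $y{:}_1\sigma = x$), the unrestricted case uses the substitution lemma for unrestricted variables, and multiplicity application uses type-level substitution of $\pi$ for $p$, matching the $\sigma[\pi/p]$ in the rule. In each case the induction hypothesis on the second premise yields the well-typedness of the final state directly.

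The main obstacle is the case-expression rule, where the scrutinee reduces to a constructor value $K~\ov{x_i}$ and evaluation continues in the matching branch after substituting the pattern variables $\ov{[x_i/y_{i,j}]}$ and the case binder $[K~\ov{x_i}/z]$. The difficulty is twofold. First, the source case is typed with whichever of $Case_{\textrm{WHNF}}$ or $Case_{\textrm{Not WHNF}}$ applies statically, but after the scrutinee has reduced to WHNF the branch continuation must be re-typed as a genuine WHNF match; the Irrelevance lemma (Lemma~\ref{lem:irrev}) is exactly the tool that converts an alternative typed with the irrelevant resources $\irr{\D}$ of the non-WHNF rule into one typed with the concrete resource split $\ov{\D_i}$ extracted by the WHNF judgement $\G;\D \Vdash K~\ov{x_i} : \s \gtrdot \ov{\D_i}$. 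Second, the pattern variables and the case binder are $\D$-bound, so discharging them requires the substitution lemma for $\D$-bound variables, applied once per pattern variable (substituting each field $x_i$, whose typing environment matches the tagged usage environment assigned to $y_i$) and once for the case binder; the side-condition $\D_s \subseteq \D$ needed by that lemma is guaranteed by the invariant, noted after its statement, that the annotated scrutinee environment is always a subset of the branch environment. The $Alt0$ sub-case (unrestricted constructor) additionally relies on the resource-deletion bookkeeping $\D[\cdot/\D_s]$, which matches the full consumption of the scrutinee's resources through evaluation. Assembling these, the branch continuation is shown well-typed under the contexts produced by the WHNF rule, and re-wrapping yields the well-typed final state.
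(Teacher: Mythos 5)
Your overall architecture is the same as the paper's: induction on the instrumented evaluation derivation, Lemma~\ref{lem:unwrapenv} as the bridge between well-typed states and ordinary typing judgements, and reconstruction of the final state via the substitution lemmas. The value, let, and lookup cases indeed go through by unwrap/invert/IH/re-wrap (one small misattribution: the paper's $\Delta$-bound lookup case needs only inversion on $\mathit{Let}$, the induction hypothesis, and re-wrapping; Lemmas~\ref{lem:undelta}, \ref{lem:onedelta} and~\ref{lem:deltaone} are not invoked there, and in the linear-application case no substitution lemma is needed either, since the rule extends the environment with $y{:}_1\sigma = x$ rather than substituting).

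The genuine gap is in the case-expression case. You propose to first apply the induction hypothesis to the scrutinee evaluation and then use the Irrelevance lemma (Lemma~\ref{lem:irrev}) to re-type the branch at the concrete split $\ov{\Delta_i}$, discharging pattern variables and case binder by $\Delta$-substitution. This ordering cannot be realized against the instrumented rule: its scrutinee premise is $\Gamma, z{:}_{\ov{y_j}}T, \ov{y_i{:}_\omega\sigma_i};\Delta, \ov{y_j{:}_1\sigma_j} \vdash (\Theta \mid e) \Downarrow (\Theta' \mid K~\ov{x_i}) : T, e'{:}\tau, \Sigma$, so \emph{before} the induction hypothesis can be invoked at all you must show this state well-typed, i.e., type the continuation $e'$ with the pattern variables as genuine \emph{linear} assumptions in the linear context and the case binder $\Delta$-bound on them. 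The Irrelevance lemma cannot produce that shape: its conclusion is an alternative judgement whose pattern variables are again $\Delta$-bound on fragments of $\Delta^\dag$, not linear assumptions; and instantiating $\Delta^\dag$ with the split $\ov{\Delta_i}$ is circular, because that split only becomes known from the conclusion of the very induction hypothesis whose applicability you are trying to establish. The missing idea is Lemma~\ref{lem:deltaone} ($\Delta$-bound to Linear): the paper applies it to the inverted $\mathit{Case}_{\textrm{Not WHNF}}$ typing to turn the tagged, irrelevant usage environments of the pattern variables into linear variables $\ov{y_j{:}_1\sigma_j}$, simultaneously erasing $\irr{\Delta}$ from the linear context --- exactly the form the premise state demands. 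Only then does the induction hypothesis apply, after which the constructor fields are discharged by the \emph{linear} and unrestricted substitution lemmas, with the $\Delta$-substitution lemma reserved for the case binder. In the paper, Irrelevance is used in the proof of the linear-substitution lemma and in the case-of-case transformation, not in type preservation; your plan does not go through unless the Irrelevance step is replaced by the Lemma~\ref{lem:deltaone} conversion performed before the induction hypothesis.
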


\begin{proof}
By structural induction on the instrumented operational semantics. We
omit many trivial cases.
\begin{description}
    \item[Case:] $\Gamma; \Delta \vdash (\Theta, \xl = y \mid x) \Downarrow (\Theta' \mid v) : \s, \Sigma$
\begin{tabbing}
    (1) $\G; \D \vdash \Theta, \xl = y \mid x : \s, \Sigma$\\
    (2) $\G, \Theta{\uparrow}^\omega; \D, \Theta{\uparrow}^1, \xl \vdash (x, y, \Theta{\downarrow}^1_e, \Sigma) : \s \otimes \s \bigotimes(\Theta{\downarrow^1_e}) \bigotimes(\Sigma)$\`by \ref{lem:unwrapenv}\\
    (3) $Let~\G';\D_1,\D_2,\D_3 = \G, \Theta{\uparrow}^\omega; \D, \Theta{\uparrow}^1$\` s.t. $(\D_2,\D_3)$ types $(\Theta{\downarrow}^1_e,\Sigma)$\\
    (4) $\G'; \xl \vdash x : \s$\\
    (5) $\G'; \D_1 \vdash y : \s$\\
    (6) $\G'; \D_1, \D_2, \D_3 \vdash (y, \Theta{\downarrow}^1_e, \Sigma) : \s \bigotimes(\Theta{\downarrow^1_e}) \bigotimes(\Sigma)$\\
    (7) $\G; \D \vdash \Theta \mid y : \s, \Sigma$\`by well-typed state\\
    (8) $\G; \D \vdash \Theta' \mid v : \s, \Sigma$\` by induction hypothesis\\
\end{tabbing}
    \item[Case:] $\Gamma; \Delta \vdash (\Theta, x{:}_{\D'}\s = e \mid x) \Downarrow (\Theta', x{:}_{\D_a}\s = v \mid v) : \s, \Sigma$
\begin{tabbing}
    (1) $\Gamma; \Delta \vdash (\Theta, x{:}_{\D'}\s = e \mid x) : \s, \Sigma$\\
    (2) $\G, \Theta{\uparrow}^\omega; \D, \Theta{\uparrow}^1 \vdash \llet{x{:}_{\D'} = e}{(x, \Theta{\downarrow}^1_e, \Sigma)} : \s \bigotimes(\Theta{\downarrow}^1_\s) \bigotimes(\Sigma)$\`by \ref{lem:unwrapenv}\\
    (3) $Let~\D',\D'' = \D,\Theta{\uparrow}^1$\\
    (4) $\G, \Theta{\uparrow}^\omega; \D' \vdash e : \s$\`by inv on let (2)\\
    (5) $\G, \Theta{\uparrow}^\omega; \D', \D'' \vdash (x, \Theta{\downarrow}^1_e, \Sigma) : \s \bigotimes(\Theta{\downarrow}^1_\s) \bigotimes(\Sigma)$\`by inv on let (2)\\
    (6) $\G, \Theta{\uparrow}^\omega; \D' \vdash x : \s$\\
    (7) $\G, \Theta{\uparrow}^\omega; \D'' \vdash (\Theta{\downarrow}^1_e, \Sigma) : \bigotimes(\Theta{\downarrow}^1_\s) \bigotimes(\Sigma)$\\
    (8) $\G, \Theta{\uparrow}^\omega; \D', \D'' \vdash (e, \Theta{\downarrow}^1_e, \Sigma) : \tau \bigotimes(\Theta{\downarrow}^1_\s) \bigotimes(\Sigma)$\` by pair (4,7)\\
    (9) $\G; \D \vdash \Theta \mid e : \tau,\Sigma$\` by well-typed state\\
    (10) $\G; \D \vdash \Theta \mid v : \tau,\Sigma$\` by induction hypothesis\\
    (11) $Let~\G';\D_a,\D_b = \G, \Theta'{\uparrow}^\omega, \D, \Theta'{\uparrow}^1$\\
    (11a) $\G'; \D_a \vdash v : \s$
    (11b) $\G'; \D_b \vdash (\Theta'{\downarrow}^1_e, \Sigma) : \bigotimes(\Theta'{\downarrow}^1_\s) \bigotimes(\Sigma)$\\
    (12) $\G'; \D_a \vdash \llet{x{:}_{\D_a}\s = v}{v} : \s$\`by let (11a,11a)\\
    (13) $\G; \D \vdash \Theta' \mid \llet{x{:}_{\D_a}\s = v}{v} : \s, \Sigma$\`by well-typed state\\
    (14) $\G; \D \vdash \Theta', x{:}_{\D_a}\s = v \mid v : \s, \Sigma$\`by env. expansion\\
\end{tabbing}
Note: It's slightly more elaborate to accommodate recursion. The runtime
environment bindings must be tagged with the recursive group such that the
environment expansion can decide whether to unwrap to a letrec group or
not. Unlike the case proved here, when the $\D$-binding belongs to a let group,
the whole group is unwrapped and we can invert the assumption to get an
$e$ typed with the $x$s from the group.\\
\item[Case:] $\Gamma; \Delta \vdash (\Theta \mid \ccase{e}{z{:}_{\Delta'}\s~\{K~\ov{y_i}\Rightarrow e'}\}) \Downarrow (\Theta'' \mid v) : \tau$
\begin{tabbing}
    (1) $\Gamma; \Delta \vdash \Theta \mid \ccase{e}{z{:}_{\Delta_a}\s~\{K~\ov{y_i}\Rightarrow e'}\} : \tau, \Sigma$\\
    (2) $\G, \Theta{\uparrow}^\omega; \D, \Theta{\uparrow}^1 \vdash (\ccase{e}{z{:}_{\Delta_a}\s~\{K~\ov{y_i}\Rightarrow e'}\}, \Theta{\downarrow}^1_e, \Sigma) : \tau \otimes \bigotimes(\Theta{\downarrow}^1_\sigma) \otimes \bigotimes(\Sigma)$\`by \ref{lem:unwrapenv}\\
    (3) $Let~\G' = \G, \Theta\uparrow^\omega; \D_a,\D_b,\D_2,\D_3 = \D,\Theta\uparrow^1$\`s.t. $(\D_2,\D_3)$ types $(\Theta\downarrow^1_e, \Sigma)$\\
    (4) $\G'; \D_a, \D_b \vdash \ccase{e}{z{:}_{\D_a}\s\{K~\ov{y_i}\Rightarrow e'\}} : \tau$\\
    Subcase $e$ is in WHNF (thus $= K~\ov{x_i}$)\\
    (5) $\G'; \D_a \Vdash K~\ov{x_i} : \s \gtrdot \ov{\D_i}$\` by subcase\\
    (6) $\G', z{:}_{\D_a}\s; \D_a, \D_b \vdash_{alt} K~\ov{y_i} \to e' :^z_{\D_i} \s \Rightarrow \tau$\` by inv\\
    Subsubcase $\vdash$ $Alt_{WHNF}$\\
    (7) $\G', z{:}_{\D_a}\s, \ov{y_i{:}_\omega\s_i}, \ov{y_j{:}_{\Delta_j}\s_j}; \D_a, \D_b \vdash e' : \tau$\` by inv\\
    (8) $\Theta' = \Theta$\` in this subcase because $(\Theta \mid K~\ov{x_i})\Downarrow(\Theta \mid K~\ov{x_i})$\\
    (9) $\G'; \ov{\D_i} \vdash K~\ov{x_i} : \s$\` by inv (5)\\
    (10) $\ov{\G'; \cdot \vdash x_i : \s_i}$\` by inv (9) for unrestricted $x$s\\
    (11) $\ov{\G'; \D_j \vdash x_j : \s_j}$\` by inv (9) for linear fields $x$s\\
    (12) $\G', z{:}_{\D_a}\s; \D_a, \D_b \vdash e'\ov{[x_{i,j}/y_{i,j}] : \tau}$\` by $\D$-subst and $\omega$-subst lemmas (7,10,11)\\
    (13) $\G'; \D_a, \D_b \vdash e'\ov{[x_{i,j}/y_{i,j}]}[K~\ov{x_{i,j}}/z] : \tau$\` by $\D$-subst lemma and $\ov{\D_i} = \D_a$ (9,12)\\
    (14) $\G; \D \vdash \Theta \mid e'\ov{[x_{i,j}/y_{i,j}]}[K~\ov{x_{i,j}}/z] : \tau, \Sigma$\` by well-typed state\\
    (15) $\G; \D \vdash \Theta'' \mid v : \tau, \Sigma$\` by induction hypothesis\\
    Subsubcase $\vdash$ $Alt_{0}$\\
    (7) $(\G', z{:}_{\D_a}\s)[\cdot/\D_a]_z, \ov{y_i{:}_\omega\sigma_i}; (\D_a, \D_b)[\cdot / \D_a] \vdash e' : \tau$\` by inv subcase\\
    (8) $\G', z{:}_{\cdot}, \ov{y_i{:}_\omega\sigma_i}; \D_b \vdash e' : \tau$\` by $z$ cannot occur in $\G'$\\
    (9) Since $\G';\D_a \Vdash K~\ov{x_i}$, in this subcase $\D_a = \cdot$\\
    (10) $\G', \z{:}_\cdot\sigma, \ov{y{:}_\omega\sigma}; \D_b \vdash e' : \tau$\` by (9)\\
    (11) $\G', \z{:}_\cdot\sigma; \D_b \vdash e'\ov{[x_i/y_i]} : \tau$\` by $\omega$-subst (10)\\
    (12) $\G'; \D_b \vdash e'\ov{[x_i/y_i]}[K~\ov{x_i}/z] : \tau$\` by $\D$-subst\\
    (13) $\G; \D \vdash \Theta \mid e'\ov{[x_i/y_i]}[K~\ov{x_i}/z] : \tau, \Sigma$\` by well-typed state\\
    (14) $\G; \D \vdash \Theta'' \mid v : \tau, \Sigma$\`by induction hypothesis\\
    Subcase $e$ is not in WHNF\\
    (5) $\G'; \D_a \vdash e : \s$\` by inv on case\\
    (6) $\G', z{:}_{\irr{\D_a}}\s; \irr{\D_a}, \D_b \vdash_{alt} K~\ov{y_{i,j}} \to e' :^z_{\irr{\D_a}} \s $\` by inv on case\\
    Subsubcase $\vdash$ $Alt_{NWHNF}$\\
    (7) $\G', z{:}_{\irr{\D_a}}\s, \ov{y_i{:}_\omega\s_i}, \ov{y_j{:}_{\Delta_j}\s_j}; \irr{\D_a}, \D_b \vdash e' : \tau$\` for \ov{\D_j = \lctag{\irr{\D_a}}{K_j}}, by inv.\\
    (8) $\G', z{:}_{\ov{y_j{:}_1\s_j}}\s, \ov{y_i{:}_\omega\s_i}; \ov{y_j{:}_1\s_j}, \D_b \vdash e' : \tau$\` by $\D$-bound to linear\\
    (9) $\G', z{:}_{\ov{y_j{:}_1\s_j}}\s, \ov{y_i{:}_\omega\s_i}; \D_a, \D_b, \ov{y_j{:}_1\s_j} \vdash (e, e') : \sigma \otimes \tau$\` by pair (5,8)\\
    (10) $\G, z{:}_{\ov{y_j{:}_1\s_j}}\s, \ov{y_i{:}_\omega\s_i}; \D, \ov{y_j{:}_1\s_j} \vdash \Theta \mid e : \s, e':\tau, \Sigma$\`by well-typed state\\
    (11) $\G, z{:}_{\ov{y_j{:}_1\s_j}}\s, \ov{y_i{:}_\omega\s_i}; \D, \ov{y_j{:}_1\s_j} \vdash \Theta' \mid K~\ov{x_i} : \s, e':\tau, \Sigma$\`by induction hypothesis\\
    (12) $\G, \Theta'{\uparrow}^\omega, \ov{y_i{:}_\omega\s_i}; \D, \ov{y_j{:}_1\s_j}, \Theta'{\uparrow}^1 \vdash (K~\ov{x_{i,j}}, \Theta'{\downarrow}^1_e, e', \Sigma) : \s \bigotimes(\Theta'\downarrow^1_\sigma) \otimes \tau \bigotimes(\Sigma)$\`by \ref{lem:unwrapenv}\\
    (13) $Let~\G''=\G,\Theta'{\uparrow}^\omega~and~\D_c\D_d\D_4\D_5=\D,\Theta'{\uparrow}^1$\` where $(\D_4, \D_5)$ types $(\Theta'{\downarrow^1_e}, \Sigma)$\\
    (13a) $\G''; \D_c \vdash K~\ov{x_{i,j}} : \s$\\
    (13b) $\G'', z{:}_{\ov{y_j{:}_1\s_j}}\s, \ov{y_i{:}_\omega\s_i}; \D_d, \ov{y_j{:}_1\s_j} \vdash e' : \tau$\\
    (14) $\ov{\G''; \cdot \vdash x_i : \s_i}$\\
    (15) $\ov{\G''; \D_j \vdash x_j : \s_j}$\` for $\ov{D_j}=\D_c$, by inv\\
    (16) $\G'', z{:}_{\D_c}\s; \D_c, \D_d \vdash e'\ov{[x_{i,j}/y_{i,j}]} : \tau$\`by $1$-substs and $\omega$-substs lemmas\\
    (17) $\G''; \D_c, \D_d \vdash e'\ov{[x_{i,j}/y_{i,j}]}[K~\ov{x_{i,j}}/z] : \tau$\`by $\D$-subst lemma (13a,16)\\
    (18) $\G; \D \vdash \Theta' \mid e'\ov{[x_{i,j}/y_{i,j}]}[K~\ov{x_{i,j}}/z] : \tau, \Sigma$\`by well-typed state\\
    (19) $\G; \D \vdash \Theta'' \mid v : \tau, \Sigma$\`by induction hypothesis\\
    Subsubcase $\vdash$ $Alt_0$\\
    (7) $\G', z{:}_\cdot\s, \ov{y_i{:}_\omega\s_i}; \D_b \vdash e' : \tau$\` by inv\\
    (8) $\G', z{:}_\cdot\s, \ov{y_i{:}_\omega\s_i}; \D_a, \D_b \vdash (e, e') : \sigma \otimes \tau$\` by pair (5,7)\\
    (9) $\G, z{:}_\cdot\s, \ov{y_i{:}_\omega\s_i}; \D \vdash \Theta \mid e : \sigma, e' : \tau, \Sigma$\` by well-typed state\\
(10) $\G, z{:}_\cdot\s, \ov{y_i{:}_\omega\s_i}; \D \vdash \Theta' \mid K~\ov{x_i} : \sigma, e' : \tau, \Sigma$\` by induction hypothesis\\
    (11) $Let~\G'' = \G,\Theta'{\uparrow}^\omega~and~\D_d,\D_4,\D_5 = \D, \Theta'{\uparrow}^1$\` where $(\D_4, \D_5)$ types $(\Theta'{\downarrow^1_e}, \Sigma)$\\
    (11a) $\G''; \cdot \vdash K~\ov{x_i} : \s$\`since $K$ has no linear fields in this subcase\\
    (11b) $\G'', z{:}_{\cdot}\s, \ov{y_i{:}_\omega\s_i}; \D_d \vdash e' : \tau$\\
    (12) $\ov{\G''; \cdot \vdash x_{i} : \s_i}$\`by inv (11a)\\
    (13) $\G'', z{:}_{\cdot}\s; \D_d \vdash e'\ov{[x_i/y_i]} : \tau$\` by $\omega$-subst lemma\\
    (14) $\G''; \D_d \vdash e'\ov{[x_i/y_i]}[K~\ov{x_i}/z] : \tau$\` by $\D$-subst lemma\\
    (15) $\G; \D \vdash \Theta' \mid e'\ov{[x_i/y_i]}[K~\ov{x_i}/z] :
    \tau, \Sigma$\` by well-typed state\\
    (16) $\G; \D \vdash \Theta'' \mid v : \tau, \Sigma$\` by induction hypothesis\\
\end{tabbing}
\item[Case:] $\Gamma; \Delta \vdash (\Theta \mid \ccase{e}{z{:}_{\Delta'}\s~\{\_ \Rightarrow e'}\}) \Downarrow (\Theta'' \mid v) : \tau$
\begin{tabbing}
    (1) $\Gamma; \Delta \vdash \Theta \mid \ccase{e}{z{:}_{\Delta_a}\s~\{\_ \Rightarrow e'}\} : \tau, \Sigma$\\
    (2) $\G, \Theta{\uparrow}^\omega; \D, \Theta{\uparrow}^1 \vdash (\ccase{e}{z{:}_{\Delta_a}\s~\{\_ \Rightarrow e'}\}, \Theta{\downarrow}^1_e, \Sigma) : \tau \otimes \bigotimes(\Theta{\downarrow}^1_\sigma) \otimes \bigotimes(\Sigma)$\`by \ref{lem:unwrapenv}\\
    (3) $Let~\G' = \G, \Theta\uparrow^\omega; \D_a,\D_b,\D_2,\D_3 = \D,\Theta\uparrow^1$\`s.t. $(\D_2,\D_3)$ types $(\Theta\downarrow^1_e, \Sigma)$\\
    (4) $\G'; \D_a, \D_b \vdash \ccase{e}{z{:}_{\D_a}\s\{\_ \Rightarrow e'\}} : \tau$\\
    (5) $\G'; \D_a \vdash e : \sigma$\\
    Subcase $e$ is in WHNF\\
    (-) Similar to the same subcase in the case-with-pattern-vars reduction,\\
        except only $Alt_{\_}$ is relevant which is far simpler\\
    Subcase $e$ is not in WHNF\\
    (6) $\G', z{:}_{\irr{\D_a}}\s; \irr{\D_a}, \D_b \vdash e' : \tau$\\
    (7) $\G'; \D_b, z{:}_1\s \vdash e' : \tau$\` by $\D$-bound to linear\\
    (8) $\G'; \D_a, \D_b, z{:}_1\s \vdash (e, e') : \sigma \otimes \tau$\` by pair\\
    (9) $\G; \D, z{:}_1\s \vdash \Theta \mid e : \sigma, e' : \tau,
    \Sigma$\` by well-typed state\\
    (10) $\G; \D, z{:}_1\s \vdash \Theta' \mid K~\ov{x_i} : \sigma, e' : \tau, \Sigma$\` by induction hypothesis\\
    (11) $Let~\G';\D_c,\D_d,\D_4,\D_5 = \G, \Theta'{\uparrow}^\omega; \Delta, \Theta'{\uparrow}^1$\`s.t. $(\D_4,\D_5)$ types $(\Theta'{\downarrow}^1_e, \Sigma)$\\
    (12) $\G';\D_c \vdash K~\ov{x_i} : \s$\\
    (13) $\G';\D_d,z{:}_1\s \vdash e' : \tau$\\
    (14) $\G';\D_c, \D_d \vdash e'[K~\ov{x_i}/z] : \tau$\` by $1$-subst lemma\\
    (15) $\G;\D \vdash \Theta' \mid e'[K~\ov{x_i}/z] : \tau, \Sigma$\` by well-typed state\\
    (16) $\G;\D \vdash \Theta'' \mid v : \tau, \Sigma$\` by induction hypothesis\\
\end{tabbing}
\end{description}
\end{proof}

We also derive type preservation for the lazy natural semantics.

  \begin{theorem}[Type Preservation]
For any well-typed $[\Theta \mid e]$, if $\Theta : e \Downarrow
\Theta' : e'$ or $\Theta : e \Downarrow^* \Theta' : e'$ then we have
that $[\Theta \mid e']$ is well-typed.
\end{theorem}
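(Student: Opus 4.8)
The plan is to derive type preservation for the natural semantics as a corollary of the already-established result for the instrumented semantics (Theorem~\ref{thm:typepres}), transported across the bisimulation between the two semantics. Concretely, I would first exploit the fact that a natural evaluation environment $\Theta$ contains only unrestricted bindings. Hence, from the well-typedness of $[\Theta \mid e]$ together with Lemma~\ref{lem:unwrapenv} I obtain a well-typed instrumented state $\G ; \D \vdash \Theta \mid e : \tau , \cdot$ — taking the auxiliary list $\Sigma$ empty (up to the trivial pairing with the empty $\Sigma$-tuple in the definition of well-typed state) and reading every binding of $\Theta$ as an unrestricted instrumented binding. Since $\Theta{\uparrow^\omega} = \Theta$ and the expression components coincide, this instrumented state is immediately related to the natural one by the bisimulation, i.e.\ $\gamma(\Theta : e)(\G ; \D \vdash \Theta \mid e : \tau , \cdot)$ holds.

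Next, I would feed this related pair, together with the given natural derivation, into the natural-to-instrumented direction of the bisimulation, Lemma~\ref{lem:naturaltoinstr} (its first bullet for $\Downarrow$ and its second for $\Downarrow^*$, handling both forms in the statement uniformly). This yields a matching instrumented evaluation $\G ; \D \vdash (\Theta \mid e) \Downarrow (\Theta_I' \mid e') : \tau , \cdot$ whose endpoint is a well-typed state and satisfies $\gamma(\Theta' : e')(\G ; \D \vdash \Theta_I' \mid e' : \tau , \cdot)$. Well-typedness of the instrumented endpoint is precisely what Theorem~\ref{thm:typepres} guarantees along this derivation (and what the lemma records), so at this stage I have a fully well-typed instrumented state that is bisimilar to the target natural state $(\Theta' : e')$.

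The final, and main, step is to reflect this well-typedness back to the natural semantics. From $\gamma(\Theta' : e')(\G ; \D \vdash \Theta_I' \mid e' : \tau , \cdot)$ I know the expression components agree and that $\Theta_I'{\uparrow^\omega} \leq \Theta'$. Thus $\Theta'$ differs from $\Theta_I'$ only by carrying, as unrestricted, exactly those bindings that the instrumented linear-variable rule erased once they were forced, and by recording evaluated values in place of original linear bindings. I would close the gap by running Lemma~\ref{lem:unwrapenv} in the converse direction on the instrumented endpoint and re-expanding the environment over $\Theta'$, using admissibility of weakening in the unrestricted context $\G$ to reintroduce the extra bindings of $\Theta'$ as unrestricted assumptions — sound because any resource those bindings had captured has already been consumed along the evaluation.

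The hard part is this reflection step: the relation $\gamma$ is deliberately loose ($\Theta_I'{\uparrow^\omega} \leq \Theta'$ rather than equality), so $[\Theta' \mid e']$ cannot simply be read off from the instrumented typing derivation. I expect to need a small auxiliary lemma — the typed counterpart of Lemma~\ref{lem:instrtonatural} — stating that whenever an instrumented state $\G ; \D \vdash \Theta_I \mid e : \tau , \Sigma$ is well-typed and $\Theta_I{\uparrow^\omega} \leq \Theta$, the expansion $[\Theta \mid e]$ is well-typed, justified by unrestricted weakening and by the observation that the surplus $\omega$-bindings of $\Theta$ stand for already-consumed resources. Once this lemma is in place, the theorem follows by a direct chaining of the two bisimulation lemmas with Theorem~\ref{thm:typepres}, and the remaining bookkeeping is routine.
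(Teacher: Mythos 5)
Your proposal follows exactly the paper's route: the paper's entire proof of this theorem is the two-result citation ``By Lemma~\ref{lem:naturaltoinstr} and Theorem~\ref{thm:typepres}'', i.e., transporting instrumented type preservation across the natural-to-instrumented bisimulation, which is precisely the chain of steps you describe (construct the related well-typed instrumented state, apply the bisimulation lemma, invoke instrumented preservation, reflect back through $\gamma$). The reflection step you single out as the hard part --- the looseness of $\gamma$, with $\Theta_I'{\uparrow^\omega} \leq \Theta'$ rather than equality --- is left entirely implicit in the paper, so your treatment is, if anything, more careful than the published one.
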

\begin{proof}
By Lemma~\ref{lem:naturaltoinstr} and Theorem~\ref{thm:typepres}.
\end{proof}

\subsection{Progress\label{sec:proof:progress}}

\begin{theorem}[Progress]\label{thm:prog}
Let $\G ; \D \vdash \Theta \mid e : \tau,\Sigma$. Any partial derivation $\G; \D \vdash (\Theta \mid e)
\Downarrow?,\Sigma$ can be extended. 
\end{theorem}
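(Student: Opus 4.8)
The plan is to establish progress by analysing the \emph{head} of the partial derivation, in the standard big-step style~\cite{Gunter1993APA}. Writing $a$ for the root state of the given partial derivation $\G;\D\vdash(\Theta\mid e)\Downarrow?,\Sigma$, let $b$ be its unique head, i.e.\ the single premise carrying no sub-derivation, so that $a\Downarrow^* b$. Since $a$ is a well-typed state by hypothesis, Type Preservation (Theorem~\ref{thm:typepres}) applied in its $\Downarrow^*$ form yields that $b$ is again a well-typed state, say $\G';\D'\vdash(\Theta'\mid e'):\tau',\Sigma'$. It then suffices to exhibit, for this well-typed head, a rule of the instrumented semantics that applies to it, and we proceed by case analysis on the shape of $e'$.

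First I would dispatch the routine cases. If $e'$ is a value --- a multiplicity abstraction $\Lambda p.\,e''$, a term abstraction $\lambda x{:}_\pi\sigma.\,e''$, or a saturated constructor $K~\ov{x_i}$ --- the corresponding axiom applies and extends the derivation immediately. If $e'$ is a multiplicity application, a term application $e''~x$, a non-recursive or recursive let, or a case expression, the matching rule applies by committing to its leftmost premise, which becomes the new head: for applications and cases this premise drives evaluation of the head subterm (the operator, or the scrutinee), while for lets it simply augments $\Theta'$ with the bound expressions before evaluating the body. No side condition can block any of these steps. Two of these forms call for one extra observation once their first premise has been driven to a value: for a term application the operator reduces to a $\lambda$-value whose multiplicity selects between the $\to_1$ and $\to_\omega$ rules, and for a case the scrutinee reduces to a constructor application of the scrutinee's datatype (by Type Preservation), which, by the assumed exhaustiveness of the alternatives, matches some pattern, so the body premise of the rule is well defined and evaluation advances into the selected branch.

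The crux is the variable case, $e' = x$, which is exactly where the instrumented semantics can get stuck: on a lookup of a linear binding that a previous use has already erased from $\Theta'$. To rule this out I would unfold the well-typedness of $b$ through Lemma~\ref{lem:unwrapenv}, so that $x$ (together with the residual terms recorded in $\Sigma'$) is typed under the coupled contexts $\G',\Theta'{\uparrow^\omega}$ and $\D',\Theta'{\uparrow^1}$. Inversion on the variable rules then forces $x$ to be accounted for either in the unrestricted/$\Delta$-bound context or in the linear context $\D',\Theta'{\uparrow^1}$, and in each subcase the coupling furnished by the well-typed-state definition places the corresponding binding in $\Theta'$ --- an $x{:}_\omega\sigma = e''$ or $x{:}_{\Delta''}\sigma = e''$ in the unrestricted or $\Delta$-bound subcase, and an as-yet-unerased $x{:}_1\sigma = e''$ in the linear subcase --- so the relevant lookup rule fires. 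The main obstacle, and the real content of the theorem, is precisely this linear subcase: the argument must show that well-typedness forbids the binding from having been erased, since an erased linear binding would leave $x$ a free linear variable demanded by the expression yet absent from $\D',\Theta'{\uparrow^1}$, contradicting the well-typedness of $b$. This is where the linearity guarantee --- that a linear resource cannot be consumed twice --- is ultimately cashed out.
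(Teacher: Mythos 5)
Your proposal is correct and follows essentially the same route as the paper's own (much terser) proof: both reduce progress to the well-typedness of the head of the partial derivation (via Theorem~\ref{thm:typepres} in its $\Downarrow^*$ form), treat all cases as standard, and isolate the (linear) variable lookup as the only potential stuck state, ruled out because a state demanding a variable absent from $\Theta$ is not well-typed and hence cannot be the head. Your write-up merely makes explicit what the paper leaves implicit, namely the appeal to preservation for partial derivations and the unfolding of well-typed states through Lemma~\ref{lem:unwrapenv} in the variable case.
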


\begin{proof}
 Essentially all cases are standard. The only potential issue is the
 (linear) variable rule.
It suffices to show that whenever a variable is evaluated it is
present in the evaluation environment. This situation is impossible
since the state
$\G ; \D \vdash \Theta \mid x : \tau , \Sigma$ where $x$ is absent
from $\Theta$ is not well-typed and so cannot be the head of a partial
derivation. 
\end{proof}

\begin{theorem}[Progress]
For any partial derivation of $\Theta : e \Downarrow?$, for $[\Theta |
e]$ well-typed, the derivation can be extended.
\end{theorem}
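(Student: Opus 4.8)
The plan is to transfer the Progress result from the instrumented semantics (Theorem~\ref{thm:prog}) to the natural semantics through the bisimulation of Lemmas~\ref{lem:naturaltoinstr} and~\ref{lem:instrtonatural}, exactly mirroring how Type Preservation was ported between the two semantics. First I would observe that, since $[\Theta \mid e]$ is well-typed, Lemma~\ref{lem:unwrapenv} lets me exhibit a well-typed instrumented state $\G;\D \vdash \Theta' \mid e : \tau,\Sigma$ whose unrestricted bindings satisfy $\Theta'{\uparrow^\omega} \le \Theta$, so that the bisimulation $\gamma(\Theta : e)(\G;\D \vdash \Theta' \mid e : \tau,\Sigma)$ holds by definition. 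This instrumented root is the well-typed configuration from which Theorem~\ref{thm:prog} can be invoked.

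Next, given the partial derivation of $\Theta : e \Downarrow?$ with head $b$, I would feed the induced partial evaluation $\Theta : e \Downarrow^* b$ into the second (partial-evaluation) clause of Lemma~\ref{lem:naturaltoinstr}. This produces a bisimilar instrumented partial evaluation $\G;\D \vdash (\Theta' \mid e) \Downarrow^* (\Theta''' \mid v)$ whose head is related to $b$ by $\gamma$. Since this is a partial derivation rooted at the well-typed state $\G;\D \vdash \Theta' \mid e : \tau,\Sigma$, Progress for the instrumented semantics (Theorem~\ref{thm:prog}) guarantees it can be extended, i.e.\ some evaluation rule matches the instrumented head $(\Theta''' \mid v)$. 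Finally I would transport this single extension step back across the bisimulation via Lemma~\ref{lem:instrtonatural}, which witnesses a corresponding step of the natural semantics from a state bisimilar to $b$; as $b$ is precisely the head of the original natural partial derivation, this exhibits a rule applicable to $b$ and hence extends it.

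The main obstacle will be ensuring the bisimulation genuinely maps the \emph{head} of a partial derivation on one side to the head of a partial derivation on the other, so that "can be extended" transfers faithfully rather than being lost in a mismatch between the two derivation trees — this is exactly what the $\Downarrow^*$ clauses of Lemmas~\ref{lem:naturaltoinstr} and~\ref{lem:instrtonatural} are designed to deliver, so I would lean on them rather than re-prove head correspondence. The real subtlety is the (linear) variable rule: the instrumented semantics erases linear bindings once forced and can therefore become stuck on a repeated use, whereas the natural semantics retains every binding as unrestricted. Progress for the instrumented semantics already rules out such stuck states for well-typed configurations, and since $\Theta'{\uparrow^\omega} \le \Theta$ the natural environment holds a superset of the bindings available on the instrumented side; thus any rule applicable to the instrumented head is \emph{a fortiori} applicable to $b$, and the natural semantics is never the one that gets stuck. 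I expect every non-variable case to be routine, as all remaining rules act identically on both semantics.
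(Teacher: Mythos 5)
Your proposal is correct and takes essentially the same route as the paper, which proves this theorem exactly by transferring across the bisimulation and invoking instrumented progress (its proof reads ``Follows from Lemma~\ref{lem:instrtonatural} and Theorem~\ref{thm:prog}''). The only difference is one of explicitness: the paper's one-liner leaves the forward transfer of the given partial derivation implicit, whereas you correctly spell it out via the $\Downarrow^*$ clause of Lemma~\ref{lem:naturaltoinstr} before applying Theorem~\ref{thm:prog} and returning via Lemma~\ref{lem:instrtonatural}, and you also flag the right subtleties (head correspondence and the linear-variable rule).
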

\begin{proof}
Follows from Lemma~\ref{lem:instrtonatural} and Theorem~\ref{thm:prog}.
 \end{proof}

\section{Optimisations preserve linearity}\label{app:optimisations}

\subsection{Inlining}
Inlining substitutes occurrences of a let-bound $\D$-variable $x$
with the expression $e$ it is bound to, which determines its usage environment
$\D$. Intuitively, in the let body $e'$, $x$ can occur once or not at all: if
$x$ occurs, then the linear resources $\D$ used indirectly through $x$ are used
via $e$ instead; if $x$ does not occur, then the resources $\D$ are already
used linearly in $e'$ and the substitution is a no-op.

\InliningTheorem

\begin{proof}~

\begin{tabbing}
    (1) $\G;\D,\D' \vdash \llet{\xD = e}{e'} : \vp$\\
    (2) $\G,\D \vdash e : \s$\` by inv. on (let)\\
    (3) $\G,\xD; \D, \D' \vdash e' : \vp$\` by inv. on (let)\\
    (4) $\G;\D,\D' \vdash e'[e/x] : \vp$\` by $\D$-subst. lemma (2,3)\\
    (5) $\G,\xD; \D,\D' \vdash e'[e/x] : \vp$ \` by (admissible) $Weaken_\Delta$\\
    (6) $\G;\D,\D' \vdash \llet{\xD = e}{e'[e/x]} : \vp$\` by (let) (2,5)\\
\end{tabbing}
\end{proof}

\subsection{\texorpdfstring{$\beta$}{Beta}-reduction}

$\beta$-reduction evaluates a $\lambda$-abstraction application by substituting
the $\lambda$-bound variable $x$ with the argument $e'$ in the body of the
$\lambda$-abstraction $e$.
We consider two definitions of $\beta$-reductions, one that substitutes all
occurrences of a variable by an expression, as in call-by-name, and other which
creates a lazy let binding to share the result of computing the argument
expression amongst uses of the variable, as in call-by-need.

The first kind of $\beta$-reduction on term abstractions can be seen to
preserve linearity by case analysis. When the function is linear, the binding
$x$ is used exactly once in the body of the lambda, thus can be substituted by
an expression typed with linear resources, since the expression is guaranteed
to be used exactly once in place of $x$. The proof is direct by type preservation.

\BetaReductionTheorem

\begin{proof}
~
\begin{tabbing}
(1) $\G; \D \vdash (\lambda \x[\pi].~e)~e' : \vp$\\
    (2) $(\lambda \x[\pi].~e) e' \longrightarrow e[e'/x]$\\
    (3) $\G; \D \vdash e[e'/x] : \vp$ \` by type preservation theorem (1,2)\\

\end{tabbing}
\end{proof}

\noindent We assume the $\beta$-reduction with sharing (i.e. the one that creates a let
binding) is only applicable when the $\lambda$-abstraction has an unrestricted
function type. Otherwise, the call-by-name $\beta$-reduction is always
favourable, as we know the resource to be used exactly once and hence sharing
would be counterproductive, and result in an unnecessary heap allocation.
Consequently, the argument to the function must be unrestricted (hence use no
linear resources) for the term to be well-typed, and so it vacuously follows
that linearity is preserved by this transformation.

\BetaReductionSharingTheorem

\begin{proof}
~
\begin{tabbing}
    (1) $\G;\D \vdash (\lambda \xo.~e)~e' : \vp$\\
    (2) $\G; \D \vdash (\lambda \xo.~e) : \s \to_\omega \vp$\` by inv. on $\lambda E_\omega$\\
    (3) $\G; \cdot \vdash e' : \s$ \` by inv. on $\lambda E_\omega$\\
    (4) $\G,\xo; \D \vdash e : \vp$\` by inv. on $\lambda I$\\
    (5) $\G, \x[\cdot][\s]; \D \vdash e : \vp$\` by Lemma~\ref{lem:undelta}\\
    (6) $\G,\G' \vdash \llet{x = e'}{e} : \vp$\` by let (5,3)\\
\end{tabbing}
\end{proof}

\noindent Finally, $\beta$-reduction on multiplicity abstractions is also type
preserving. The argument of the application is a multiplicity rather than an
expression, so no resources are needed to type it, and, since the body of the
lambda must treat the multiplicity $p$ as though it were linear, the body uses
the argument linearly regardless of the instantiation of $p$ at $\pi$.

\BetaReductionMultTheorem

\begin{proof}
    Trivial by invoking preservation using the $\Lambda$ application reduction and the assumption
\end{proof}

\subsection{Case of known constructor}

Case-of-known constructor is a transformation that essentially evaluates a case
expression of a known constructor at compile time, by substituting in the
matching alternative the case binder by the scrutinee and pattern variables by
constructor arguments.
Intuitively, either the case binder is used exactly once to consume the
resources of the scrutinee, or the pattern components whose matching
constructor argument uses linear resources are used exactly once, so one of the
substitutions is a no-op.  If the pattern variables are substituted by the
matching scrutinee expressions, the expressions are still only used once, and
if the case binder is substituted by the scrutinee, it is still used exactly
once.
The proof follows trivially from the preservation theorem.

\CaseOfKnownConstructorTheorem

\begin{proof}~
\begin{tabbing}
    (1) $\G; \D, \D' \vdash \ccase{K~\ov{e}}{\z[\D][\s]~\{..., K~\ov{x} \Rightarrow e_i\}} : \vp$\\
    (2) $\ccase{K~\ov{e}}{\z[\D][\s]~\{..., K~\ov{x} \Rightarrow e_i\}} \longrightarrow e_i\ov{[e/x]}[K~\ov{e}/z]$\\
    (3) $\G; \D,\D' \vdash e_i\ov{[e/x]}[K~\ov{e}/z] : \vp$\` by preservation theorem (1,2)\\
\end{tabbing}
\end{proof}

\subsection{Case of Case\label{sec:proof:caseofcase}}

The case of case transformation applies to case expressions whose scrutinee is
another case expression, and returns the innermost case expression transformed by repeating
the outermost case expression in each alternative of the innermost case,
scrutinizing the original alternative body.

Intuitively, since the scrutinee of the outermost case is not in WHNF, no
resources from it can directly occur in the outermost alternatives. By moving
the outermost alternatives inwards with a different scrutinee, the alternatives
remain well-typed because they are typed using either the case binder or the
pattern bound variables, which, by the \emph{Irrelevance} lemma, makes it
well-typed for any scrutinee consuming arbitrary resources.

\CaseOfCaseTheorem

\begin{proof}~
\begin{tabbing}
    (1) $\G; \D, \D', \D'' \vdash \ccase{\ccase{e_c}{\zD~\{\ov{\rho_{c_i} \Rightarrow e_{c_i}}\}}}{w{:}\s'~\{\ov{\rho_i \Rightarrow e_i}\}} : \vp$\\
    (2) $\G; \D, \D' \vdash \ccase{e_c}{\zD~\{\ov{\rho_{c_i} \Rightarrow e_{c_i}}\}} : \s'$ \`by inv.\\
    (3) $\ov{\G, \var[w][\irr{\D,\D'}][\s']; \irr{\D,\D'},\D'' \vdash \rho \Rightarrow e_i :^w_{\irr{\D,\D'}} \vp}$\`by inv.\\
    Subcase $e_c$ is not in WHNF\\
    (4) $\G;\D \vdash e_c : \s$\\
    (5) $\ov{\G,\var[z][\irr{\D}]; \irr{\D}, \D' \vdash \rho_{c_i} \Rightarrow e_{c_i} :^z_{\irr{\D}} \s'}$\`by inv. (2)\\
    For all alternatives\\
    Subcase $\rho_{c_i} = \_$\\
    (6) $\G,\var[z][\irr{\D}]; \irr{\D}, \D' \vdash e_{c_i} : \s'$\`by inv. on $Alt\_$ (5)\\
    (7) $\ov{\G, \var[w][\irr{\irr{\D},\D'}][\s'],\D''; \irr{\D,\D'},\D'' \vdash_{alt} \rho \Rightarrow e_i :^w_{\irr{\irr{\D},\D'}} \vp}$\`by irrelevance (3)\\
    Subcase $e_{c_i}$ is not in WHNF\\
    (8) $\G,\var[z][\irr{\D}]; \irr{\D},\D',\D'' \vdash \ccase{e_{c_i}}{w{:}_{\irr{\irr{\D},\D'}}\s'~\{\ov{\rho_i \Rightarrow e_i}\}} : \vp$\\\`by (6,7) CaseNotWHNF\\
    Subcase $e_{c_i}$ is in WHNF\\
    (8) $\G,z{:}_{\irr{\D}}\s; \irr{\D},\D' \Vdash e_{c_i} : \s' \gtrdot \ov{\D_i}$\`for some $\ov{\D_i}$ in this subcase\\
    (9) $\G, w{:}_{\ov{\D_i}}\s'; \ov{\D_i}, \D'' \vdash \rho_j \Mapsto e_i :^w_{\ov{\D_i}} \vp$\`by irrelevance (3)\\\` for some matching $\rho_j$\\
    (10) $\G, \z[\irr{\D}]; \irr{\D},\D',\D'' \vdash \ccase{e_{c_i}}{w{:}_{\ov{\D_i}}\s'~\{\ov{\rho_i \Rightarrow e_i}\}} : \vp$\` by (7,8,9) CaseWHNF\\
    Subcase $\rho_{c_i} = K~\ov{\xo}$\\
    (6) $\G,\z[\cdot], \ov{\xo} ; \cdot, \D' \vdash e_{c_i} : \s'$\`by inv. on $Alt0$ (5)\\
    (7) $\ov{\G; \var[w][\irr{\D'}][\s']; \irr{\D'}, \D'' \vdash \rho_i \Rrightarrow e_i :^w_{\irr{\D'}} \vp}$\`by irrelevance (3)\\
    Subcase $e_{c_i}$ is not in WHNF\\
    (8) $\G,\z[\cdot],\ov{\xo}; \D',\D'' \vdash \ccase{e_{c_i}}{\var[w][\irr{\D'}][\s']~\{\ov{\rho_i \Rightarrow e_i}\}} : \vp$\`by (6,7) CaseNotWHNF\\
    Subcase $e_{c_i}$ is in WHNF\\
    (8) $\G,\z[\cdot], \ov{\xo} ; \cdot, \D' \Vdash e_{c_i} : \s' \gtrdot \ov{\D_i}$\`by (6) in subcase\\
    (9) $\G, \var[w][\ov{\D_i}][\s']; \ov{\D_i}, \D'' \vdash_{alt} \rho_i \Mapsto e_i :^w_{\ov{\D_i}} \vp$\`by irrelevance (3)\\
    (10) $\G,\z[\cdot],\ov{\xo}; \ov{\D_i},\D'' \vdash \ccase{e_{c_i}}{\var[w][\ov{\D_i}][\s']~\{\ov{\rho_i \Rightarrow e_i}\}} : \vp$\`by (7,8,9) CaseWHNF\\
    Subcase $\rho_{c_i} = K~\ov{\xo}\ov{\y[1]}$, recalling that $e_c$ is not in WHNF\\
    (6) $\G,\z[\irr{\D}], \ov{\xo}, \ov{y{:}_{\lctag{\irr{\D}}{K_i}}\s}; \irr{\D},\D' \vdash e_{c_i} : \s'$\`by inv. on $AltN_\textrm{Not WHNF}$ (5)\\
    (7) $\G,\var[w][\irr{\irr{\D},\D'}][\s']; \irr{\irr{\D}, \D'}, \D'' \vdash_{alt} \rho_i \Rrightarrow e_i :^w_{\irr{\irr{\D},\D'}} \vp$\`by irrelevance (3)\\
    Subcase $e_{c_i}$ is not in WHNF\\
    (8) $\G,\z[\irr{\D}], \ov{\xo}, \ov{y{:}_{\lctag{\irr{\D}}{K_i}}\s}; \irr{\D},\D',\D'' \vdash \ccase{e_{c_i}}{\var[w][\irr{\irr{\D},\D'}][\s']~\{\ov{\rho_i \Rightarrow e_i}\}} : \vp$\\
    Subcase $e_{c_i}$ is in WHNF\\
    (8) $\G,\z[\irr{\D}], \ov{\xo}, \ov{y{:}_{\lctag{\irr{\D}}{K_i}}\s}; \irr{\D},\D' \Vdash e_{c_i} : \s' \gtrdot \ov{\D_i}$\`by (6) in subcase\\
    (9) $\G,\var[w][\irr{\D},\D'][\s']; \irr{\D}, \D', \D'' \vdash_{alt} \rho_i \Mapsto e_i :^w_{\irr{\D},\D'} \vp$\`by irrelevance (3)\\
    (10) $\G,\z[\irr{\D}], \ov{\xo}, \ov{y{:}_{\lctag{\irr{\D}}{K_i}}\s}; \irr{\D},\D',\D'' \vdash \ccase{e_{c_i}}{\var[w][\irr{\D},\D'][\s']~\{\ov{\rho_i \Rightarrow e_i}\}} : \vp$\\
    Commonly to all alternatives subcases:\\
    (11) $\G;\D,\D',\D'' \vdash \ccase{e_c}{\textrm{alternatives from}~(8)~\textrm{or}~(10)} : \vp$\\
    Subcase $e_c$ is in WHNF\\
    (4) $\G; \D \Vdash e_c : \s \gtrdot \D$\\
    (5) $\G, \zD; \D,\D' \vdash \rho_{c_j} \Rightarrow e_{c_i} :^z_\D \s \Mapsto \vp$\`for $\rho_{c_j}$ matches $e_c$\\
    (5) $\ov{\G, \z[\irr{D}]; \irr{\D},\D' \vdash \rho_{c_i} \Rightarrow e_{c_i} :^z_{\irr{\D}} \s \Rrightarrow \vp}$\\
    Continue as in the previous subcase, but with $\D$ instead of $\irr{\D}$\\

\end{tabbing}
\end{proof}
 
\subsection{Commuting let}

The let floating transformations move lazy let constructs, in and out of other
constructs, to further unblock more optimisations.
In essence, since let bindings consume resources lazily (by introducing a
$\D$-variable with usage environment $\D$, where $\D$ is the typing environment
of the bound expression), we can intuitively move them around without violating linearity.
We prove let floating transformations \emph{full-laziness} and three
\emph{local-transformations} preserve types and linearity.

\FullLazinessTheorem

\begin{proof}~

\begin{tabbing}
    (1) $\G; \D,\D' \vdash \lambda \y[\pi].~\llet{\xD = e}{e'} : \s' \to \vp$\\
    Subcase $\pi = 1$\\
    (2) $\G; \D,\D',\y[1] \vdash \llet{\xD = e}{e'} : \vp$\`by inv. on $\lambda I$\\
    (3) $\G; \D \vdash e : \s$\` by inv. on (let)\\
    (4) $\G, \xD; \D, \D', \y[1] \vdash e' : \vp$\`by inv. on (let)\\
    (5) $\G, \xD; \D, \D' \vdash \lambda \y[1].~e' : \s' \to \vp$\`by ($\lambda I$) (4)\\
    (6) $\G; \D,\D' \vdash \llet{\xD = e}{\lambda \y[1].~e'} : \s' \to \vp$ by (let) $(3,5)$\\
    Subcase $\pi = \omega$\\
    As above but $x$ is put in the unrestricted context $\G$
\end{tabbing}
\end{proof}

\LocalTransformationsTheorem

\begin{description}
\item[1.] Commuting Let-app
\begin{proof}~
\begin{tabbing}
    (1) $\G; \D, \D', \D'' \vdash (\llet{\xD = e_1}{e_2})~e_3 : \vp$\\
    (2) $\G; \D, \D' \vdash \llet{\xD = e_1}{e_2} : \s' \to_\pi \vp$ \`by inv. on 1\\
    (3) $\G; \D'' \vdash e_3 : \s'$\`by inv. on 1\\
    (4) $\G; \D \vdash e_1 : \s$\`by inv. on 2\\
    (5) $\G, \xD; \D, \D' \vdash e_2 : \s' \to_\pi \vp$\`by inv. on 2\\
    (6) $\G, \xD; \D, \D', \D'' \vdash e_2~e_3 : \vp$\` by $\lambda_\pi E$\\
    (7) $\G; \D, \D', \D'' \vdash \llet{\xD = e_1}{e_2~e_3} : \vp$\`by let (4,6)\\
\end{tabbing}
\end{proof}

\item[2.] Commuting let-case
\begin{proof}~
\begin{tabbing}
    (1) $\G; \D, \D', \D'' \vdash \ccase{\llet{\xD = e_1}{e_2}}{\z[\D,\D'][\s']~\{\ov{\rho \Rightarrow e_3}\}} : \vp$\\
    (2) $\G; \D, \D' \vdash \llet{\xD = e_1}{e_2} : \s'$\`by inv. on 1\\
    (3) $\G; \D \vdash e_1 : \s$\`by inv. on 2\\
    (4) $\G, \xD;\D,\D' \vdash e_2 : \s'$\`by inv. on 2\\
    Subcase $e_2$ is in WHNF\\
    (5) $\ov{\G,\z[\D,\D'][\s']; \D, \D', \D'' \vdash_{alt} \rho \Rightarrow e_3 :^z_{\D,\D'} \s' \Mapsto \vp}$\`by inv. on 1\\
    (6) $\G, \xD; \D, \D', \D'' \vdash \ccase{e_2}{\z[\D,\D'][\s']~\{\ov{\rho \Rightarrow e_3}\}} : \vp$\`by CaseWHNF (4,5)\\
    (7) $\G; \D, \D', \D'' \vdash \llet{\xD = e_1}{\ccase{e_2}{\z[\D,\D'][\s']~\{\ov{\rho \Rightarrow e_3}\}}} : \vp$\`by Let (3,6)\\
    Subcase $e_2$ is not in WHNF\\
    (5) $\ov{\G,\z[\irr{\D,\D'}][\s']; \irr{\D, \D'}, \D'' \vdash_{alt} \rho \Rightarrow e_3 :^z_{\irr{\D,\D'}} \s' \Rrightarrow \vp}$\`by inv. on 1\\
    (6) $\G, \xD; \D, \D', \D'' \vdash \ccase{e_2}{\z[\D,\D'][\s']~\{\ov{\rho \Rightarrow e_3}\}} : \vp$\`by CaseWHNF (4,5)\\
    (7) $\G; \D, \D', \D'' \vdash \llet{\xD = e_1}{\ccase{e_2}{\z[\D,\D'][\s']~\{\ov{\rho \Rightarrow e_3}\}}} : \vp$\`by Let (3,6)\\
\end{tabbing}
\end{proof}

\item[3.] Commuting let-let
\begin{proof}~
\begin{tabbing}
    (1) $\G, \D;\D',\D'' \vdash \llet{\x[\D,\D'][\s'] = (\llet{\yD = e_1}{e_2})}{e_3} : \vp$\\
    (2) $\G; \D, \D' \vdash \llet{\yD = e_1}{e_2} : \s'$\`by inv. on 1\\
    (3) $\G, \x[\D,\D'][\s'];\D,\D',\D'' \vdash e_3$\`by inv. on 1\\
    (4) $\G; \D \vdash e_1 : \s$\`by inv. on 2\\
    (5) $\G, \yD; \D,\D' \vdash e_2 : \s'$\`by inv. on 2\\
    (6) $\G, \yD; \D,\D',\D'' \vdash \llet{\x[\D,\D'][\s'] = e_2}{e_3} : \vp$\`by Let (3,5) and Weaken\\
    (7) $\G; \D,\D',\D'' \vdash \llet{\yD = e_1}{\llet{\x[\D,\D'][\s'] = e_2}{e_3}} : \vp$\`by Let (4,6)\\
\end{tabbing}
\end{proof}
\end{description}

\begin{theorem}
If $\G; \D_1,\D_2,\D_3 \vdash \ccase{e}{\{\overline{\rho_j\Rightarrow e_j}, \rho \Rightarrow
  E[e_1], \overline{\rho_k\Rightarrow e_k}\}}  : \vp$ and $e_1$ does not
use any pattern variables introduced by $\rho$, the case binder nor variables bound in
context $E[{-}]$
then 
$\G ; \D_1,\D_2,\D_3  \vdash \llet{x = e_1}{\ccase{e}{\{\overline{\rho_j\Rightarrow
           e_j}, \rho \Rightarrow E[x], \overline{\rho_k\Rightarrow e_k}\}}} : \vp$,
     for some fresh $x$.
              
   \end{theorem}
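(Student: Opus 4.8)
The plan is to invert the typing of the outer case, rebuild the $\rho$-branch with the floated binder in scope, and close with the (Let) rule. First I would use the three side conditions: $e_1$ mentions neither the case binder $z$, nor the pattern variables of $\rho$, nor any variable bound inside $E[{-}]$. Hence the linear resources consumed by $e_1$ --- call them $\D_{e_1}$, a sub-context of $\D_1,\D_2,\D_3$ --- are \emph{ambient}: they are threaded into the hole of $E[{-}]$ from the top of the branch rather than produced by a binder of $E$ or by the pattern. In particular, in the non-WHNF case the scrutinee resources are irrelevant $\irr{\D}$ in the branch and, being usable only through $z$ or the pattern variables (which $e_1$ avoids), cannot occur in $\D_{e_1}$; in the WHNF case the scrutinee resources are refunded to the branch, so even if $e_1$ draws on them the (Let) rule keeps them available in its body. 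Either way I obtain $\G; \D_{e_1} \vdash e_1 : \s$ at the outer level.

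The main technical step is a context-replacement observation: from a derivation of the branch body $E[e_1]$ I would locate the sub-derivation typing $e_1$ at the hole with $\D_{e_1}$ and replace it by an instance of $(\mathit{Var}_1)$ for a fresh linear $\x[1][\s]$, yielding a derivation of $E[x]$ in which $\D_{e_1}$ has been swapped for $\{\x[1][\s]\}$. This is justified by induction on the shape of $E[{-}]$ ($\lambda$, application, let, case), the side conditions guaranteeing that no binder of $E$ scopes over $\D_{e_1}$, so the enclosing derivation transfers verbatim. I would then apply the Linear-to-$\Delta$-bound lemma (Lemma~\ref{lem:onedelta}) with fresh usage environment $\D_{e_1}$ to turn $\x[1][\s]$ into the $\Delta$-bound $\x[\D_{e_1}][\s]$, reintroducing $\D_{e_1}$ into the linear context; the induced substitution on $\G$ is vacuous since $x$ is fresh. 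The result is a derivation of $E[x]$ with exactly the original branch context and with $\x[\D_{e_1}][\s]$ added to $\G$.

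For the remaining branches $\ov{\rho_j}$ and $\ov{\rho_k}$, which do not contain $e_1$, I would weaken $\G$ with the unrestricted $\Delta$-bound variable $\x[\D_{e_1}][\s]$ (admissible $\mathit{Weaken}_\Delta$); their linear contexts and bodies are unchanged and $x$ is simply discarded. Reassembling with the same case rule used by the inversion (either $\mathit{Case}_{\textrm{WHNF}}$ or $\mathit{Case}_{\textrm{Not WHNF}}$) --- the scrutinee derivation is untouched, as it never mentions $x$ --- gives $\G, \x[\D_{e_1}][\s]; \D_1,\D_2,\D_3 \vdash \ccase{e}{\{\ov{\rho_j \Rightarrow e_j}, \rho \Rightarrow E[x], \ov{\rho_k \Rightarrow e_k}\}} : \vp$. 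Finally, (Let) combines this body with $\G; \D_{e_1} \vdash e_1 : \s$, splitting $\D_1,\D_2,\D_3$ as $\D_{e_1}$ together with the remainder, to produce $\G; \D_1,\D_2,\D_3 \vdash \llet{\x[\D_{e_1}][\s] = e_1}{\ccase{\dots}{\dots}} : \vp$, as required.

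I expect the context-replacement step to be the main obstacle: it is the only place where the one-hole context $E[{-}]$ must be traversed, and it is precisely where the three capture-avoidance side conditions are consumed. The delicate point is to show that abstracting the hole as a fresh linear variable leaves the surrounding derivation intact across the binders of $E$; this is routine but must be phrased as an explicit lemma by induction on $E$, with separate attention to the WHNF versus non-WHNF forms of the alternative judgement, where the scrutinee resources are respectively refunded or made irrelevant.
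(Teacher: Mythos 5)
Your proposal is correct and follows essentially the same route as the paper's proof: invert the case typing (splitting on the WHNF-ness of the scrutinee and on which alternative rule applies), use the capture-avoidance side conditions to extract a typing $\G;\D_{e_1} \vdash e_1 : \sigma'$ at the hole, replace the hole sub-derivation by a fresh variable, weaken the untouched branches, reassemble the case, and close with $(\mathit{Let})$. The only divergence is mechanical: you abstract the hole as a fresh \emph{linear} variable via $\mathrm{Var}_1$ and then invoke Lemma~\ref{lem:onedelta} to convert it into a $\Delta$-bound variable, whereas the paper performs its ``context instantiation'' directly with a $\mathrm{Var}_\Delta$ occurrence of $x$ annotated with exactly $e_1$'s linear context (e.g.\ $\D',\D_2$), which keeps the branch's linear context unchanged in a single step and sidesteps the freshness bookkeeping of the lemma; note also that the context-replacement step you rightly flag as the crux is something the paper uses without an explicit inductive lemma on $E[{-}]$, so your plan is, if anything, more fully spelled out there.
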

   
   \begin{proof}
   ~  
     \begin{tabbing}
       Subcase: $e$ is in WHNF\\
       (1) $\G;\D_1 \Vdash e:\s \gtrdot \ov{\D_i}$ \` by inversion\\
       (2) $\G ; \D_1 \vdash e : \s$ \` by inversion and constructor rule
       if needed\\
       (3) $\G,\var[z][\ov{\D_i}];\ov{\D_i},\D_2,\D_3 \vdash_{alt} \rho_j
       \Rightarrow e' :^z_{\ov{\D_i}} \s \Mapsto \vp$ \` by inversion\\
       (4) $\ov{\judg[\G,z{:}_{\irr{\D_1}}\s][\irr{\D_1},\D_2,\D_3]{\rho\Rightarrow e'}{\s
           \Rrightarrow \vp}[alt][\irr{\D_1}][z]}$ \` by inversion\\
       (5) $e$ matches $\rho_j$ \` by inversion\\
       Subsubcase: $\rho_j$ corresponds to branch containing $E[e_1]$\\
       Subsubsubcase: (3) derived by $\textrm{AltN}_{\textrm{WHNF}}$\\
       $\judg[\G, z{:}_{\ov{\Delta_i}} \sigma,\ov{\xo},\ov{y_i{:}_{\D_i}\s_i}^n][\ov{\D_i},\D_2,\D_3]{E[e_1]}{\vp}$
       \` by inversion\\
       
       $\judg[\G, z{:}_{\ov{\Delta_i}} \sigma,\ov{\xo},\ov{y_i{:}_{\D_i}\s_i}^n][\D',\D_2]{e_1}{\sigma'}$
       \` by inversion, with $\D' \subseteq \ov{\D_i} \subseteq \D_1$\\
       $\judg[\G, x{:}_{\D',\D_2}\sigma', z{:}_{\ov{\Delta_i}} \sigma,\ov{\xo},\ov{y_i{:}_{\D_i},\s_i}^n][\D',\D_2]{x}{\sigma'}$
       \` by Var$_\Delta$, with $x$ fresh\\
      $\judg[\G][\D',\D_2]{e_1}{\sigma'}$
       \` by assumption\\
       $\judg[\G,
       x{:}_{\D',\D_2}\sigma', z{:}_{\ov{\Delta_i}} \sigma,\ov{\xo},\ov{y_i{:}_{\D_i}\s_i}^n][\ov{\D_i},\D_2,\D_3]{E[x]}{\vp}$
       \` by context instantiation\\
       $\G,x{:}_{\D',\D_2}\sigma',\var[z][\ov{\D_i}];\ov{\D_i},\D_2,\D_3 \vdash_{alt} \rho_j
       \Rightarrow E[x] :^z_{\ov{\D_i}} \s \Mapsto \vp$ \` by
       $\textrm{AltN}_{\textrm{WHNF}}$\\
       $\G,x{:}_{\D',\D_2}\sigma'; \D_1,\D_2,\D_3 \vdash \ccase{e}{\{\overline{\rho_j\Rightarrow e_j}, \rho \Rightarrow
  E[x], \overline{\rho_k\Rightarrow e_k}\}} : \vp$ \` by $\textrm{Case}_\textrm{WHNF}$\\
       $\G ; \D_1 , \D_2 , \D_3 \vdash \llet{\x[\D',\D_2] = e_1}{\ccase{e}{\{\overline{\rho_j\Rightarrow e_j}, \rho \Rightarrow
           E[x], \overline{\rho_k\Rightarrow e_k}\}}} : \vp$ \` by $\mathit{Let}$\\
       Subsubsubcase: (3) derived by $\textrm{Alt}0$\\
       $\judg[\subst{\G, z{:}_{\ov{\Delta_i}} \sigma}{\cdot}{\ov{\D_i}}_z,\ov{\xo}][\subst{\ov{\D_i},\D_2,\D_3}{\cdot}{\ov{\D_i}}]{E[e_1]}{\vp}$
       \` by inversion\\
       $\judg[\G,z{:}_{\cdot}\s,\ov{\xo}][\D_2]{e_1}{\sigma'}$
       \` by inversion\\
       $\judg[\G][\D_2]{e_1}{\sigma'}$ \` by assumption\\
       $\judg[\G,x{:}_{\D_2}\sigma' , z{:}_{\cdot}\s,\ov{\xo}][\D_2]{x}{\sigma'}$ \` by Var$_\Delta$, with $x$ fresh\\
       $\judg[\subst{\G,x{:}_{\D_2}\sigma', z{:}_{\ov{\Delta_i}}
         \sigma }{\cdot}{\ov{\D_i}}_z,\ov{\xo}][\subst{\ov{\D_i},\D_2,\D_3}{\cdot}{\ov{\D_i}}]{E[x]}{\vp}$
       \` by context instantiation\\
             $\G,x{:}_{\D_2}\sigma'; \D_1,\D_2,\D_3 \vdash \ccase{e}{\{\overline{\rho_j\Rightarrow e_j}, \rho \Rightarrow
  E[x], \overline{\rho_k\Rightarrow e_k}\}} : \vp$ \` by $\textrm{Alt}0$ and
$\textrm{Case}_\textrm{WHNF}$\\
$\G ; \D_1 , \D_2 , \D_3 \vdash \llet{\x[\D_2] = e_1}{\ccase{e}{\{\overline{\rho_j\Rightarrow e_j}, \rho \Rightarrow
           E[x], \overline{\rho_k\Rightarrow e_k}\}}} : \vp$ \` by
       $\mathit{Let}$\\
       Subsubsubcase: (3) derived by $\textrm{Alt}\_$\\
       $\G , z{:}_{\ov{\Delta_i}} \sigma; \D_2,\D_3 \vdash E[e_1] : \vp$ \` by inversion\\
       $\judg[\G , z{:}_{\ov{\Delta_i}} \sigma][\D_2]{e_1}{\sigma'}$
       \` by inversion\\
       $\judg[\G, , z{:}_{\ov{\Delta_i}} \sigma ,x{:}_{\D_2}\sigma'][\D_2]{x}{\sigma'}$ \` by Var$_\Delta$, with $x$ fresh\\
       $\G, z{:}_{\ov{\Delta_i}} \sigma ,x{:}_{\D_2}\sigma'; \D_2,\D_3 \vdash E[e_1] : \vp$ \` by
       context instantiation\\
       $\G,x{:}_{\D_2}\sigma'; \D_1,\D_2,\D_3 \vdash \ccase{e}{\{\overline{\rho_j\Rightarrow e_j}, \rho \Rightarrow
  E[x], \overline{\rho_k\Rightarrow e_k}\}} : \vp$ \` by $\textrm{Alt}\_$ and
$\textrm{Case}_\textrm{WHNF}$\\
       $\G ; \D_1 , \D_2 , \D_3 \vdash \llet{\x[\D_2] = e_1}{\ccase{e}{\{\overline{\rho_j\Rightarrow e_j}, \rho \Rightarrow
           E[x], \overline{\rho_k\Rightarrow e_k}\}}} : \vp$ \` by
       $\mathit{Let}$\\
       Subsubcase: $\rho_j$ does not correspond to branch containing
       $E[e_1]$\\
       (6) $\judg[\G,z{:}_{\irr{\D_1}}\s][\irr{\D_1},\D_2,\D_3]{\rho\Rightarrow E[e_1]}{\s
         \Rrightarrow \vp}[alt][\irr{\D_1}][z]$ \` this subsubcase\\
       Subsubsubcase: (6) derived by $\textrm{AltN}_{\textrm{Not
           WHNF}}$\\
       $\judg[\G, z{:}_{\irr{\D_1}}\s,\ov{\xo},\ov{y_i{:}_{\D_i}\s_i}][\irr{\D_1},\D_2,\D_3]{E[e_1]}{\vp}$
       and $\ov{\D_i} = \ov{\lctag{\D_1}{K_i}}^n$ \` by inversion\\
       $\judg[\G,
       z{:}_{\irr{\D_1}}\s,\ov{\xo},\ov{y_i{:}_{\D_i}\s_i}][\D_2]{e_1}{\sigma'}$
       \` by inversion\\
       $\judg[\G,x{:}_{\D_2}\sigma',
       z{:}_{\irr{\D_1}}\s,\ov{\xo},\ov{y_i{:}_{\D_i}\s_i}][\D_2]{x}{\sigma'}$
       \` by Var$_\Delta$, with $x$ fresh\\
       $\judg[\G, x{:}_{\D_2}\sigma', z{:}_{\irr{\D_1}}\s,\ov{\xo},\ov{y_i{:}_{\D_i}\s_i}][\irr{\D_1},\D_2,\D_3]{E[x]}{\vp}$
       and $\ov{\D_i} = \ov{\lctag{\D_1}{K_i}}^n$ \` by context
       instantiation\\
         $\G,x{:}_{\D_2}\sigma'; \D_1,\D_2,\D_3 \vdash \ccase{e}{\{\overline{\rho_j\Rightarrow e_j}, \rho \Rightarrow
  E[x], \overline{\rho_k\Rightarrow e_k}\}} : \vp$ \` by $\textrm{AltN}_{\textrm{Not
           WHNF}}$ and
       $\textrm{Case}_\textrm{WHNF}$\\
         $\G ; \D_1 , \D_2 , \D_3 \vdash \llet{\x[\D_2] = e_1}{\ccase{e}{\{\overline{\rho_j\Rightarrow e_j}, \rho \Rightarrow
           E[x], \overline{\rho_k\Rightarrow e_k}\}}} : \vp$ \` by
       $\mathit{Let}$\\
       Subsubsubcase: (6) derived by $\textrm{Alt}\_$ or
       $\textrm{Alt}0$\\
       Identical to subsubsubcases above.\\
       Subcase: $e$ is not in WHNF\\
       (1) $\G ; \D_1 \vdash e : \s$ \` by inversion\\
       (2) $\ov{\judg[\G,z{:}_{\irr{\D_1}}\s][\irr{\D_1},\D_2,\D_3]{\rho\Rightarrow
           e'}{\s \Rrightarrow \vp}[alt][\irr{\D_1}][z]}$ \` by
       inversion\\
       (3) $\judg[\G,z{:}_{\irr{\D_1}}\s][\irr{\D_1},\D_2,\D_3]{\rho\Rightarrow
           E[e_1]}{\s \Rrightarrow \vp}[alt][\irr{\D_1}][z]$ \` by
         inversion\\
       Subsubcase: (3) derived by $\textrm{AltN}_{\textrm{Not
           WHNF}}$, $\textrm{Alt}\_$ or
       $\textrm{Alt}0$\\
       Identical to Subsubsubcases above.
     \end{tabbing}

   \end{proof}
 
\subsection{\texorpdfstring{$\eta$}{Eta}-conversions}

The $\eta$-conversion transformations are $\eta$-expansion and
$\eta$-reduction. In both transformations, linearity is preserved since the
resources used to type the function $f$ do not change neither when the lambda
and its argument are removed, nor when we add a lambda and apply $f$ to the
bound argument.

\EtaExpansionTheorem

\begin{proof}~
\begin{tabbing}
    Subcase $f$ is linear\\
    (1) $\G; \D \vdash f : \s \lolli \vp$\\
    (2) $\G; \xl \vdash x : \s$\\
    (3) $\G; \D, \xl \vdash f~x : \vp$\`by $\lambda E$\\
    (4) $\G; \D \vdash (\lambda \xl.~f~x) : \s \lolli \vp$\`by $\lambda I$\\
    Subcase $f$ is unrestricted\\
    As above but $x$ is introduced in $\G$ and functions are unrestricted
\end{tabbing}
\end{proof}

\EtaReductionTheorem

\begin{proof}~
\begin{tabbing}
    (1) $\G; \D \vdash (\lambda \x[\pi].~f~x) : \s \to_\pi \vp$\\
    Subcase $\pi = 1$\\
    (2) $\G; \D, \xl \vdash f~x : \vp$\`by inv. on $\lambda I$\\
    (3) $\G; \D \vdash f : \s \to_1 \vp$\`by inv. on $\lambda E$\\
    Subcase $\pi = \omega$\\
    As above but $x$ is introduced in $\G$
\end{tabbing}
\end{proof}

\subsection{Binder Swap}

The binder swap transformation applies to case expressions whose scrutinee is a
single variable $x$, and it substitutes occurrences of $x$ in the case
alternatives for the case binder $z$. If $x$ is a linear resource, $x$ cannot
occur in the case alternatives (as we conservatively consider variables are not
in WHNF), so the substitution preserves types vacuously. Otherwise, $x$ can be
freely substituted by $z$, since $z$ is also an unrestricted resource (it's
usage environment is empty because $x$ is unrestricted).

\BinderSwapTheorem

\begin{proof}~
\begin{tabbing}
    Subcase $x$ is linear\\
    (1) $\G; \D,\xl \vdash \ccase{x}{\z[\irr{x}]~\{\ov{\rho \Rightarrow e}\}} : \vp$\\
    (2) $\G; \xl \vdash x : \s$\`by inv. on $Case_\textrm{Not WHNF}$\\
    (3) $\ov{\G, \z[\irr{x}]; \D, \irr{\xl} \vdash_{alt} \rho \Rightarrow e :^z_{\irr{x}} \s \Rrightarrow \vp}$\`by inv. on $Case_\textrm{Not WHNF}$\\
    (4) $\ov{\G, \z[\irr{x}]; \D, \irr{\xl} \vdash_{alt} \rho \Rightarrow e[z/x] :^z_{\irr{x}} \s \Rrightarrow \vp}$\\\`by $x$ cannot occur in $e$ bc it's proof irrelevant\\
    (5) $\G;\D,\xl \vdash \ccase{x}{\z[\irr{x}]~\{\ov{\rho \Rightarrow e[z/x]}\}} : \vp$\`by $Case_\textrm{Not WHNF}$\\
    Subcase $x$ is unrestricted\\
    (1) $\G,\xo; \D \vdash \ccase{x}{\z[\cdot]~\{\ov{\rho \Rightarrow e}\}} : \vp$\\
    (2) $\G,\xo; \cdot \vdash x : \s$\`by inv. on $Case_\textrm{Not WHNF}$\\
    (3) $\ov{\G,\xo, \z[\cdot]; \D \vdash_{alt} \rho \Rightarrow e :^z_{\cdot} \s \Rrightarrow \vp}$\`by inv. on $Case_\textrm{Not WHNF}$\\
    (4) $\G,\z[\cdot]; \cdot \vdash z : \s$\`by $Var_\D$\\
    (5) $\ov{\G, \z[\cdot]; \D \vdash_{alt} \rho \Rightarrow e[z/x] :^z_{\cdot} \s \Rrightarrow \vp}$\`by unr. subst. lemma (3,4)\\
(6) $\G,\xo;\D \vdash \ccase{x}{\z[\cdot]~\{\ov{\rho \Rightarrow e[z/x]}\}} : \vp$\`by $Weaken_\omega$ and $Case_\textrm{Not WHNF}$\\
\end{tabbing}

\end{proof}

\end{document}